\newtheorem{Proposition}{Proposition}
\newtheorem{Corollary}{Corollary}
\newtheorem{Theorem}{Theorem}
\newtheorem{Lemma}{Lemma}
\renewcommand{\d}{\mathrm{d}}
\DeclareMathOperator{\R}{\mathbb{R}}
\newcommand{\e}{\mathrm{e}}
\renewcommand{\P}{\mathbb{P}}
\newcommand{\bx}{\mathbf{x}}
\newcommand{\by}{\mathbf{y}}
\newcommand{\var}{\hbox{var}}
\newcommand{\bbeta}{\boldsymbol\beta}
\newcommand{\zl}{z_{\text{l}}}
\newcommand{\zr}{z_{\text{r}}}
\newcommand{\lambdar}{\lambda_{\text{r}}}
\newcommand{\lambdal}{\lambda_{\text{l}}}
\begin{document}

\def\figureautorefname{Figure}
\def\tableautorefname{Table}
\def\sectionautorefname{Section}
\def\subsectionautorefname{Section}
\def\Propositionautorefname{Proposition}
\def\Theoremautorefname{Theorem}
\def\Lemmaautorefname{Lemma}
\def\Assumptionautorefname{Assumption}
\def\Corollaryautorefname{Corollary}
\renewcommand*\footnoterule{}

\title{Robust heavy-tailed versions of generalized linear models with applications in actuarial science}

\author{Philippe Gagnon$^{1}$ and Yuxi Wang$^1$}

\maketitle

\thispagestyle{empty}

\noindent $^{1}$Department of Mathematics and Statistics, Universit\'{e} de Montr\'{e}al, Canada.

\begin{abstract}
 Generalized linear models (GLMs) form one of the most popular classes of models in statistics. The gamma variant is used, for instance, in actuarial science for the modelling of claim amounts in insurance. A flaw of GLMs is that they are not robust against outliers (i.e., against erroneous or extreme data points). A difference in trends in the bulk of the data and the outliers thus yields skewed inference and predictions. To address this problem, robust methods have been introduced. The most commonly applied robust method is frequentist and consists in an estimator which is derived from a modification of the derivative of the log-likelihood. We propose an alternative approach which is modelling-based and thus fundamentally different. It allows for an understanding and interpretation of the modelling, and it can be applied for both frequentist and Bayesian statistical analyses. The approach possesses appealing theoretical and empirical properties.
\end{abstract}

\noindent Keywords: Bayesian statistics; gamma generalized linear model; inverse Gaussian generalized linear model; outliers; Pearson residuals; weak convergence.

\section{Introduction}

\subsection{Generalized linear models}

Generalized linear models (GLMs) form a class of regression models introduced by \cite{nelder1972generalized} which encompasses among the most widely used statistical models, with applications ranging from actuarial science \citep{goldburd2016generalized} to medicine \citep{lindsey1998choosing}. This class generalizes normal linear regression, that is linear regression with normally distributed errors, by assuming that the distribution of the dependent variable defines an exponential family with parameters that depend on explanatory variables, and an expectation that is linear in the explanatory variables, up to a transformation. As a result, GLMs can handle both discrete and continuous dependent variables, with distribution shapes that offer flexibility regarding in particular the skewness. GLMs cover models such as, as mentioned, the classical linear regression for normally distributed responses, logistic regression for binary ones, Poisson regression for count data, gamma regression for right-skewed positive data, plus many other statistical models obtained through its general model formulation.

The use of GLMs in actuarial science can be traced back to the early 1980s with examples of the fitting of GLMs to motor-insurance data \citep{mccullagh1983generalized}. In the insurance industry, the use of gamma GLM is popular for the modelling of claim severity due to the similar characteristics of the probability density function (PDF) with its empirical version. The latter indeed often exhibits skewness and is strictly increasing up to a positive real value, after which it strictly decreases. Gamma GLM is used to determine the factors that contribute the most to the claim size and how they influence the latter, and to predict claims based on a given set of explanatory variables, ultimately leading to the pricing of insurance products.

\subsection{Robustness problems}\label{sec:robustness_problems}

In this paper, we study the robustness problems of GLMs against outliers and propose a robust version. In \autoref{fig:estimates_yn}, we show how these problems translate in practice with gamma GLM by illustrating the evolution of parameter estimates as a data point moves away from the bulk of the data. We simulated a data set of size $n = 20$ with one explanatory variable whose data points $x_{i2}$ are a standardized version of $1, \ldots, n$ ($x_{i1} = 1$ for all $i$ to introduce an intercept in the model). First, each observation of the dependent variable $y_i$ was sampled from a gamma distribution with a mean parameter of $\mu_i = \exp(\bx_i^T \bbeta)$, where $\bbeta = (\beta_1 = 0, \beta_2 = 1)^T$, and a parameter $\nu = 40$ corresponding to the inverse of the dispersion parameter. Next, we gradually increased the value of $y_n$ from $6$ (a non-outlying value) to $15$ (a clearly outlying value). For each data set associated with a different value of $y_n$, we estimated the parameters $\nu$, $\beta_1$ and $\beta_2$ of gamma GLM based on maximum likelihood method. In \autoref{fig:estimates_yn}, we also show the estimates based on the proposed method, and those based on the method of \cite{cantoni2001robust}, the latter being the most commonly applied robust method. In the following, we will present the details of both robust methods. In \autoref{fig:estimates_yn}, we observe that outliers can arbitrarily affect the maximum likelihood estimation of gamma GLM. Note that $n = 20$ and one explanatory variable (thus three parameters) together represent a situation where the sample size is moderate comparatively to the number of parameters.

  \begin{figure}[ht]
  \centering\small
  $\begin{array}{ccc}
 \vspace{-2mm}\hspace{-2mm}\includegraphics[width=0.34\textwidth]{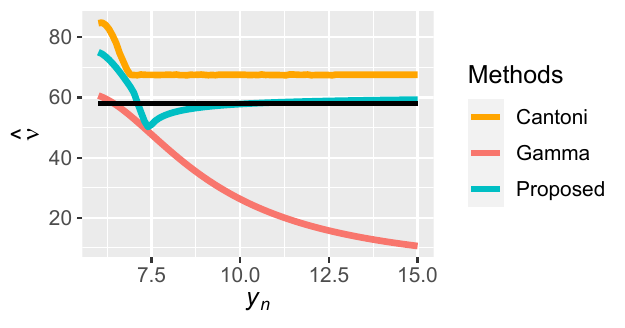} &  \hspace{-5mm} \includegraphics[width=0.34\textwidth]{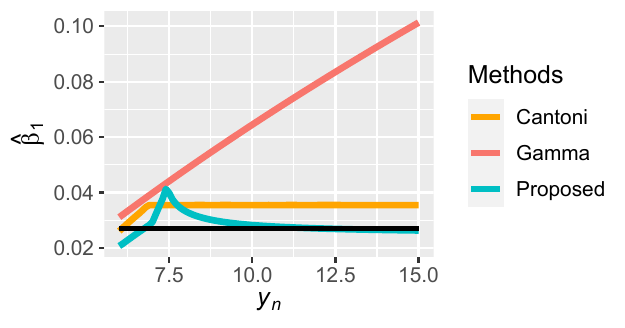} &  \hspace{-5mm} \includegraphics[width=0.34\textwidth]{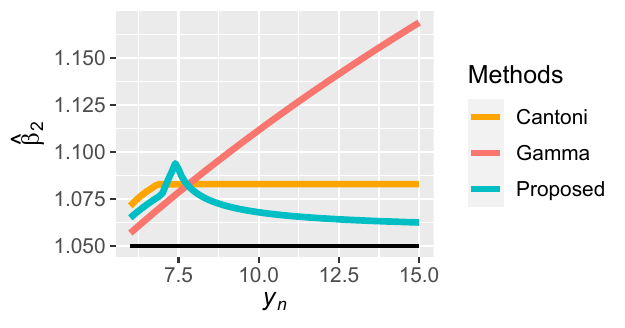} \cr
   \hspace{-5mm} \textbf{(a) $\hat{\nu}$ as a function of $y_n$} & \hspace{-7mm} \textbf{(b) $\hat{\beta}_1$ as a function of $y_n$} & \hspace{-5mm} \textbf{(c) $\hat{\beta}_2$ as a function of $y_n$}
  \end{array}$\vspace{-2mm}
  \caption{\small Estimates of $\nu$, $\beta_1$ and $\beta_2$ as a function of $y_n$ based on the method of \cite{cantoni2001robust}, gamma GLM and the proposed method (with $c = 1.6$); the black horizontal lines represent the maximum likelihood estimates of gamma GLM based on the data set excluding the outlier.}\label{fig:estimates_yn}
 \end{figure}
\normalsize

To have a concrete idea of how and why those robustness problems arise, let us discuss the example of the data bases of insurance companies and the potential impact of their use of gamma GLM. Insurance companies are not shielded from issues such as data quality, and thus the presence of erroneous data points in their data bases. Also, extreme data points are often present in their data bases. Both have a negative impact on the conclusions drawn and predictions made from statistical analyses. This is due to the non-robustness of the regression models typically employed, such as gamma GLM, against outliers, and therefore against data with gross errors and extreme data points. When using gamma GLM, the non-robustness is a consequence of the light tails of the PDF, combined with a difference in the trends in the bulk of the data and the outliers. When the likelihood function is evaluated at parameter values reflecting the trend in the bulk of the data, the light tails penalize heavily those values for the outliers, diminishing significantly the likelihood-function value. The analogous phenomenon arises when the likelihood function is evaluated at parameter values reflecting the trend in the outliers: those values are heavily penalized for the bulk of the data. All that makes values between those aforementioned more plausible, representing an undesirable compromise. The resulting maximum likelihood estimate (MLE) thus reflects neither the bulk of the data nor the outliers.

\subsection{Motivation for robust approaches, existing approaches and our proposal}\label{sec:motivation}

Another undesirable consequence of those robustness problems is that identifying outliers using standard measures based on GLM maximum likelihood estimation such as Pearson residuals (which will be defined in the following) may be ineffective. This is a result of the \textit{masking effect} \citep{hadi1993procedures}, as outliers may mask one another due to an adjustment of the model. Moreover, univariate analyses of extreme values may not allow to deal with the problem, because the notion of outlier here is with respect to the model employed, that is a combination of $\bx_i$ and $y_i$ which is unlikely under this model when using parameter values reflecting the trend in the bulk of the data. A data point can thus be an outlier with respect to the model without having any extreme values in the explanatory or dependent variables. There is an additional difficulty in situations where there are many explanatory variables and one wants to perform variable selection, because going through univariate and Pearson-residual analyses (assuming that these are effective) is simply infeasible.

All that motivates the use of robust GLMs which can automatically and effectively deal with outliers and thus offer a protection in case the data set to be analysed contains outliers. The non-robustness properties of GLMs have been studied by several authors which have proposed robust alternatives. On the frequentist side, with a focus on logistic regression, \cite{pregibon1982resistant} proposed a resistant fitting method, and \cite{stefanski1986optimally} and \cite{kunsch1989conditionally} studied optimally bounded score functions. \cite{cantoni2001robust} proposed robust estimators for all GLMs based on the notion of quasi-likelihood; their approach is to adapt the robust M-estimator for linear regression of \cite{huber1973robust}. More recently, \cite{beath2018mixture} proposed a robust alternative based on a mixture where the main component is a standard GLM, while the other, dedicated to outliers, is an over-dispersed GLM achieved by including a random-effect term in the linear predictor. This approach has the advantage of being generally applicable and to be modelling-based, implying that it can be used for both frequentist and Bayesian statistical analyses. Random-effect models are however known for being difficult to estimate; the estimation procedure is sometimes ineffective or quite slow. This is for instance the case in the simulation study conducted in this paper. The estimation based on a single data set is too long and makes the inclusion of the method in the simulation study infeasible.

On the Bayesian side, except the approach of \cite{beath2018mixture} that can be used for Bayesian analyses, we did not find any robust approach specifically for GLMs. A general approach is that of \cite{bissiri2016general} who introduced another statistical paradigm based on the premise that the model assumed is incorrect, but we consider it as another type of approaches to those considered here and will not focus on such an approach in the current document. Bayesian robust approaches typically consist in adapting the original model to the potential presence of outliers by replacing the distribution by one that is similar, but with heavier tails.

Frequentist and Bayesian typical robust methods are often seen as being fundamentally different. In former methods, the likelihood function is modified (yielding M-estimators) for the purpose of diminishing the impact of outliers, whereas in the latter, the original PDF is directly replaced by another density which, while being as similar as possible to the original one, has heavier tails. We highlight in \autoref{sec:connection_M_heavy} that a connection exists in some situations by viewing M-estimation as maximum likelihood estimation of a heavy-tailed model. Establishing such a connection provides another perspective on M-estimators. Firstly, it allows to associate a model to the estimator, which is important from a modelling point of view. Secondly, it allows the method to be used for Bayesian analyses. In \autoref{sec:connection_M_heavy}, we also highlight that, with the approach of \cite{cantoni2001robust}, it is not possible to establish a clear correspondence with a model because the function that they modify to gain in robustness is the derivative of the log-likelihood, instead of the log-likelihood. In \autoref{sec:proposed}, we present our approach in the form of a heavy-tailed distribution whose maximum likelihood estimation can be seen as M-estimation of the gamma GLM parameters. We focus on the case of gamma GLM throughout the document, but our approach is valid for any GLM based on a distribution with tails, whether it is continuous or discrete, such as inverse Gaussian and Poisson GLMs. The case of inverse Gaussian is treated in detail in \autoref{sec:robust_inv_GLM}. In \autoref{sec:proposed}, we also present theoretical results which allow to characterize the proposed model. We present sufficient conditions under which the posterior distribution for a Bayesian analysis is proper. We study the robustness properties by characterizing the behaviour of the likelihood function and posterior distribution as outliers move away from the bulk of the data. In \autoref{sec:numerical_experiments}, we evaluate the performance of the proposed approach through a simulation study and present a Bayesian case study based on a detailed analysis of a real data set. The paper finishes in \autoref{sec:discussion} with a discussion and retrospective comments. All proofs of theoretical results are deferred to \autoref{sec:proofs}. The code to reproduce all numerical results is available online (see ancillary files on \url{https://arxiv.org/abs/2305.13462}).

\section{Connection between robust estimators and heavy-tailed distributions}\label{sec:connection_M_heavy}

In \autoref{sec:connection_huber}, we take the Huber M-estimator in the context of linear regression \citep{huber1973robust} as an example to highlight that a clear connection between an M-estimator and a heavy-tailed distribution exists in some situations. In \autoref{sec:connection_cantoni}, we return to the context of gamma GLM, and explain that several PDFs yield the modified estimating equation proposed in \cite{cantoni2001robust}.

\subsection{One-to-one correspondence: Huber M-estimator in linear regression}\label{sec:connection_huber}

Consider that we have access to a data set of the form $\{\bx_i, y_i\}_{i = 1}^n$, where $\bx_1, \ldots, \bx_n \in \R^p$ are vectors of explanatory-variable data points and $y_1, \ldots, y_n \in \R$ are observations of the dependent variable. In linear regression, the random variables $Y_1, \ldots, Y_n$ are assumed to be independent (or conditionally independent under a Bayesian framework) and modelled as
\[
 Y_i = \bx_i^T \bbeta + \sigma \epsilon_i, \quad i = 1, \ldots, n,
\]
where $\bbeta = (\beta_1, \ldots, \beta_p)^T \in \R^p$ is the vector of regression coefficients, $\sigma > 0$ is a scale parameter, and $\epsilon_1, \ldots, \epsilon_n \in \R$ are standardized errors, which are assumed to be independent and identically distributed with $\epsilon_i \sim f$. In the normal linear regression model, $f = \mathcal{N}(0, 1)$. To find the MLE, we maximize the log-likelihood function, denoted by $\ell$, which is such that
\begin{align}\label{eq:logL}
 \ell(\bbeta,  \sigma) = -n \log(m) - n \log(\sigma) +\sum_{i = 1}^n \log\left\{g\left(\frac{y_i - \bx_i^T \bbeta}{\sigma}\right)\right\},
\end{align}
where we wrote $f(\epsilon) = g(\epsilon) / m$ with $g(\epsilon)= \exp(-\epsilon^2 / 2)$ and $m = \sqrt{2 \pi}$, the normalizing constant.

Maximizing this function is equivalent to minimizing
\[
 n \log(\sigma) +\frac{1}{2}\sum_{i = 1}^n \left(\frac{y_i - \bx_i^T \bbeta}{\sigma}\right)^2,
\]
if we omit the constant term. The quadratic term above produces extreme values when some residuals $y_i - \bx_i^T \bbeta$ are extreme, which is the case for outliers when the log-likelihood function is evaluated at parameter values reflecting the trend in the bulk of the data. The idea of \cite{huber1973robust} was to modify the quadratic term to deal with this problem. He proposed to instead minimize
\begin{align}\label{eq:Huber}
 n \log(\sigma) +\sum_{i = 1}^n \varrho\left(\frac{y_i - \bx_i^T \bbeta}{\sigma}\right),
\end{align}
with $\varrho$ being the Huber loss function \citep{huber1964robust}, defined as
\begin{align}
 \varrho(\epsilon) = \begin{cases}
    \epsilon^2 / 2 \quad \text{if} \quad |\epsilon| \leq k, \cr
    k|\epsilon| - k^2 / 2 \quad \text{otherwise},
 \end{cases} \label{eq:varrho}
\end{align}
where $k>0$ is a tuning parameter chosen by the user to reach a compromise between efficiency and robustness.\footnote{The value commonly used of $k = 1.345$ allows the estimator to produce 95-percent efficiency.} The penalization by the Huber loss function is quadratic, as before, between $-k$ and $k$; otherwise, the penalization is linear, which is more moderate. Note that the term $-k^2 / 2$ in \eqref{eq:varrho} is to make $\varrho$ continuous.

Replacing $f$ in the log-likelihood function yields an estimator called a \textit{maximum likelihood type estimator} (M-estimator). To establish a connection with a heavy-tailed distribution, we instead view it as the MLE of another model. In our example above, it is like viewing the minimization of \eqref{eq:Huber} as being equivalent as the maximization of \eqref{eq:logL}, but with
\[
 g(\epsilon) = \begin{cases}
  \exp(- \epsilon^2 / 2) \quad \text{if} \quad |\epsilon| \leq k, \cr
  \exp(-k|\epsilon| + k^2 / 2) \quad \text{otherwise}.
 \end{cases}
\]
The function to minimize with the Huber M-estimator can thus be viewed as being associated to a likelihood function, where the PDF $f$ involved in it has a central part which is proportional to a standard normal density and tails that behave like $\exp(-k|\epsilon|)$. The tails are thus similar to those of a Laplace density and are thus heavier than the tails of a normal density.

The Huber M-estimator is a proper example for which a clear correspondence with a heavy-tailed distribution exists. However, that is not the case for all M-estimators. For instance, it is not possible to establish a connection with a model for the Tukey's biweight M-estimator \citep{beaton1974fitting} as the loss function is constant beyond a threshold, thus yielding an improper distribution.

\subsection{The case of \cite{cantoni2001robust}}\label{sec:connection_cantoni}

In the GLM context, we will use, for this section and the rest of the paper, the same notation as in \autoref{sec:connection_huber} for the explanatory-variable data points (i.e., $\bx_1, \ldots, \bx_n$), the observations of the dependent variable (i.e., $y_1, \ldots, y_n$),  and the regression coefficients (i.e., $\bbeta = (\beta_1, \ldots, \beta_p)^T$). In the case of gamma GLM, there is an additional parameter, $\nu > 0$, that corresponds to the shape parameter, in addition to the inverse of the dispersion parameter. Note that $y_i > 0$ here as the dependent variable is positive.

The robust estimator of $\bbeta$ proposed by \cite{cantoni2001robust} corresponds to the solution of the following estimating equation:
\[
 \sum_{i = 1}^n \Psi(y_i, \bx_i, \bbeta, \nu) = \mathbf{0},
\]
where
\begin{align}
 \Psi(y_i, \bx_i, \bbeta, \nu) = \sqrt{\nu} \, \Psi_c(r_i(\bbeta, \nu)) \, \bx_i,   \label{eq:Psi}
\end{align}
with
\begin{align*}
 \Psi_c (r)= \begin{cases}
 r \quad \text{if} \quad |r| \leq c, \cr
 c \, \text{sgn}(r)  \quad \text{if} \quad |r| > c,
 \end{cases}
\end{align*}
and $r_i(\bbeta, \nu) = (y_i - \mu_i) / \sqrt{\var[Y_i]}$, $\text{sgn}(\, \cdot \,)$ being the sign function and $c>0$ a tuning parameter chosen by the user to reach a compromise between efficiency and robustness.\footnote{The value commonly used for $c$ is $c = 1.50$.} In \eqref{eq:Psi}, we set the weight function $w$ included in $\Psi$ in \cite{cantoni2001robust} and applied to each $\bx_i$ to 1 and omitted a Fisher consistency term to simplify. Note that $\mu_i$ corresponds to the mean parameter in gamma GLM and is thus such that $\mu_i = \exp(\bx_i^T \bbeta)$ when using the $\log$ link, which will be used throughout. Note also that $\var[Y_i]$ corresponds to the variance with gamma GLM and is thus such that $\var[Y_i] = \mu_i^2 / \nu$. Therefore, $r_i(\bbeta, \nu)$ becomes what is referred to as the \textit{Pearson residual} when evaluated at $\hat\bbeta$ and $\hat\nu$.

Placed in a context of M-estimator, where one wants to minimize $\sum_{i = 1}^n -\rho(y_i, \bx_i, \bbeta, \nu)$ with $-\rho$ being a loss function different from minus the log of the density, $\Psi$ can be viewed as the partial derivative of $\rho$ with respect to $\bbeta$. If we set $\rho$ as follows:
\begin{align}
 \rho(y_i, \bx_i, \bbeta, \nu) = \begin{cases}
   \ell_i(\bbeta, \nu) \quad \text{if} \quad |r_i(\bbeta, \nu)| \leq c, \cr
  -c\sqrt{\nu} \, (h(y_i) - \bx_i^T \bbeta) + h(y_i) + a_1(\nu) \quad \text{if} \quad r_i(\bbeta, \nu) > c, \cr
  c\sqrt{\nu} \, (h(y_i) - \bx_i^T \bbeta) + h(y_i) + a_2(\nu) \quad \text{if} \quad r_i(\bbeta, \nu) < -c, \cr
 \end{cases} \label{eq:rho}
\end{align}
with possibly an omitted additive term common to all cases independent of $\bbeta$, $\ell_i$ being the contribution of the $i$-th data point to the log-likelihood in gamma GLM, that is
\[
 \ell_i(\bbeta, \nu) = -\nu(y_i / \mu_i + \log \mu_i) + (\nu - 1)\log y_i + \nu \log \nu - \log \Gamma(\nu),
\]
it can be readily verified that the derivative of  $\rho(y_i, \bx_i, \bbeta, \nu)$  \eqref{eq:rho} with respect to $\bbeta$ is equal to  $\Psi(y_i, \bx_i, \bbeta, \nu)$ \eqref{eq:Psi}. The terms $a_1(\nu)$ and $a_2(\nu)$ can be used to make a corresponding PDF continuous.

There are thus many possible loss functions \eqref{eq:rho} and corresponding PDFs that can result in the estimating equation \eqref{eq:Psi}, because $h(y_i)$ which does not depend on $\bbeta$ has no influence on the derivative of $\rho$ with respect to $\bbeta$. In an attempt to establish a clear connection between the loss function in \eqref{eq:rho} and a specific heavy-tailed distribution, and to take the analysis one step further, we consider a natural choice for $h$, as we now explain.

Let us have a look at the behaviour of the right tail of the gamma PDF, as it is the one that creates the most serious robustness problems because of its exponential decay. When $y_i \rightarrow \infty$, with $\bbeta$ and $\nu$ fixed, the dominant term of $\ell_i(\bbeta, \nu)$ in gamma GLM is
\[
 -\nu(y_i / \mu_i) = -\nu \exp\left\{\log(y_i) - \bx_i^T\bbeta\right\}.
\]
To retrieve a similar form in the function \eqref{eq:rho} when $r_i(\bbeta, \nu) > c$ (which is the part of the function that is activated when $y_i$ is large, and $\bbeta$ and $\nu$ are fixed), $h$ should be set to the $\log$ function. With this function, the PDF $f_{\bbeta, \nu, c}$ of the dependent variable based on $\rho$ in \eqref{eq:rho} is such that (when the estimator is viewed as the MLE of a different model instead of an M-estimator)
\[
 f_{\bbeta, \nu, c}(y_i) = \exp(\rho(y_i, \bx_i, \bbeta, \nu)) = \frac{1}{\mu_i} f_{\nu, c}\left(\frac{y_i}{\mu_i}\right) \propto  \frac{1}{\mu_i} g_{\nu, c}\left(\frac{y_i}{\mu_i}\right),
\]
where $f_{\nu, c}$ is the PDF of $Y_i / \mu_i$ which does not depend on $\bbeta$ and $g_{\nu, c}$ is the unnormalized version of the latter defined as
\begin{align}\label{eqn:g_cantoni}
 g_{\nu, c}(z) := \begin{cases}
  \exp(-\nu z) \, z^{\nu - 1} \nu^\nu / \Gamma(\nu) \quad \text{if} \quad |\sqrt{\nu} \, (z - 1)| \leq c, \cr
  z^{-c\sqrt{\nu} - 1} \exp(-a_1(\nu)) \quad \text{if} \quad \sqrt{\nu} \, (z - 1) > c, \cr
  z^{c\sqrt{\nu} - 1} \exp(-a_2(\nu)) \quad \text{if} \quad \sqrt{\nu} \, (z - 1) < -c.
 \end{cases}
\end{align}

Let us now discuss the characteristics of the function $g_{\nu, c}$. The left part (the third case in \eqref{eqn:g_cantoni}) may not exist (in the sense of never be activated): given that $z > 0$, the left part exists when $-c / \sqrt{\nu} + 1 > 0$, which is equivalent to $c < \sqrt{\nu}$. This means that the left part may exist even when $\nu \leq 1$ which is counterproductive given that in this case the original gamma PDF does not converge to $0$ as $z \rightarrow 0$ (it converges to a constant when $\nu = 1$ and goes to infinity when $\nu < 1$); the gamma PDF has, in a sense, no left tail in this case. Note that the analogous fact is true regarding the estimator of \cite{cantoni2001robust} (recall \eqref{eq:Psi}), meaning that it may be the case that $\Psi(y_i, \bx_i, \bbeta, \nu) = \sqrt{\nu} \, c \, \text{sgn}(r_i(\bbeta, \nu)) \, \bx_i$ because $r_i(\bbeta, \nu) < -c$, even when $\nu \leq 1$.

In order to understand clearly the difference with gamma GLM in terms of tail behaviour, let us have a close look at the two tails of $g_{\nu, c}$ separately. We compare the latter with a gamma PDF with a mean parameter of $1$, corresponding to the central part of $g_{\nu, c}$. On the right side, when $z \rightarrow \infty$, with $\nu$ fixed, the dominant term of the gamma PDF is $\exp(-\nu z)$; the decrease is thus exponential and faster than the polynomial decrease of $g_{\nu, c}$. On the left side, when $z \rightarrow 0$, both PDFs have essentially the same behaviour. When $0 < \nu < 1$, the dominant term of the gamma PDF, which is $z^{\nu - 1}$, increases polynomially, and it is the same for $g_{\nu, c}$ (we can show that when $0 < \nu < 1$, $c\sqrt{\nu} - 1 < 0$ when the left part exists). When $\nu > 1$, the gamma PDF decays polynomially when $z \rightarrow 0$ and it is the same for $g_{\nu, c}$ (at least when $c \geq 1$).

There are three flaws with the model $f_{\bbeta, \nu, c}$ presented above and the associated M-estimator: i) its central part does not match that of a gamma PDF, but is proportional to it, which may negatively affect the efficiency in the absence of outliers, ii) the left part may exist when not useful, and iii) the tail decay is arguably not slow enough (we return to this point in \autoref{sec:proposed}), which may provide an explanation for the bounded, but not redescending, influence of outliers that was observed in \autoref{fig:estimates_yn}. In \autoref{sec:proposed}, we propose a robust alternative that is similar in essence to $f_{\bbeta, \nu, c}$ but does not have those three flaws.

\section{Robust heavy-tailed versions of GLMs}\label{sec:proposed}

In this section, we present our proposal to gain in robustness in statistical analyses based on GLMs. We start in \autoref{sec:model} with an alternative model definition and next discuss theoretical properties characterizing the approach in \autoref{sec:properties}.

\subsection{Model definition}\label{sec:model}

Our proposal is rooted in a line of research called \textit{resolution of conflict} that studies how conflicting sources of information are dealt with by Bayesian models. In this line of research, an outlier is seen as a source of information that is in conflict with others. The sources with which it is in conflict represent, among others, the non-outliers. Here, we consider that the prior distribution (in a Bayesian analysis) is not in conflict with the non-outliers to simplify. That line of research was started by \cite{de1961bayesian} with a first analysis in \cite{lindley1968choice}, followed by an introduction of a formal theory in \cite{dawid1973posterior}, \cite{hill1974coherence} and \cite{o1979outlier}. For a review of Bayesian heavy-tailed
models and conflict resolution, see \cite{o2012bayesian}. In the latter paper, it is noted that
there exists a gap between the models formally covered by the theory of conflict resolution and models
commonly used in practice. The present paper contributes to the expansion of the theory of conflict
resolution by covering models used in practice, namely GLMs.

The reason why that gap exists is because it is notoriously difficult to study models from a point of view of conflict resolution, even simple location--scale models; see, e.g., \cite{o1979outlier}, \cite{angers2007conflicting}, \cite{andrade2011bayesian}, \cite{desgagne2013full} and \cite{desgagne2015robustness}. The work of \cite{desgagne2015robustness} introduced an analysis technique and paved the way to the studying of more complex models, like linear regressions \citep{DesGag2019, gagnon2020, gagnon2020PCR, 10.1214/22-BA1330, gagnon2023theoretical, hamura2020log, hamura2023posterior}, and Poisson and negative binomial regressions \citep{hamura2021robust}. The work of \cite{desgagne2015robustness} also showed that polynomial tails are not heavy enough to yield a desirable property called \textit{whole robustness} (which is defined precisely in \autoref{sec:properties}), at least for the location--scale model; the same was shown to be true in linear regression in \cite{gagnon2023theoretical}. \cite{desgagne2015robustness} proved that, for a location--scale model, it is sufficient to assume that the PDF has tails which are log-regularly varying, a concept introduced in that paper. The author proposed a PDF which satisfies this condition; it is called the \textit{log-Pareto-tailed normal} (LPTN) distribution as the central part of this continuous PDF coincides with that of the standard normal and the tails are log-Pareto, meaning that they behave like $(1 / |z|)(1 / \log|z|)^\lambda$ with $\lambda >1$. This approach was subsequently adapted to the context of linear regression by \cite{gagnon2020}, where the error distribution is assumed to be LPTN instead of normal, and whole robustness was shown to hold.

With this work, we take one step further by adapting the approach to GLMs: the distribution of the dependent variable is a modified version where the central part is kept as is, while the extremities are replaced by log-Pareto tails. Focusing on gamma GLM, we assume that $Y_i \sim f_{\bbeta, \nu, c}$ with $Y_i / \mu_i \sim f_{\nu, c}$ (we use the same notation as in \autoref{sec:connection_cantoni} to simplify), where the proposed PDF  $f_{\nu, c}$ is defined as
\begin{align}\label{eq:proposed}
 f_{\nu, c}(z) := \begin{cases}
    f_{\text{mid}}(z) := \exp(-\nu z) z^{\nu - 1} \nu^\nu / \Gamma(\nu) \quad \text{if} \quad \zl \leq z \leq \zr, \cr
    f_{\text{right}}(z) := f_{\text{mid}}(\zr) \frac{\zr}{z} \left(\frac{\log \zr}{\log z}\right)^{\lambdar} \quad \text{if} \quad z > \zr, \cr
    f_{\text{left}}(z) := f_{\text{mid}}(\zl) \frac{\zl}{z} \left(\frac{\log \zl}{\log z}\right)^{\lambdal} \quad \text{if} \quad 0 < z < \zl,
 \end{cases}
\end{align}
where $\zr, \lambdar, \zl$ and $\lambdal$ are functions of $\nu > 0$ and $c > 0$ given by
\begin{align*}
 & \zr := 1 + c / \sqrt{\nu}, \quad \zl := \begin{cases}
                                                                0 \quad \text{if} \quad \nu \leq 1, \cr
                                                                \max\{0, 1 - c / \sqrt{\nu}\} \quad \text{if} \quad \nu > 1,
                                                            \end{cases} \cr
 & \lambdar := 1 + \frac{f_{\text{mid}}(\zr) \log(\zr) \, \zr}{\P(Z_\nu > \zr)}, \quad \text{and} \quad \lambdal := 1 - \frac{f_{\text{mid}}(\zl) \log(\zl) \, \zl}{\P(Z_\nu < \zl)} = 1 + \frac{f_{\text{mid}}(\zl) \log(1 / \zl) \, \zl}{\P(Z_\nu < \zl)},
\end{align*}
with $Z_\nu$ being a random variable following a gamma distribution whose mean and shape parameters are 1 and $\nu$, respectively.

We now make a few remarks about the model. First, $\zr > 1$ and thus the log terms in $f_{\text{right}}$ are positive. Also, $f_{\text{left}}$ is activated for some value of $z$ when $\zl > 0$, that is when $c < \sqrt{\nu}$ and $\nu > 1$, and $\zl$ is upper bounded by 1. This implies that both log terms in $f_{\text{left}}$ are negative and thus that $f_{\text{left}}(z) > 0$ when $0 < z < \zl$. The constraint that $\zl = 0$ if $\nu \leq 1$ is to ensure that $f_{\text{left}}$ is never activated when the original gamma PDF does not have a left tail.

The terms $\zl$ and $\zr$, depending on $\nu$ and $c$, control which part of the function is activated. The terms  $f_{\text{mid}}(\zr)$, $\zr$ and $\log \zr$ in $f_{\text{right}}$, as well as $f_{\text{mid}}(\zl)$, $\zl$ and $\log \zl$ in $f_{\text{left}}$ ensure that the PDF is continuous. The function $f_{\nu, c}$ is integrable for all $c, \nu > 0$. It goes to $+\infty$ as $z \rightarrow 0$, when $f_{\text{left}}$ exists. This behaviour close to 0 allows to have integrals that are similar to those on the right tails, and that are to be contrasted with those under the original gamma PDF given that the latter function goes to 0 as $z \rightarrow 0$ (when it has a left tail). Indeed, an integral from 0 to a small value $a$ can be rewritten as
\[
 \int_0^a f_{\text{mid}}(\zl) \frac{\zl}{z} \left(\frac{\log \zl}{\log z}\right)^{\lambdal} \d z = \int_{1 / a}^\infty  f_{\text{mid}}(\zl) \frac{\zl}{u} \left(\frac{\log(1 / \zl)}{\log(u)}\right)^{\lambdal} \d u.
\]
After the change of variables $u = 1 / z$, the mass associated to the left tail can be viewed as an integral from $1 / a$ to $\infty$, with respect to a function which is similar to $f_{\text{right}}$, but with a different normalizing constant and a different power term. In other words, the behaviour of $f_{\text{left}}$ is analogous to that of $f_{\text{right}}$.

Comparisons between gamma PDFs (with mean and shape parameters of 1 and $\nu$, respectively) and $f_{\nu, c}$ with $c = 1.6$ are shown for different values of $\nu$ in \autoref{fig:gamma_prop}. We observe that both PDFs are globally quite similar, but beyond the threshold at which they start to be defined differently, $f_{\nu, c}$ first decreases slightly faster for a short interval (a consequence of the continuity of the function with a constraint of integrating to 1), after which $f_{\nu, c}$ goes above the gamma PDF. The length of that interval shortens as $\nu$ increases. Note that in \autoref{fig:gamma_prop} (b)-(c), we do not see that $f_{\nu, c}(z) \rightarrow \infty$ as $z \rightarrow 0$ because this explosive behaviour happens too close to 0 to be observed.

  \begin{figure}[ht]
  \centering\small
  $\begin{array}{ccc}
 \vspace{-2mm}\hspace{-2mm}\includegraphics[width=0.34\textwidth]{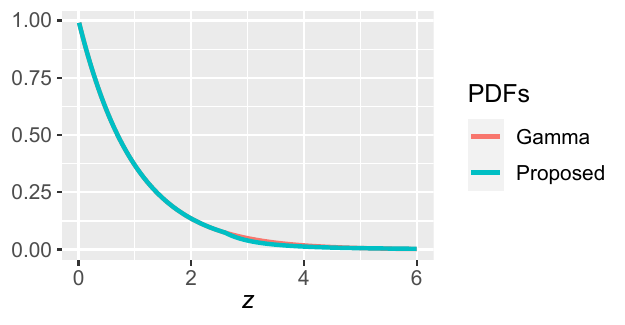} &  \hspace{-5mm} \includegraphics[width=0.34\textwidth]{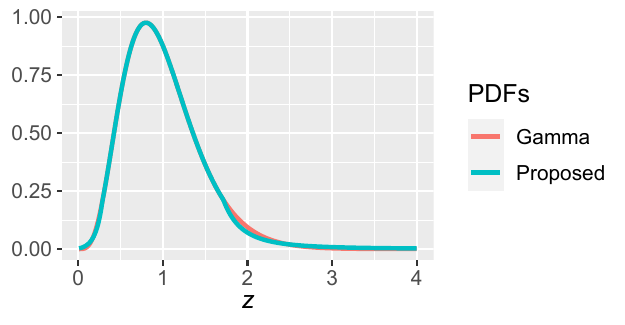} &  \hspace{-5mm} \includegraphics[width=0.34\textwidth]{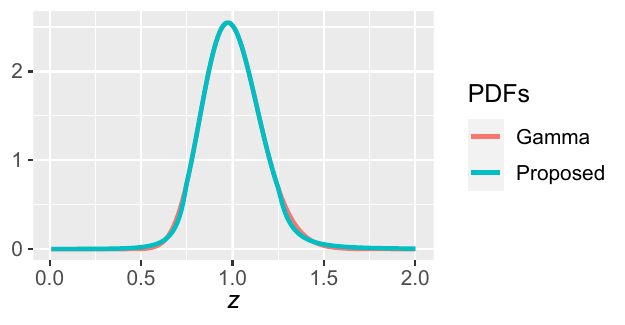} \cr
   \hspace{-5mm} \textbf{(a) $\nu = 1$} & \hspace{-7mm} \textbf{(b) $\nu = 5$} & \hspace{-5mm} \textbf{(c) $\nu = 40$} \cr
   \vspace{-2mm}\hspace{-2mm}\includegraphics[width=0.34\textwidth]{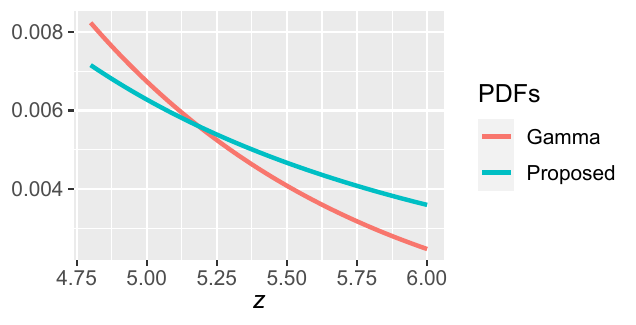} &  \hspace{-5mm} \includegraphics[width=0.34\textwidth]{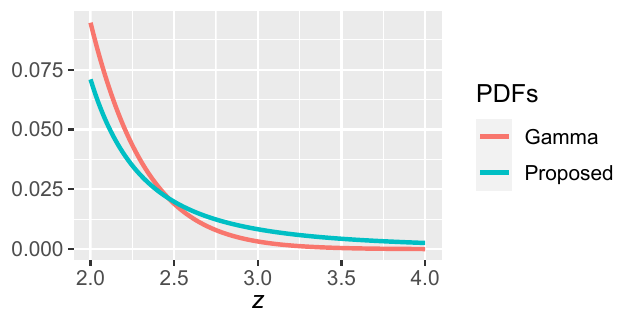} &  \hspace{-5mm} \includegraphics[width=0.34\textwidth]{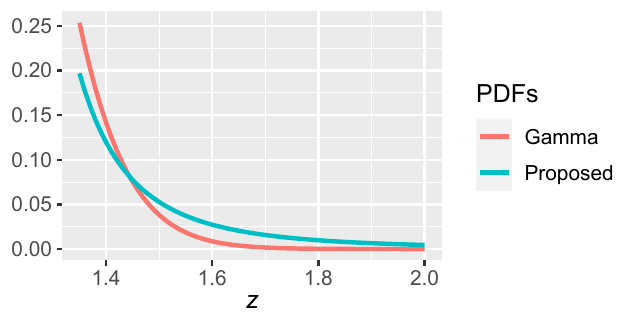} \cr
   \hspace{-5mm} \textbf{(d) $\nu = 1$ (zoom in right tail)} & \hspace{-7mm} \textbf{(e) $\nu = 5$ (zoom in right tail)} & \hspace{-5mm} \textbf{(f) $\nu = 40$ (zoom in right tail)} \cr
  \end{array}$\vspace{-2mm}
  \caption{\small Comparisons between gamma PDFs and $f_{\nu, c}$ with $c = 1.6$, for different values of $\nu$.}\label{fig:gamma_prop}
 \end{figure}
\normalsize

The exponents $\lambdar$ and $ \lambdal$ in $f_{\nu, c}$ play an important role: they make the function $f_{\nu, c}$ a PDF. In particular, $\int_{\zr}^\infty f_{\nu, c}(z) \, \d z = \P(Z_\nu > \zr)$, and, when $\zl > 0$, $\int_{0}^{\zl} f_{\nu, c}(z) \, \d z = \P(Z_\nu < \zl)$. In \autoref{fig:lambdas}, we highlight, with a log-scale on the $y$-axis, that $\lambdar$ and $ \lambdal$ are well defined and can be computed for any $\nu$ and $c$ (provided that $f_{\text{left}}$ exists in the case of $ \lambdal$). \autoref{fig:lambdas} also allows to show that $\lambdar$ and $ \lambdal$ have an interesting asymptotic behaviour as $\nu \rightarrow \infty$, as indicated by \autoref{prop:asymptotic_lambdas}. In the latter, we use $\Phi$ to denote the cumulative distribution function of a standard normal distribution.

\begin{Proposition}\label{prop:asymptotic_lambdas}
 Viewed as functions of $\nu$, both $\lambdal$ and $\lambdar$ converge, as $\nu \rightarrow \infty$ for any fixed $c$, towards
\[
 1 + \frac{c \, \e^{-c^2 / 2}}{\sqrt{2 \pi} \, (1 - \Phi(c))}.
\]
\end{Proposition}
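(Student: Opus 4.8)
The plan is to recognize that $f_{\text{mid}}$ is exactly the density of the random variable $Z_\nu$, a gamma variate with mean $1$ and shape $\nu$, hence with variance $1/\nu$. The natural standardization is therefore $W_\nu := \sqrt{\nu}\,(Z_\nu - 1)$, which has mean $0$ and variance $1$. The thresholds are tailored to this scaling: since $\zr = 1 + c/\sqrt{\nu}$ and $\zl = 1 - c/\sqrt{\nu}$ (for $\nu > 1$ and $\nu$ large enough that $c < \sqrt{\nu}$, so that $f_{\text{left}}$ exists), we have $\sqrt{\nu}\,(\zr - 1) = c$ and $\sqrt{\nu}\,(\zl - 1) = -c$. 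Consequently $\P(Z_\nu > \zr) = \P(W_\nu > c)$ and $\P(Z_\nu < \zl) = \P(W_\nu < -c)$.

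First I would handle the denominators. By the asymptotic normality of the gamma distribution as $\nu \rightarrow \infty$ (for integer $\nu$ this is the central limit theorem applied to $Z_\nu$ written as a sum of $\nu$ independent exponential variables, and the general real case follows by a standard characteristic-function argument), $W_\nu$ converges in distribution to a standard normal. Hence $\P(Z_\nu > \zr) \rightarrow 1 - \Phi(c)$ and, by symmetry of the normal limit, $\P(Z_\nu < \zl) \rightarrow \Phi(-c) = 1 - \Phi(c)$. Both denominators thus share the common limit $1 - \Phi(c)$.

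Next I would establish the density asymptotics, which is the crux. Rather than invoking a general local limit theorem, I would expand $\log f_{\text{mid}}(\zr)$ directly using Stirling's formula $\log \Gamma(\nu) = (\nu - 1/2)\log\nu - \nu + \tfrac{1}{2}\log(2\pi) + O(1/\nu)$. Writing $\log f_{\text{mid}}(\zr) = -\nu \zr + (\nu-1)\log\zr + \nu\log\nu - \log\Gamma(\nu)$ and inserting the Taylor expansion $\log\zr = c/\sqrt{\nu} - c^2/(2\nu) + O(\nu^{-3/2})$, the two leading $\pm c\sqrt{\nu}$ terms (one from $-\nu(\zr-1)$, the other from the $\nu \cdot c/\sqrt{\nu}$ part of $(\nu-1)\log\zr$) cancel, leaving $\log f_{\text{mid}}(\zr) = -c^2/2 + \tfrac{1}{2}\log\nu - \tfrac{1}{2}\log(2\pi) + o(1)$. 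Exponentiating gives $f_{\text{mid}}(\zr) \sim \sqrt{\nu}\,\phi(c)$, where I write $\phi(c) := \e^{-c^2/2}/\sqrt{2\pi}$; the identical computation with $c$ replaced by $-c$, together with the evenness of $\phi$, yields $f_{\text{mid}}(\zl) \sim \sqrt{\nu}\,\phi(c)$.

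Finally I would assemble the numerators. Since $\zr \rightarrow 1$ and $\log\zr \sim c/\sqrt{\nu}$, the numerator of $\lambdar - 1$ satisfies $f_{\text{mid}}(\zr)\log(\zr)\,\zr \sim \sqrt{\nu}\,\phi(c) \cdot (c/\sqrt{\nu}) \cdot 1 = c\,\phi(c)$; dividing by the denominator limit gives $\lambdar \rightarrow 1 + c\,\phi(c)/(1 - \Phi(c))$, which is the claimed value. For the left tail, $\log\zl \sim -c/\sqrt{\nu}$ carries a sign change, so $f_{\text{mid}}(\zl)\log(\zl)\,\zl \sim -c\,\phi(c)$; because $\lambdal$ \emph{subtracts} this quantity and its denominator again tends to $1 - \Phi(c)$, the two minus signs combine to give the same limit $1 + c\,\phi(c)/(1 - \Phi(c))$. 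The main obstacle is the Stirling expansion of the third step: one must track the $O(\sqrt{\nu})$ contributions precisely enough to witness their exact cancellation, since any error at that order would destroy the finite limit and is where a naive computation would fail.
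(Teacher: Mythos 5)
Your proof is correct and takes essentially the same route as the paper's: Stirling's formula combined with the Taylor expansion of $\log(1 \pm c/\sqrt{\nu})$ to show the numerator $f_{\text{mid}}(\zr)\log(\zr)\,\zr$ (and its left-tail counterpart, with the sign absorbed by the minus in the definition of $\lambdal$) tends to $c\,\e^{-c^2/2}/\sqrt{2\pi}$, and asymptotic normality of the standardized gamma variable to show the denominators tend to $1-\Phi(c)$. The only minor differences are that for non-integer $\nu$ you invoke a characteristic-function argument, whereas the paper decomposes the gamma variable into $\lfloor \nu \rfloor$ i.i.d.\ exponentials plus a gamma remainder and applies the central limit theorem with Slutsky's theorem, and that you write out the sign bookkeeping for $\lambdal$ explicitly where the paper declares it analogous.
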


  \begin{figure}[ht]
  \centering
\includegraphics[width=0.45\textwidth]{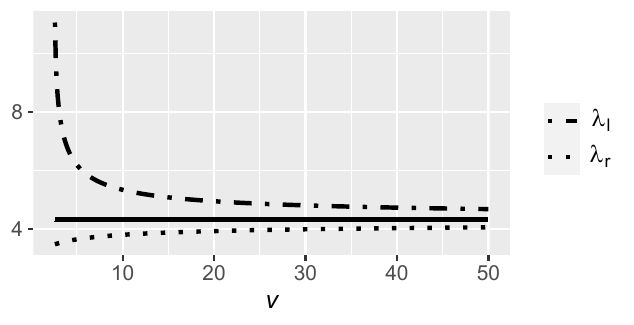}
\vspace{-2mm}
  \caption{\small $\lambdar$ and $ \lambdal$ as a function of $\nu$ when $c = 1.6$; the scale on the $y$-axis is
logarithmic; the black horizontal line represents the asymptotic value as $\nu \rightarrow \infty$.}\label{fig:lambdas}
 \end{figure}
\normalsize

In terms of estimation, the proposed model can be estimated by the maximum likelihood method (the results shown in \autoref{fig:estimates_yn} have been produced using this method). The MLE can be viewed as a robust M-estimator of gamma GLM, with an estimating equation having the same form as that proposed by \cite{cantoni2001robust} (we return to this point in \autoref{sec:properties}). Estimating the proposed model for a frequentist statistical analysis has thus the same conceptual complexity; the computational complexity is similar. From that perspective of M-estimation of gamma GLM, the estimating equation corresponding to the derivative of the log-likelihood of the proposed model can be modified to include a correction term to ensure Fisher consistency and a weight function can be applied to the vectors $\bx_i$ to decrease the influence of high-leverage points, in the same spirit as the method of \cite{cantoni2001robust}. As mentioned, one of the advantages of our approach is that it can also be applied to perform robust Bayesian analyses. Markov chain Monte Carlo methods can be employed to obtain posterior means, medians, credible intervals, and so on. We discuss Bayesian estimation in detail in \autoref{sec:case_study}.

We finish this section with two remarks about the tuning parameter $c$. Firstly, it plays the same role as the parameter with the same notation $c$ in the method of \cite{cantoni2001robust} presented in \autoref{sec:connection_cantoni}: it allows the user to reach a compromise between efficiency and robustness, and the conditions in \eqref{eq:proposed} to determine which part of the function is activated can be rewritten like those in \eqref{eqn:g_cantoni}. Secondly, there is a correspondence between the value of $c$ and the mass under $f_{\nu, c}$ assigned to the part where the density exactly matches the gamma PDF. For example, when $c = 1.6$ and $\nu = 36.3$ (the value used for $c$ and estimated for $\nu$ in the real-data example in \autoref{sec:case_study}), the mass of the central part is $\P(\zl \leq Z_\nu \leq \zr) \approx 0.89 $. This correspondence can be exploited to guide the choice of $c$, if one has prior belief about $\nu$. If, for instance, one believes that $\nu$ should take values around 40, and one wants 90\% of the mass to be assigned to the central part, one should set $c$ to $1.65$. In order to recommend an objective and effective choice of value for $c$ in case one does not have prior belief about $\nu$ or wants to use an automated approach for selecting a value for the tuning parameter $c$, we evaluate the estimation performance for several values of $c$ in our simulation study in \autoref{sec:simulation}. We identify that $c = 1.6$ offers a good balance between efficiency and robustness, at least in the scenarios evaluated. The choice of value for this parameter can also be completely data driven; it can be included as a parameter like $\bbeta$ and $\nu$ and estimated using the maximum likelihood method in frequentist analyses. A fully Bayesian approach can also be applied where $c$ would be considered as unknown and a random variable like the other parameters. In our numerical experiments, we consider it as fixed and as a tuning parameter to simplify.

\subsection{Theoretical properties}\label{sec:properties}

The theoretical results presented in this section assume that all explanatory variables are continuous to simplify. The first result that we present is crucial for Bayesian analyses. In our Bayesian framework, we consider that the explanatory-variable data points $\bx_1, \ldots, \bx_n$ are fixed and known, that is not realizations of random variables, contrarily to $y_1, \ldots, y_n$. The posterior distribution is thus conditional on the latter only. To use the proposed model for a Bayesian analysis, we need to select a prior distribution for $\bbeta$ and $\nu$, denoted by $\pi(\, \cdot \,, \cdot \,)$. Importantly, we have to make sure that the resulting posterior distribution is proper given that any Bayesian analysis assumes this. We will present a proposition providing sufficient conditions. Beforehand, we introduce notation. Let $\pi(\, \cdot \,, \cdot \mid \by)$ be the posterior distribution, where $\by := (y_1, \ldots, y_n)^T$. It is such that
\[
 \pi(\bbeta, \nu \mid \by) = \pi(\bbeta, \nu) \left[\prod_{i = 1}^n \frac{1}{\mu_i} f_{\nu, c}\left(\frac{y_i}{\mu_i}\right) \right] \Bigg/ m(\by), \quad \bbeta \in \R^p, \nu > 0,
\]
where
\[
 m(\by) := \int_{\R^p}\int_0^\infty \pi(\bbeta, \nu) \left[\prod_{i = 1}^n \frac{1}{\mu_i} f_{\nu, c}\left(\frac{y_i}{\mu_i}\right) \right] \d\nu \, \d\bbeta,
\]
if $m(\by) < \infty$, a situation where the posterior distribution is proper and thus well defined. We use $\pi(\, \cdot \mid \nu)$ to denote the conditional prior density of $\bbeta$ given $\nu$ and $\pi(\, \cdot \,)$ to denote the marginal prior density of $\nu$.

\begin{Proposition}\label{prop:proper}
 Assume that $\pi(\bbeta \mid \nu) \leq B$, for any $\bbeta$ and $\nu$, $B$ being a positive constant. Assume that $n \geq p \geq 1$. If $\pi(\, \cdot \,)$ is a proper PDF such that $\int_0^\infty \nu^{(n - p)/ 2} \, \pi(\nu) \, \d\nu < \infty$, then the posterior distribution is proper.
\end{Proposition}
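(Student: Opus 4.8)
The plan is to show directly that the normalizing constant $m(\by)$ is finite. First I would use the uniform bound on the conditional prior to write $\pi(\bbeta, \nu) = \pi(\bbeta \mid \nu)\,\pi(\nu) \leq B\,\pi(\nu)$, so that
\[
 m(\by) \leq B \int_0^\infty \pi(\nu)\, I(\nu)\, \d\nu, \qquad I(\nu) := \int_{\R^p} \prod_{i=1}^n \frac{1}{\mu_i} f_{\nu, c}\!\left(\frac{y_i}{\mu_i}\right) \d\bbeta.
\]
The whole problem then reduces to bounding $I(\nu)$ by a constant times $\nu^{(n-p)/2}$, after which the assumed moment condition $\int_0^\infty \nu^{(n-p)/2}\,\pi(\nu)\,\d\nu < \infty$ closes the argument.

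The engine of the bound on $I(\nu)$ is a change of variables in each linear predictor $\eta_i := \bx_i^T\bbeta = \log\mu_i$. Setting $z = y_i\,\e^{-\eta_i}$, each factor becomes $\frac{1}{\mu_i}f_{\nu,c}(y_i/\mu_i) = \frac{z}{y_i}f_{\nu,c}(z)$, and since $\eta_i \mapsto z$ is a bijection from $\R$ onto $(0,\infty)$ with $\d\eta_i = -\d z/z$, I obtain the clean one-dimensional identity
\[
 \int_{\R} \frac{1}{\mu_i}f_{\nu,c}\!\left(\frac{y_i}{\mu_i}\right)\d\eta_i = \frac{1}{y_i}\int_0^\infty f_{\nu,c}(z)\,\d z = \frac{1}{y_i},
\]
using that $f_{\nu,c}$ integrates to one. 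The same substitution shows that the supremum of a single factor over $\eta_i \in \R$ equals $M(\nu)/y_i$, where $M(\nu) := \sup_{z > 0} z\,f_{\nu,c}(z)$.

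Next I would split the $n$ observations. Since the explanatory variables are continuous and $n \geq p$, the design matrix has full column rank, so after relabelling I may assume $\bx_1,\ldots,\bx_p$ are linearly independent. For the remaining $n - p$ factors I bound $\frac{1}{\mu_i}f_{\nu,c}(y_i/\mu_i) \leq M(\nu)/y_i$ uniformly in $\bbeta$, while for the retained $p$ factors I apply the linear change of variables $\boldsymbol\eta = (\eta_1,\ldots,\eta_p)^T = X_{(p)}\bbeta$, with $X_{(p)}$ the invertible $p\times p$ matrix with rows $\bx_1^T,\ldots,\bx_p^T$. The integral then factorizes across coordinates and, by the one-dimensional identity, each coordinate integral equals $1/y_i$. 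Collecting the pieces,
\[
 I(\nu) \leq M(\nu)^{\,n-p}\,\lvert\det X_{(p)}\rvert^{-1}\prod_{i=1}^n \frac{1}{y_i},
\]
a finite constant times $M(\nu)^{\,n-p}$.

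It remains to control $M(\nu)$, which is where I expect the main work to lie. I would first argue that the supremum of $z f_{\nu,c}(z)$ is attained in the central part: on the right tail $z f_{\text{right}}(z) = f_{\text{mid}}(\zr)\,\zr\,(\log\zr/\log z)^{\lambdar}$ is decreasing in $z$, and $z f_{\text{left}}(z)$ is likewise dominated by its boundary value, so $M(\nu)$ cannot exceed the maximum of $z f_{\text{mid}}(z) = \e^{-\nu z} z^{\nu}\nu^\nu/\Gamma(\nu)$, which is attained at $z = 1 \in [\zl, \zr]$ and equals $\e^{-\nu}\nu^\nu/\Gamma(\nu)$. A Stirling-type estimate then yields $M(\nu) \leq C\,\nu^{1/2}$ for a constant $C$ uniform over $\nu > 0$ (checking separately the harmless limits $\nu \to 0$ and $\nu \to \infty$). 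Substituting $M(\nu)^{\,n-p} \leq C^{\,n-p}\nu^{(n-p)/2}$ gives $I(\nu) \leq K\,\nu^{(n-p)/2}$ with $K$ finite, hence $m(\by) \leq BK\int_0^\infty \nu^{(n-p)/2}\,\pi(\nu)\,\d\nu < \infty$. The delicate points are thus the uniform Stirling bound on $M(\nu)$ and the verification that the global maximum of $z f_{\nu,c}(z)$ sits in the central, gamma-matching region rather than in the log-Pareto tails.
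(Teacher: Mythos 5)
Your proof is correct and follows essentially the same route as the paper: you bound the prior by $B\,\pi(\nu)$, use $p$ linearly independent design vectors for a change of variables that integrates those likelihood factors exactly (yielding $\lvert\det X_{(p)}\rvert^{-1}\prod_{i}y_i^{-1}$), and bound each of the remaining $n-p$ factors by $\sup_{z>0} z f_{\nu,c}(z)/y_i = (\e^{-1}\nu)^\nu/(y_i\,\Gamma(\nu))$, which—including your monotonicity check that the supremum sits in the central gamma part at $z=1$—is precisely the content of the paper's Lemma~1. Your uniform Stirling bound $M(\nu)\leq C\,\nu^{1/2}$ is the content of the paper's Lemma~2 (stated there by splitting the $\nu$-integral at a large threshold $\nu^*$ rather than via a single uniform constant), so the assumed moment condition closes the argument in the same way.
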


The assumptions on the prior are weak, which explains why we require $n \geq p$, a condition similar to that for frequentist inference. The condition on $\pi(\, \cdot \mid \nu)$ is satisfied by any continuous PDF and by Jeffreys prior $\pi(\, \cdot \mid \nu) \propto 1$. The condition on $\pi(\, \cdot \,)$ is satisfied if, for instance, the prior on $\nu$ is a gamma distribution with any shape and scale parameters.

We now turn to the characterization of the robustness of the proposed approach against outliers. We characterize the robustness in an asymptotic regime where outliers are considered to be further and further away from the bulk of the data. As mentioned in \autoref{sec:motivation}, an outlier is defined as a couple $(\bx_i, y_i)$ whose components are incompatible with the trend in the bulk of the data. We can use $r_i(\bbeta, \nu) = \sqrt{\nu}(y_i / \mu_i - 1)$ to evaluate this incompatibility. It can be extreme because, for a given $\bx_i$ (yielding $\mu_i = \exp(\bx_i^T \bbeta)$), the value of $y_i$ makes it extreme or because, for a given $y_i$, the value of $\bx_i$ makes it extreme. We mathematically represent such extreme situations by considering an asymptotic scenario where the outliers move away from the bulk of the data along particular paths (see \autoref{fig:paths}). More precisely, we consider that the outliers $(\bx_i, y_i)$ are such that $y_i \rightarrow \infty$ or $y_i \rightarrow 0$ with $\bx_i$ being kept fixed (but perhaps extreme). Our results are asymptotic, meaning here that, for the outlying data points with fixed $\bx_i$ (but perhaps extreme), there exist $y_i$ values such that the results hold approximately.

  \begin{figure}[ht]
  \centering
\includegraphics[width=0.45\textwidth]{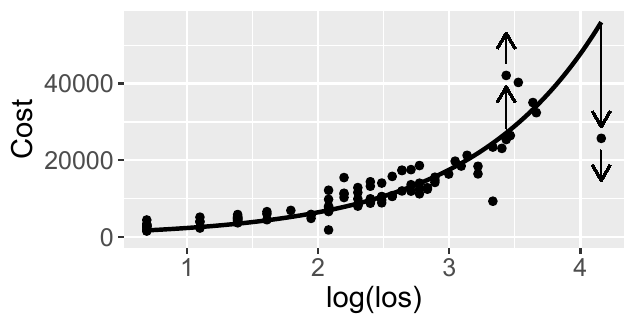}
\vspace{-2mm}
  \caption{\small Couples formed of data points of the dependent variable (cost of stay) and of an explanatory variable (log of length of stay) in the real-data example of \autoref{sec:case_study}; the black line represents an estimated exponential trend.}\label{fig:paths}
 \end{figure}
\normalsize

We refer to a couple $(\bx_i, y_i)$ with $y_i \rightarrow \infty$ as a \textit{large} outlier, and a couple with $y_i \rightarrow 0$ as a \textit{small} outlier. The $y_i$ component is referred to as a \textit{large/small} outlying observation. We consider that each outlying observation goes to $\infty$ or 0 as its own specific rate. More specifically, for a large outlying observation, we consider that $y_i = b_i \omega$, and that $y_i = 1 / b_i \omega$ for a small outlying observation, with $b_i \geq 1$ a constant, and we let $\omega \rightarrow \infty$. For each non-outlying observation, we assume that $y_i = a_i$, where $a_i > 0$ is a constant.  Among the $n$ observations $y_1, \ldots, y_n$, we assume that $k$ of them form a group of non-outlying observations, $s$ of them form a group of small outlying observations, and $l$ of them form a group of large outlying observations. We denote the set of non-outlying observations, small outlying observations, and large outlying observations as $\by_k, \by_s$ and $\by_l$, respectively. For $i=1, \ldots, n$, we define the binary functions $k_i$, $s_i$ and $l_i$ as follows: $k_i=1$ if $y_i$ is a non-outlying observation, $s_i=1$ if it is a small outlying observation, and $l_i=1$ if it is a large outlying observation. These functions take the value of 0 otherwise. Therefore, we have $k_i+s_i+l_i=1$ for $i=1, \ldots, n$, with $\sum_{i=1}^n k_i=k$, $\sum_{i=1}^n s_i=s$, and $\sum_{i=1}^n l_i = l$.

Central to the characterization of the robustness of the proposed approach is the limiting behaviour of the PDF evaluated at an outlying data point. \autoref{prop:limit_PDF} below is about this limiting behaviour.

\begin{Proposition}\label{prop:limit_PDF}
For any $i$ with $l_i = 1$, and $c$, $\nu$ and $\mu_i$ fixed, we have that
\[
 \lim_{\omega \rightarrow \infty} \frac{ f_{\nu,c}(y_i/\mu_i)/\mu_i}{f_{\nu,c}(y_i)} = 1.
 \]
 If $\nu>1$ and $c<\sqrt{\nu}$ (the condition under which $f_{\text{left}}$ exists), the same result holds for any $i$ with $s_i = 1$.
\end{Proposition}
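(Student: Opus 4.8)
The plan is to exploit the explicit form of the right tail $f_{\text{right}}$ in \eqref{eq:proposed} and show that the rescaling by $\mu_i$ affects only lower-order terms in the ratio as $\omega \to \infty$. Since $\ell_i = 1$ means $y_i = b_i \omega \to \infty$ with $\mu_i$ fixed, for $\omega$ large enough both $y_i / \mu_i > \zr$ and $y_i > \zr$, so both the numerator and the denominator are governed by the right-tail branch. The strategy is therefore to write both quantities explicitly using $f_{\text{right}}$, cancel the common constant factor $f_{\text{mid}}(\zr)\, \zr\, (\log \zr)^{\lambdar}$, and reduce the ratio to an elementary expression in $y_i$ and $\mu_i$.

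Concretely, first I would substitute the right-tail formula to get, for $\omega$ large,
\begin{align*}
 \frac{f_{\nu,c}(y_i/\mu_i)/\mu_i}{f_{\nu,c}(y_i)}
 = \frac{\dfrac{1}{\mu_i}\, f_{\text{mid}}(\zr)\, \zr\, \dfrac{1}{y_i/\mu_i}\left(\dfrac{\log \zr}{\log(y_i/\mu_i)}\right)^{\lambdar}}{f_{\text{mid}}(\zr)\, \zr\, \dfrac{1}{y_i}\left(\dfrac{\log \zr}{\log y_i}\right)^{\lambdar}}.
\end{align*}
The factors $f_{\text{mid}}(\zr)$, $\zr$ and $(\log \zr)^{\lambdar}$ cancel, and the $1/\mu_i$ prefactor cancels against the $1/(y_i/\mu_i)$ versus $1/y_i$ mismatch, leaving the clean expression
\begin{align*}
 \frac{f_{\nu,c}(y_i/\mu_i)/\mu_i}{f_{\nu,c}(y_i)}
 = \left(\frac{\log y_i}{\log(y_i/\mu_i)}\right)^{\lambdar}
 = \left(\frac{\log y_i}{\log y_i - \log \mu_i}\right)^{\lambdar}.
\end{align*}
Then I would take the limit as $y_i = b_i \omega \to \infty$: since $\log \mu_i$ is a fixed constant while $\log y_i \to \infty$, the ratio $\log y_i / (\log y_i - \log \mu_i) \to 1$, and because $\lambdar$ is a fixed exponent the whole expression tends to $1$. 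This establishes the large-outlier case.

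For the small-outlier case with $\nu > 1$ and $c < \sqrt{\nu}$, the argument is entirely analogous but uses $f_{\text{left}}$. Here $s_i = 1$ gives $y_i = b_i / \omega \to 0$, so for $\omega$ large both $y_i / \mu_i < \zl$ and $y_i < \zl$ (which requires $\zl > 0$, hence the stated conditions on $\nu$ and $c$), placing both arguments in the left branch. The same cancellation of $f_{\text{mid}}(\zl)$, $\zl$ and $(\log \zl)^{\lambdal}$, together with the $1/\mu_i$ prefactor, reduces the ratio to $\left(\log y_i / (\log y_i - \log \mu_i)\right)^{\lambdal}$, and as $y_i \to 0$ we have $\log y_i \to -\infty$ with $\log \mu_i$ fixed, so the ratio again tends to $1$. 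The only genuine point requiring care — rather than a true obstacle — is justifying that for $\omega$ sufficiently large the relevant tail branch is activated for \emph{both} the rescaled and unrescaled arguments simultaneously; this is immediate since $\mu_i$ is a fixed positive constant, so $y_i \to \infty$ forces $y_i/\mu_i \to \infty$ as well (and likewise $y_i \to 0$ forces $y_i/\mu_i \to 0$), and the thresholds $\zl, \zr$ do not depend on $\omega$.
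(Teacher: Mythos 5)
Your proof is correct and follows essentially the same route as the paper's: for $\omega$ large both arguments fall in the same tail branch, the constants $f_{\text{mid}}(\zr)\,\zr\,(\log \zr)^{\lambdar}$ (resp.\ the $\zl$ analogues) cancel along with the $1/\mu_i$ prefactor, and the ratio reduces to $\left(\log y_i/(\log y_i - \log \mu_i)\right)^{\lambdar}$ (resp.\ with exponent $\lambdal$), which tends to $1$. Your explicit remark that the fixed thresholds $\zl, \zr$ and fixed $\mu_i$ guarantee simultaneous activation of the correct branch is a point the paper treats only implicitly, but otherwise the two arguments coincide.
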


\autoref{prop:limit_PDF} suggests that the PDF term of an outlier in the likelihood function or the posterior density behaves in the limit like $f_{\nu,c}(y_i)$. This is made precise in results below. The term $f_{\nu,c}(y_i)$ is independent of $\bbeta$ but depends on $\nu$. It is thus treated as a constant in the likelihood function or posterior density when varying $\bbeta$ with $\nu$ fixed, but not when varying $\nu$. We thus say that the conflicting information (the outlier) is \textit{partially rejected}. Our approach is thus said to be \textit{partially robust}. Ideally, conflicting information is \textit{wholly rejected} as its source becomes increasingly remote \citep{1984west431}. Note that the tail thickness of $f_{\nu,c}$ is already extreme (with a density not integrable if we omit the log terms in $f_{\text{left}}$ and $f_{\text{right}}$), and thus, it does not seem possible to remedy the situation by considering a density with heavier tails without exceedingly increasing the complexity of the model. Note also that with a polynomial tail, such as that of the density identified from the estimator of \cite{cantoni2001robust} in \autoref{sec:connection_cantoni}, it is not possible to get rid of $\bbeta$ in the limiting regime, implying a weaker robustness property. This provides an explanation for the difference in behaviour between the estimators as observed in \autoref{fig:estimates_yn}.

A corollary of \autoref{prop:limit_PDF} is the characterization of the limiting behaviour of the likelihood function.
\begin{Corollary}\label{cor:likelihood}
 The likelihood function $\prod_{i = 1}^n f_{\nu,c}(y_i/\mu_i)/\mu_i$, when evaluated at $(\bbeta, \nu)$ such that $\nu>1$ and $c<\sqrt{\nu}$, asymptotically behave like
 \begin{align}\label{eq:limit_likelihood}
  \prod_{i = 1}^n \left[\frac{1}{\mu_i} f_{\nu, c}\left(\frac{y_i}{\mu_i}\right)\right]^{k_i} \left[f_{\nu,c}(y_i)\right]^{s_i + l_i},
 \end{align}
 as $\omega \rightarrow \infty$, implying that, if the MLE belongs to a compact set with $\nu>1$ and $c<\sqrt{\nu}$, then it corresponds asymptotically to the mode of \eqref{eq:limit_likelihood}, provided that the latter belongs to a compact set with $\nu>1$ and $c<\sqrt{\nu}$ as well.
\end{Corollary}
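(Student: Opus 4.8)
The plan is to reduce the corollary to \autoref{prop:limit_PDF} by studying the ratio of the true likelihood to the candidate limit \eqref{eq:limit_likelihood}, and then to upgrade the pointwise convergence of that ratio to uniform convergence on compacts so that a standard argmax-comparison argument applies. Write $L_\omega(\bbeta,\nu) := \prod_{i=1}^n f_{\nu,c}(y_i/\mu_i)/\mu_i$ for the true likelihood and $\tilde L_\omega(\bbeta,\nu)$ for the expression \eqref{eq:limit_likelihood}. First I would factor both products according to the three groups indexed by $k_i, s_i, \ell_i$. For each non-outlier ($k_i=1$) the observation $y_i=a_i$ is fixed, so the factor $\frac{1}{\mu_i} f_{\nu,c}(y_i/\mu_i)$ coincides exactly with the matching factor in \eqref{eq:limit_likelihood}, and these cancel in the ratio. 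What remains is the finite product over the outliers,
\[
 \frac{L_\omega(\bbeta,\nu)}{\tilde L_\omega(\bbeta,\nu)} = \prod_{i:\, s_i + \ell_i = 1} \frac{f_{\nu,c}(y_i/\mu_i)/\mu_i}{f_{\nu,c}(y_i)},
\]
whose factors are precisely the ratios treated in \autoref{prop:limit_PDF}. Each converges to $1$ as $\omega\to\infty$ (for large outliers unconditionally, and for small outliers under $\nu>1$ and $c<\sqrt{\nu}$), so the product of finitely many such factors converges to $1$, establishing the asserted asymptotic equivalence.

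Next I would make this convergence uniform over a compact set $K$ of parameters with $\nu>1$ and $c<\sqrt{\nu}$, which is what the argmax statement needs. For a large outlier $y_i=b_i\omega$, once $\omega$ is large both $y_i/\mu_i$ and $y_i$ exceed $\zr$, so numerator and denominator both use $f_{\text{right}}$; after the $\mu_i$ factors cancel, the ratio collapses to $\left(\frac{\log y_i}{\log y_i - \bx_i^T\bbeta}\right)^{\lambdar}$, using $\log\mu_i=\bx_i^T\bbeta$. For a small outlier $y_i=b_i/\omega$, once $\omega$ is large both arguments fall below $\zl$ (this is where $\nu>1$ and $c<\sqrt{\nu}$ are used so that $f_{\text{left}}$ exists), and the identical computation yields $\left(\frac{\log y_i}{\log y_i - \bx_i^T\bbeta}\right)^{\lambdal}$. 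On $K$ the quantities $\bx_i^T\bbeta$ and the exponents $\lambdar,\lambdal$ are bounded (the latter by continuity in $\nu$ and $c$), while $|\log y_i|=|\log b_i \pm \log\omega|\to\infty$; hence both ratios tend to $1$ uniformly in $(\bbeta,\nu)\in K$. Taking logarithms gives $\sup_K|\log L_\omega - \log\tilde L_\omega|\to 0$.

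Finally, for the MLE statement I would run the standard argmax-comparison argument, writing $\theta=(\bbeta,\nu)$, $R_\omega:=\log L_\omega-\log\tilde L_\omega$, and letting $\hat\theta_\omega$ be the MLE (maximizer of $L_\omega$) and $\tilde\theta_\omega$ the mode of $\tilde L_\omega$, both assumed to lie in compacts with $\nu>1$ and $c<\sqrt{\nu}$. Combining the optimality inequalities for $\hat\theta_\omega$ and $\tilde\theta_\omega$ with the uniform bound $\sup_K|R_\omega|\to 0$ shows, via a short chain of inequalities, that $\hat\theta_\omega$ is a maximizer of $\tilde L_\omega$ up to $2\sup_K|R_\omega|$ and symmetrically for $\tilde\theta_\omega$, so the two modes asymptotically coincide. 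I expect this last step to be the main obstacle, for two reasons. The limit objective $\tilde L_\omega$ still depends on $\omega$ through the $\bbeta$-free outlier factors $f_{\nu,c}(y_i)$, so one cannot simply invoke convergence of argmaxes to a fixed maximizer and must instead compare two moving objectives directly; and turning the near-maximizer conclusion into genuine convergence of locations requires the mode to be well separated, which is where the compactness hypotheses in the statement do the essential work, keeping $\bx_i^T\bbeta$, $\lambdar$ and $\lambdal$ bounded and preventing the maximizers from escaping toward the boundary $\nu=1$ or to infinity.
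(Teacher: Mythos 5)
Your first paragraph reproduces what is, in effect, the paper's entire proof: the appendix contains no separate proof of this corollary --- it is presented as an immediate consequence of \autoref{prop:limit_PDF}, obtained exactly as you do, by cancelling the non-outlier factors and noting that each remaining ratio $\bigl(f_{\nu,c}(y_i/\mu_i)/\mu_i\bigr)/f_{\nu,c}(y_i)$ tends to $1$ (unconditionally for large outliers, under $\nu>1$ and $c<\sqrt{\nu}$ for small ones). Where you go beyond the paper is in your second and third paragraphs, and that extra work is genuinely needed if the MLE implication is to be more than a heuristic, since pointwise convergence of a ratio says nothing by itself about argmaxes. Your uniformity argument is sound: on a compact $K$ inside $\{\nu>1,\ c<\sqrt{\nu}\}$ (with $c$ fixed), $\zl$ is bounded away from $0$ and $\zr$ is bounded, $|\bx_i^T\bbeta|$ is bounded, and $\lambdar,\lambdal$ are continuous and hence bounded, so for $\omega$ large enough (uniformly on $K$) both arguments of each outlier ratio lie in the relevant tail and the closed-form ratio $\bigl(\log y_i/(\log y_i-\bx_i^T\bbeta)\bigr)^{\lambda}$, with exponent $\lambdar$ or $\lambdal$, tends to $1$ uniformly on $K$. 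One caveat on the final step: your sandwich inequalities show the MLE is a $2\sup_K|R_\omega|$-near-maximizer of \eqref{eq:limit_likelihood}, but converting near-maximization into convergence of the maximizer \emph{locations} requires a well-separation (uniform margin) property of the moving objective $\tilde L_\omega$, and compactness alone does not supply this --- one needs in addition something like strict unimodality of $\tilde L_\omega$ near its mode, uniformly in $\omega$. You flag this honestly as the main obstacle, and it is fair to note that the paper's own statement (``corresponds asymptotically to the mode'') is equally informal on this point; so your proposal, modulo that explicitly acknowledged residual condition, is at least as rigorous as what the paper asserts and strictly more detailed than its proof.
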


The function in \eqref{eq:limit_likelihood} can be seen as the likelihood function based on the non-outliers only that is adjusted by a factor of $\prod_{i = 1}^n \left[f_{\nu,c}(y_i)\right]^{s_i + l_i}$ coming from the outliers. To have an idea of the impact of this additional factor, we can return to our data set simulated and discussed in \autoref{sec:robustness_problems}, and consider $y_n = 15$ for the outlying-observation value, the latter being the maximum value for which the parameter estimates are computed in \autoref{fig:estimates_yn}. We can compare $f_{\nu,c}(y_n)$ (viewed as a function of $\nu$) with the likelihood function based on non-outlying data points $(\bx_1, y_1), \ldots, (\bx_{n - 1}, y_{n - 1})$, that is $\prod_{i = 1}^{n - 1} f_{\nu,c}(y_i/\mu_i)/\mu_i$  (evaluated at $\bbeta = \hat\bbeta$). We observe in \autoref{fig:f_nu} (left panel) that the function $f_{\nu,c}(y_n)$ decreases, but less quickly than $\prod_{i = 1}^{n - 1} f_{\nu,c}(y_i/\mu_i)/\mu_i$ increases (right panel), explaining why the resulting MLE is not so influenced by the outlier.

  \begin{figure}[ht]
    \centering\scriptsize
  $\begin{array}{cc}
 \includegraphics[width=0.45\textwidth]{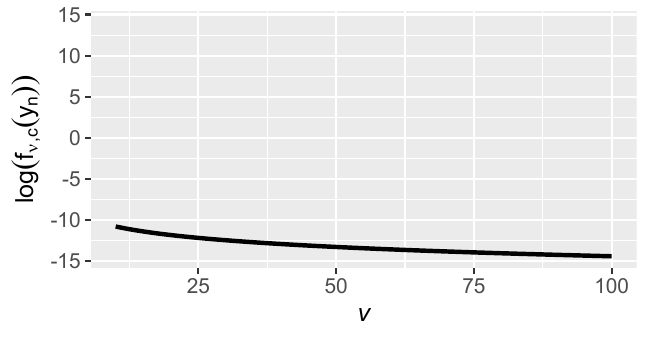} &  \includegraphics[width=0.45\textwidth]{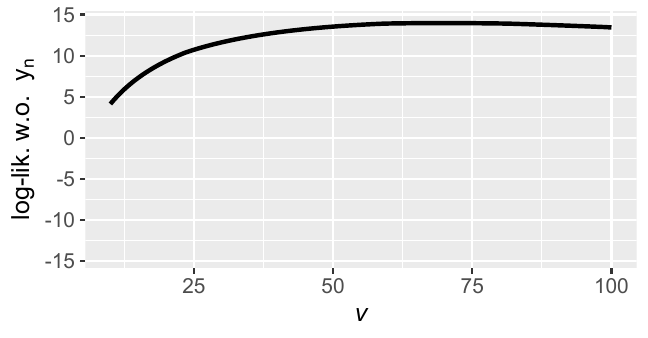}
  \end{array}$\vspace{-2mm}
  \caption{\small Log of $f_{\nu,c}(y_n)$ and log-likelihood function based on data points $(\bx_1, y_1), \ldots, (\bx_{n - 1}, y_{n - 1})$ simulated in \autoref{sec:robustness_problems} evaluated at $\bbeta = \hat\bbeta$, viewed as a function of $\nu$, with $c = 1.6$ and $y_n = 15$.}\label{fig:f_nu}
 \end{figure}
\normalsize

With results such as \autoref{prop:limit_PDF} and \autoref{cor:likelihood}, one may wonder if the influence function of $\bbeta$ is redescending, in the sense of being asymptotically null. We now explain that it is the case by writing the estimating equation of $\bbeta$ in the same way as that of \cite{cantoni2001robust} (recall \eqref{eq:Psi}) and by showing that the function applied to the Pearson residuals is redescending. A difference is that we allow this function to depend on $\nu$.
\begin{Proposition}\label{prop:estimating}
With the proposed model, the estimating equation regarding $\bbeta$ is the following:
\begin{align}\label{eqn:estimating_proposed}
 \frac{\partial}{\partial \bbeta} \sum_{i=1}^n \log f_{\bbeta, \nu, c}(y_i) = \sum_{i=1}^n \sqrt{\nu} \, \Psi_{\nu, c}(r_i(\bbeta, \nu)) \, \bx_i = \mathbf{0},
\end{align}
where,
\[
 \Psi_{\nu, c}(r) := \begin{cases}
     r \quad \text{if} \quad |r| \leq c, \cr
  \frac{\lambdar / \sqrt{\nu}}{\log(r / \sqrt{\nu} + 1)}  \quad \text{if} \quad r > c, \cr
  \frac{\lambdal / \sqrt{\nu}}{\log(r / \sqrt{\nu} + 1)}  \quad \text{if} \quad r < -c,
 \end{cases}
 \]
 if $\nu>1$ and $c<\sqrt{\nu}$; otherwise, the third case is omitted in the definition of $\Psi_{\nu, c}$.
\end{Proposition}
For any $i$ with $l_i = 1$, and $c$, $\nu$ and $\mu_i$ fixed, we have that $r_i(\bbeta, \nu) \rightarrow \infty$ as $\omega \rightarrow \infty$, implying that $\Psi_{\nu, c}(r_i(\bbeta, \nu)) \rightarrow 0$. For any $i$ with $s_i = 1$, we have that $r_i(\bbeta, \nu) \rightarrow -\sqrt{\nu}$ as $\omega \rightarrow \infty$, implying that $\Psi_{\nu, c}(r_i(\bbeta, \nu)) \rightarrow 0$ as well if $\nu>1$ and $c<\sqrt{\nu}$ (the condition under which $f_{\text{left}}$ exists). Note that the function $\Psi_{\nu, c}$ is in general not continuous, which a difference with the functions typically used in the robust frequentist literature. The reason is that, to simplify, we here focus on designing a PDF which is continuous, but not necessarily continuously differentiable. As a consequence, there is not necessarily a solution to the equation in \eqref{eqn:estimating_proposed}.

We finish this section with a theoretical result about the asymptotic behaviour of the posterior distribution. We derive the result in a simplifying situation where the parameter $\nu$ is considered fixed, like $c$; the unknown parameter is thus considered to be only $\bbeta$ for the rest of the section. The prior and posterior are thus about this parameter only, and they will be denoted by $\pi$ and $\pi(\, \cdot \mid \by)$, respectively. We further simplify by considering that $\nu$ is such that $\nu > 1$ and $c < \sqrt{\nu}$ to ensure the existence of both tails in gamma GLM (and of $f_{\text{left}}$ in our model), which corresponds to the gamma PDF shape that often is sought for and supported by the data in actuarial science. The simplifying situation can be seen as an approximation of that where $\nu$ is considered as unknown and random (as previously), but with a posterior mass that concentrates strongly around a specific value. The result that is derived suggests that the posterior density (when both $\bbeta$ and $\nu$ are considered unknown) asymptotically behaves like one where the PDF terms of the outlying data points in the original density are each replaced by $f_{\nu,c}(y_i)$.

A conclusion of our theoretical result is a convergence of the posterior distribution towards $\pi(\, \cdot \mid \by_k)$, which has a density defined as follows:
\[
 \pi(\bbeta \mid \by_k) = \pi(\bbeta) \prod_{i = 1}^n \left[\frac{1}{\mu_i} f_{\nu, c}\left(\frac{y_i}{\mu_i}\right) \right]^{k_i} \Bigg/ m(\by_k), \quad \bbeta \in \R^p,
\]
where
\[
 m(\by_k) := \int_{\R^p} \pi(\bbeta) \prod_{i = 1}^n \left[\frac{1}{\mu_i} f_{\nu, c}\left(\frac{y_i}{\mu_i}\right) \right]^{k_i} \d\bbeta.
\]
In \autoref{thm:robustness} below, we use $\lceil \, \cdot \,\rceil$ to denote the ceiling function.

\begin{Theorem}\label{thm:robustness}
 Assume that $\nu$ is fixed and such that $\nu > 1$ and $c < \sqrt{\nu}$. Assume that $\pi$ is bounded. If $k \geq \lceil \lambdal / \lambdar \rceil(l + s) + 2p - 1$, that is $n \geq (\lceil\lambdal / \lambdar\rceil + 1)(l + s) + 2p - 1$, then both $\pi(\, \cdot \mid \by)$ and $\pi(\, \cdot \mid \by_k)$ are proper and as $\omega \rightarrow \infty$ (with $y_i = b_i\omega$ or $y_i = 1 / b_i \omega$ for outlying observations),
 \begin{enumerate}

  \item[(a)] the asymptotic behaviour of the marginal distribution is: $m(\by) / m(\by_k) \prod_{i = 1}^n \left[f_{\nu,c}(y_i)\right]^{s_i + l_i}\rightarrow 1$;

  \item[(b)] the posterior density converges pointwise: for any $\bbeta \in \R^p, \pi(\bbeta \mid \by) \rightarrow \pi(\bbeta \mid \by_k)$;

  \item[(c)] the posterior distribution converges: $\pi(\, \cdot \mid \by) \rightarrow \pi(\, \cdot \mid \by_k)$.
 \end{enumerate}
\end{Theorem}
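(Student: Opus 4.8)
The plan is to prove (b) and (c) as consequences of (a), and to concentrate the real work on (a) and on propriety. Granting (a), part (b) is immediate: fixing $\bbeta$ and writing $\pi(\bbeta\mid\by)=\pi(\bbeta)\prod_{i=1}^n[f_{\nu,c}(y_i/\mu_i)/\mu_i]\big/m(\by)$, I would apply \autoref{prop:limit_PDF} to each outlying factor, so that $\prod_{i:\,s_i+\ell_i=1}f_{\nu,c}(y_i/\mu_i)/\mu_i\sim\prod_{i=1}^n[f_{\nu,c}(y_i)]^{s_i+\ell_i}$, and then substitute the equivalent $m(\by)\sim m(\by_k)\prod_{i=1}^n[f_{\nu,c}(y_i)]^{s_i+\ell_i}$ furnished by (a); the outlying factors cancel against the product in the denominator, leaving exactly $\pi(\bbeta\mid\by_k)$. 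Part (c) then follows from (b) by Scheff\'e's lemma: pointwise convergence of the proper densities $\pi(\,\cdot\mid\by)\to\pi(\,\cdot\mid\by_k)$ upgrades automatically to convergence in $L^1$, hence in total variation, and therefore in distribution.

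For propriety I would argue as in \autoref{prop:proper} but with $\nu$ held fixed. The key observation is that for a non-outlying observation $y_i=a_i$, as $\bx_i^T\bbeta\to\pm\infty$ the standardized value $a_i/\mu_i$ enters a log-Pareto tail and the factor $f_{\nu,c}(a_i/\mu_i)/\mu_i$ decays like a power ($\lambdar$ or $\lambdal$, depending on the sign of $\bx_i^T\bbeta$) of $|\bx_i^T\bbeta|$. With a bounded prior and at least $2p-1$ non-outliers — guaranteed by the hypothesis since $\ell+s\ge0$ — these factors force enough decay in every direction of $\R^p$ to make $m(\by_k)$ finite, by the integrability estimates established for log-Pareto tails in the cited linear-regression works. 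Factoring out $\prod_{i=1}^n[f_{\nu,c}(y_i)]^{s_i+\ell_i}$ and using the same bound on the reweighted integrand then yields $m(\by)<\infty$.

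The heart is part (a). Rewriting the target ratio as
\[
 \frac{m(\by)}{\prod_{i=1}^n[f_{\nu,c}(y_i)]^{s_i+\ell_i}}
 =\int_{\R^p}\pi(\bbeta)\prod_{i:\,k_i=1}\frac{f_{\nu,c}(y_i/\mu_i)}{\mu_i}\prod_{i:\,s_i+\ell_i=1}\frac{f_{\nu,c}(y_i/\mu_i)/\mu_i}{f_{\nu,c}(y_i)}\,\d\bbeta,
\]
\autoref{prop:limit_PDF} shows that for each fixed $\bbeta$ the integrand converges to the integrand defining $m(\by_k)$. The whole difficulty, and where I expect the main obstacle, is justifying the interchange of limit and integral. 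A naive dominated-convergence argument fails because the reweighted outlying factors are not uniformly close to $1$: once $\bbeta$ is large enough that $\mu_i$ exceeds a large outlying $y_i$, the value $y_i/\mu_i$ falls into the \emph{opposite} (left) tail, and a short computation with \eqref{eq:proposed} shows the factor then behaves like $(\log y_i)^{\lambdar}/(\bx_i^T\bbeta)^{\lambdal}$, i.e.\ it grows in $\omega$ while decaying in $\bbeta$.

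The resolution is to partition $\R^p$ according to which observations lie in which tail and to bound each piece separately. On any fixed compact set, $\mu_i$ stays bounded, so each large (resp.\ small) outlier has $y_i/\mu_i\to\infty$ (resp.\ $\to0$) uniformly and the reweighted factors converge uniformly to $1$; dominated convergence then gives convergence of the restricted integral, which tends to $m(\by_k)$ as the compact set grows, by the propriety just established. The delicate regions are those that recede to infinity with $\omega$ (namely $\bx_i^T\bbeta\gtrsim\log y_i$ for some outlier $i$): there the integrand grows polynomially in $\log\omega$, but the region is pushed out to extreme $\bbeta$ where the non-outlying factors force decay. Balancing, in every direction of $\R^p$, the $\omega$-growth produced by outliers that have flipped tails against the decay forced by non-outliers — an interplay that pits $\lambdar$ against $\lambdal$ and explains the appearance of $\lceil\lambdal/\lambdar\rceil$ as the number of non-outliers needed to absorb each outlier — together with the $2p-1$ extra non-outliers required for $p$-dimensional integrability, is precisely what the hypothesis $k\ge\lceil\lambdal/\lambdar\rceil(\ell+s)+2p-1$ encodes. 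Carrying out these log-regularly-varying tail estimates uniformly in $\omega$, so that each receding region contributes a vanishing amount, is the technical crux of the argument.
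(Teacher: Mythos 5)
Your reductions of (b) and (c) and your identification of (a) as the crux coincide with the paper's architecture: the paper likewise derives (b) from (a) together with \autoref{prop:limit_PDF}, gets (c) from (b) via Scheff\'e's theorem, and obtains propriety of $\pi(\,\cdot\mid\by_k)$ from the proof of \autoref{prop:proper} with $\nu$ fixed (where, note, $k\geq p$ already suffices via the change of variables $z_i=y_i/\exp(\bx_i^T\bbeta)$ on $p$ non-outliers; your direction-wise decay count with $2p-1$ points is heavier than needed for this step). Your diagnosis of why naive dominated convergence fails is also exactly right: once a large outlier flips into the left tail, its reweighted factor behaves like $(\log y_i)^{\lambdar}/|\log y_i-\bx_i^T\bbeta|^{\lambdal}$, growing in $\omega$. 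But the proposal stops precisely where the paper's work begins. The genuine gap is that "the region is pushed out to extreme $\bbeta$ where the non-outlying factors force decay" is asserted without a mechanism: large $\|\bbeta\|$ alone buys nothing, because one must exclude configurations in which some outlier's predictor $\bx_i^T\bbeta$ is of order $\log\omega$ while $p$ or more non-outlying predictors $\bx_j^T\bbeta$ stay moderate — there the non-outlying factors are $O(1)$ and cannot offset the $(\log\omega)^{\lambdar}$ or $(\log\omega)^{\lambdal}$ growth. The paper's key missing ingredient is a finite partition of $\R^p$ built from the sets $\mathcal{O}_i$ (outlier $i$ not safely in its own tail, e.g.\ $\log(b_i\omega)-\bx_i^T\bbeta<\log(\omega)/2$) and $\mathcal{F}_j:=\{\bbeta:|\bx_j^T\bbeta|<\log(\omega)/\gamma\}$ for non-outliers, together with the geometric exclusion $\mathcal{O}_i\cap\mathcal{F}_{i_1}\cap\cdots\cap\mathcal{F}_{i_p}=\varnothing$, proved by writing $\bx_i=\sum_{s=1}^p a_s\bx_{i_s}$ (continuous covariates span $\R^p$) and choosing $\gamma\geq 2\sum_s a_s$ uniformly over index combinations. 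This is what converts "$\bbeta$ lies in a bad region" into the quantitative statement that at least $\lceil\lambdal/\lambdar\rceil(\ell+s)$ non-outliers satisfy $|\bx_j^T\bbeta|\geq\log(\omega)/\gamma$ and hence contribute factors bounded by $(2\gamma/\log\omega)^{\lambdar}$ or $(2\gamma/\log\omega)^{\lambdal}$.

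Two further steps you gesture at but do not supply. First, the worst-case tail-allocation argument: you cannot control which tail the extreme non-outliers land in, so both cases must be checked; the offset works because $\lceil\lambdal/\lambdar\rceil\lambdal\geq\lambdar$ (large outliers against left-tail non-outliers) and $\lceil\lambdal/\lambdar\rceil\lambdar\geq\lambdal$ (small outliers against right-tail ones — the binding case that actually forces the ceiling), combined with the bound $(y/\mu)f_{\nu,c}(y/\mu)\leq(\e^{-1}\nu)^\nu/\Gamma(\nu)$ from \autoref{lemma1} applied to the outlier numerators. Second, your compact-exhaustion scheme does not circumvent the estimate: for the restricted integrals to converge to $m(\by_k)$ you need the contribution outside the compact set to vanish uniformly in $\omega$, which is the same uniform tail bound in disguise. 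The paper instead runs a single dominated-convergence argument with dominating function $g(\bbeta)\,h(\bbeta,\omega)$, where $g$ is integrable (from $p$ non-outliers via the change of variables) and $h$ is uniformly bounded via the partition above — whence the accounting $k\geq\lceil\lambdal/\lambdar\rceil(\ell+s)+2p-1$: $p$ points for integrability, up to $p-1$ possibly non-extreme non-outliers sacrificed, and $\lceil\lambdal/\lambdar\rceil$ extreme non-outliers per outlier to absorb the growth. Your outline names the right constants but, without the exclusion lemma and the case analysis, part (a) remains unproved.
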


In the simplifying situation considered, the conclusions of \autoref{thm:robustness} about our approach hold once the prior on $\bbeta$ is set to be bounded, as long as the number of non-outliers is large enough. A sufficient number is $\lceil\lambdal / \lambdar\rceil(l + s) + 2p - 1$ which is equivalent to having an upper bound on the number of outliers of $l + s \leq (n - 2p + 1) / (1 + \lceil\lambdal / \lambdar\rceil)$, where $\lambdal / \lambdar$ varies from about 1 to 4 when $c = 1.6$, as seen in \autoref{fig:lambdas}. This condition suggests that the breakdown point, generally defined as the proportion of outliers $(l + s) / n$ that an estimator can handle, is $(n - 2p + 1) / n(1 + \lceil\lambdal / \lambdar\rceil)$, which is close to $1/(1 + \lceil\lambdal / \lambdar\rceil)$ if $n$ is large relatively to $p$.

In \autoref{thm:robustness}, Result (a) represents the centrepiece; it leads relatively easily to the other results, but its demonstration requires considerable work. Result (a) together with \autoref{prop:limit_PDF} lead to Result (b), which in turn leads to Result (c) using Scheffé's theorem \citep{scheffe1947useful}. Also, Result (a) together with \autoref{prop:limit_PDF} suggest that the posterior density with both $\bbeta$ and $\nu$ unknown asymptotically behaves like one where the PDF terms of the outlying data points in the original density are each replaced by $f_{\nu,c}(y_i)$. Indeed, this posterior density essentially corresponds to $\pi(\bbeta \mid \by)$ if we multiply the numerator by $\pi(\nu)$ and integrate the denominator with respect to $\pi(\nu)$. We understood that the reason why we cannot prove a result in the situation where both $\bbeta$ and $\nu$ are considered unknown (at least using our proof technique) is that we cannot write $f_{\nu,c}(y_i)$ as a product of two terms with one depending on $\nu$ but not on $y_i$ and the other one depending on $y_i$ but not $\nu$.

The convergence of the posterior density in Result (b) indicates that the maximum a posteriori probability estimate is robust in the simplifying situation. Result (c) indicates that any estimation based on posterior quantiles (e.g., using posterior medians or Bayesian credible intervals) is robust. It is possible to obtain a result about the convergence of the posterior expectations under additional technical conditions. All these results characterize the limiting behaviour of a variety of Bayes estimators.

\section{Numerical experiments}\label{sec:numerical_experiments}

The results shown in \autoref{fig:estimates_yn} are interesting in that they allow to qualitatively evaluate the quality of the proposed model and the associated estimator. In \autoref{sec:simulation}, we provide a quantitative evaluation through a simulation study in which the parameters are estimated using maximum likelihood method. We next turn to Bayesian estimation in \autoref{sec:case_study} where we present a detailed case study.

\subsection{Simulation study}\label{sec:simulation}

 In this simulation study, we evaluate the estimation performance of: gamma GLM, the method of \cite{cantoni2001robust}, and the proposed approach. The gamma GLM is estimated using maximum likelihood method, as well as the proposed model. The goal of this simulation study is also to identify good values of $c$ for the proposed model; accordingly, several values are considered: $1.2, 1.3, \ldots, 2$. Values outside of this range yield non-effective approaches, at least based on our numerical experiments. The estimates based on the method of \cite{cantoni2001robust} are computed using the \textsf{robustbase R} package with the default options \citep{maechler2022package}. Note that, in order to make maximum likelihood estimation of the proposed model and the estimation method of \cite{cantoni2001robust} comparable, we do not include a weight function applied to each $\bx_i$ in the latter. As mentioned, a weight function can be used to decrease the weight of high-leverage points, and such a function can be included in our approach when viewed as an M-estimator. Here, we simplify by using vanilla versions.

Performance will be measured under several scenarios. In each scenario, base data sets are first simulated using gamma distributions based on the same mechanism as for \autoref{fig:estimates_yn}. A scenario where performance evaluation is based on such base data sets allows to measure the efficiency of an estimator in the absence of outliers when gamma GLM is the gold standard. In other scenarios, data points of the base data sets are modified to introduce outliers for robustness evaluation. In these scenarios, the location of a data point is shifted as follows: given a location shift of $\vartheta>0$, we modify $r_i(\bbeta, \nu) = \sqrt{\nu}(y_i - \mu_i) / \mu_i$ (computed using the true parameter values) by adding $\vartheta$, and obtain $\tilde{r}_i(\bbeta, \nu) = r_i(\bbeta, \nu) + \vartheta$ and the shifted data point $(\bx_i, \tilde{y}_i)$ with $\tilde{y}_i = \tilde{r}_i(\bbeta, \nu) \, \mu_i / \sqrt{\nu} + \mu_i$. In a subset of these scenarios, we also change $\bx_i$ (of the modified data points $(\bx_i, \tilde{y}_i)$) to make the data points high-leverage points; we do this by setting $\tilde{\bx}_i = (1, 1.5 \, \max_j x_{j2})^T$. The modified data points thus considered in this case are $(\tilde{\bx}_i, \tilde{y}_i)$.

The scenarios are now enumerated and described in more detail.
\begin{itemize}
 \item \textbf{Scenario 0}: simulation of base data sets without modification.
 \item \textbf{Scenario 1}: simulation of base data sets with modification of $5\%$ of data points chosen uniformly at random using $\vartheta = 7$.
 \item \textbf{Scenario 2}: simulation of base data sets with modification of $10\%$ of data points chosen uniformly at random using $\vartheta = 7$.
 \item \textbf{Scenario 3}: simulation of base data sets with modification of $5\%$ of data points chosen uniformly at random using $\vartheta = 3$ and with modification of $\bx_i$ as well.
 \item \textbf{Scenario 4}: simulation of base data sets with modification of $10\%$ of data points chosen uniformly at random using $\vartheta = 3$ and with modification of $\bx_i$ as well.
\end{itemize}

The choice of location shifts produces challenging and interesting situations where modified data points are often in a gray area where there is uncertainty regarding whether they really are outliers or not. The location shifts have been chosen analogously as in \cite{gagnon2020} who study a robust linear regression approach. Regarding the choice of scenarios, Scenarios 1 and 3 can be seen as scenarios with relatively few outliers, and Scenarios 2 and 4 allow to show how performance varies when the number of outliers is doubled. For each scenario, we consider two sample sizes: $n = 20$ and $n = 40$; this is to evaluate the impact of doubling the sample size. Note that similar results can be obtained with larger samples if the number of covariates (and thus of parameters) is increased accordingly.

The performance of each model/estimator is evaluated through the premium-versus-protection approach of \cite{anscombe1960rejection}. This approach consists in computing the premium to pay for using a robust alternative $\mathcal{R}$ to gamma GLM when there are no outliers (Scenario 0), and the protection provided by this alternative when the data sets are contaminated (other scenarios). The premium and protection associated with a robust alternative are evaluated through the following:
\begin{align*}
 &\text{Premium}(\mathcal{R}, \hat{\bbeta}) := \frac{\mathcal{M}_{\mathcal{R}}( \hat{\bbeta}) - \mathcal{M}_{\text{gamma}}( \hat{\bbeta})}{\mathcal{M}_{\text{gamma}}( \hat{\bbeta})}, \cr
 &\text{Protection}(\mathcal{R}, \hat{\bbeta} \mid \mathcal{S}) := \frac{\mathcal{M}_{\text{gamma}}( \hat{\bbeta}\mid \mathcal{S}) - \mathcal{M}_{\mathcal{R}}( \hat{\bbeta}\mid \mathcal{S})}{\mathcal{M}_{\text{gamma}}( \hat{\bbeta}\mid \mathcal{S})},
\end{align*}
where $\mathcal{S}$ is the scenario under which the protection is evaluated (1, 2, 3 or 4), and $\mathcal{M}_{\text{gamma}}( \hat{\bbeta}\mid \mathcal{S})$, for instance, denotes a measure $\mathcal{M}$ of the estimation error of the (true) regression coefficient using $\hat{\bbeta}$ with gamma GLM, in Scenario $\mathcal{S}$. The scenario is not specified for the premium because it does not vary; it is Scenario 0. The premium and protection for $\hat{\nu}$ have analogous definitions. We do not combine the estimation errors of all parameters, but instead measure the error of $\hat{\bbeta}$ and $\hat{\nu}$ separately to highlight a difference in estimation behaviour. For $\hat{\nu}$, $\mathcal{M}$ is the square root of the mean squared error; for $\hat{\bbeta}$, it is the square root of the expected (squared) Euclidean norm. The expectations are approximated through the simulation of 10,000 data sets. Note that premiums and protections are only evaluated for robust alternatives to gamma GLM as they are relative measures with respect to gamma GLM.

The results are graphically presented by plotting the couples $(\text{Premium}(\mathcal{R}, \hat{\bbeta}), \text{Protection}(\mathcal{R}, \hat{\bbeta} \mid \mathcal{S}))$ and $(\text{Premium}(\mathcal{R}, \hat{\nu}), \text{Protection}(\mathcal{R}, \hat{\nu} \mid \mathcal{S}))$. The results for Scenarios 1 and 2 are shown in \autoref{fig:scenarios_1_2}, and those for Scenarios 3 and 4 in \autoref{fig:scenarios_3_4}. From this \textit{premium-versus-protection} perspective, a robust alternative dominates another if its premium is smaller and protection larger. This means that in Figures \ref{fig:scenarios_1_2} and \ref{fig:scenarios_3_4}, we want to pay attention to the points in the upper left corner of the plots. The proposed models associated with the different values of $c$ studied are all excellent, as well as the method of \cite{cantoni2001robust}, as they offer better protections than their premiums in most cases. The proposed approach with $c = 1.6$ essentially dominates that of \cite{cantoni2001robust} in all cases, sometimes significantly, and offers an appealing premium-versus-protection profile. Based on our numerical experiments, we thus recommend using this value, when no prior belief about $\nu$ is available for choosing the value of $c$ and when $c$ is not estimated using statistical approaches (recall the discussion at the end of \autoref{sec:model}). Note that for a given percentage of outliers (and therefore of non-outliers), a larger sample size translates into enhanced protection for all approaches in all scenarios, which is a consequence of a larger number of non-outliers for estimation (for a fixed number of parameters).

 \begin{figure}[ht]
  \centering\small
  $\begin{array}{ccc}
  &  \hat{\bbeta} & \hat{\nu} \cr
 \vspace{-2mm}\hspace{-2mm}\rotatebox{90}{\hspace{5mm}\textbf{Scenario 1 -  $n = 20$}} &  \hspace{-0mm} \includegraphics[width=0.41\textwidth]{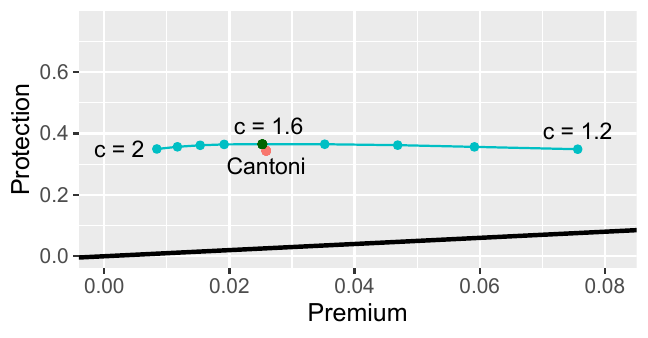} &  \hspace{-5mm} \includegraphics[width=0.41\textwidth]{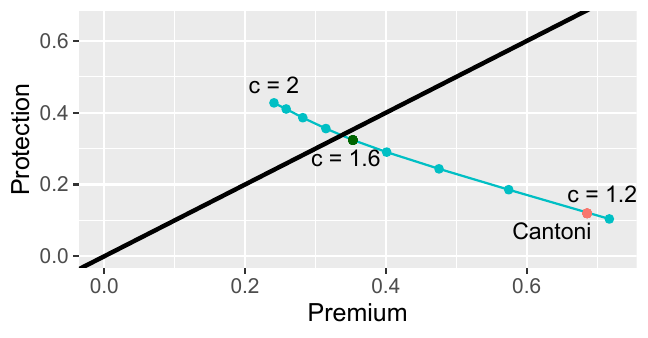} \cr
 \vspace{-2mm}\hspace{-2mm}\rotatebox{90}{\hspace{5mm}\textbf{Scenario 1 -  $n = 40$}} &  \hspace{-0mm} \includegraphics[width=0.41\textwidth]{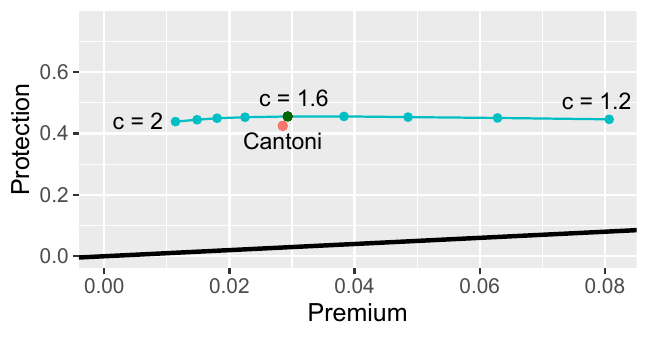} &  \hspace{-5mm} \includegraphics[width=0.41\textwidth]{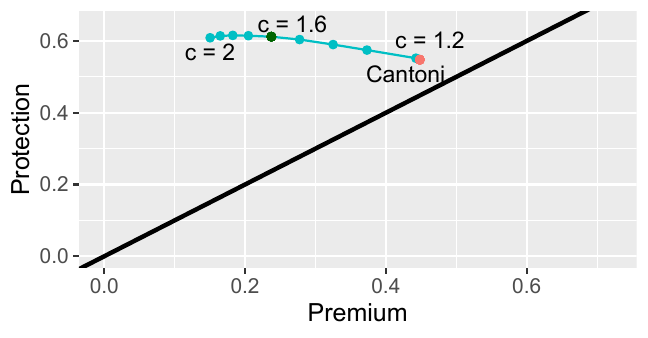} \cr
 \vspace{-2mm}\hspace{-2mm}\rotatebox{90}{\hspace{5mm}\textbf{Scenario 2 -  $n = 20$}} &  \hspace{-0mm} \includegraphics[width=0.41\textwidth]{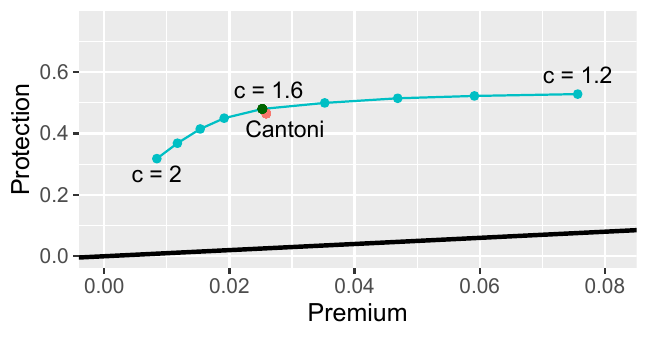} &  \hspace{-5mm} \includegraphics[width=0.41\textwidth]{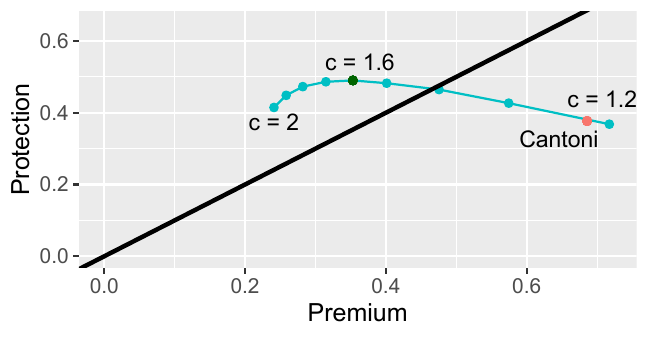} \cr
 \vspace{-2mm}\hspace{-2mm}\rotatebox{90}{\hspace{5mm}\textbf{Scenario 2 -  $n = 40$}} &  \hspace{-0mm} \includegraphics[width=0.41\textwidth]{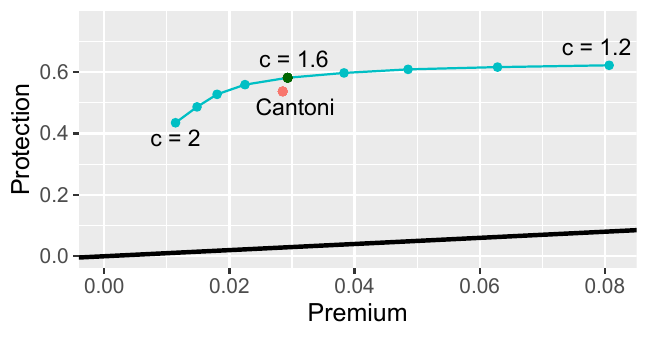} &  \hspace{-5mm} \includegraphics[width=0.41\textwidth]{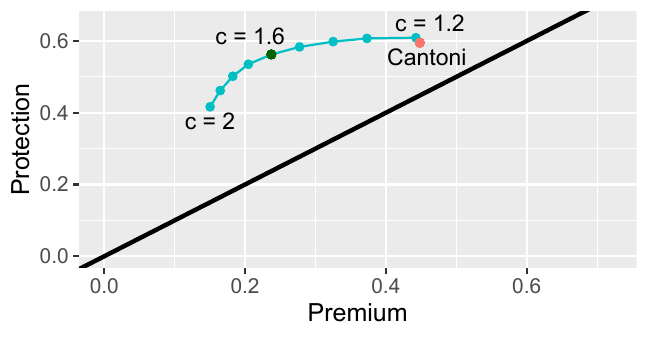}
  \end{array}$\vspace{-2mm}
  \caption{\small Premiums versus protections in Scenarios 1 and 2, with lines premium $=$ protection to identify the robust alternatives that offer better protections than their premiums.}\label{fig:scenarios_1_2}
 \end{figure}
\normalsize

 \begin{figure}[ht]
  \centering\small
  $\begin{array}{ccc}
  &  \hat{\bbeta} & \hat{\nu} \cr
 \vspace{-2mm}\hspace{-2mm}\rotatebox{90}{\hspace{5mm}\textbf{Scenario 3 -  $n = 20$}} &  \hspace{-0mm} \includegraphics[width=0.41\textwidth]{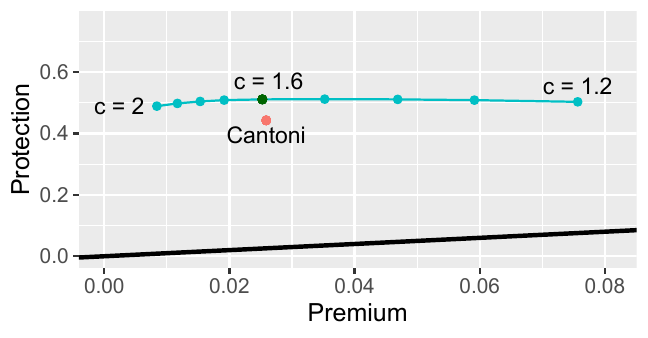} &  \hspace{-5mm} \includegraphics[width=0.41\textwidth]{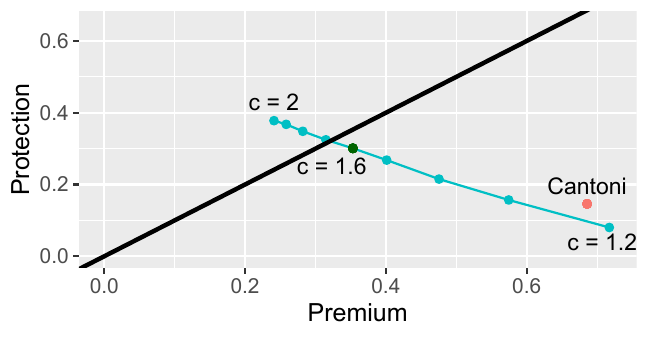} \cr
 \vspace{-2mm}\hspace{-2mm}\rotatebox{90}{\hspace{5mm}\textbf{Scenario 3 -  $n = 40$}} &  \hspace{-0mm} \includegraphics[width=0.41\textwidth]{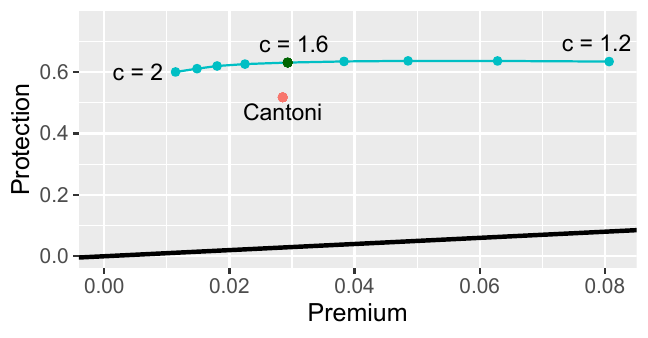} &  \hspace{-5mm} \includegraphics[width=0.41\textwidth]{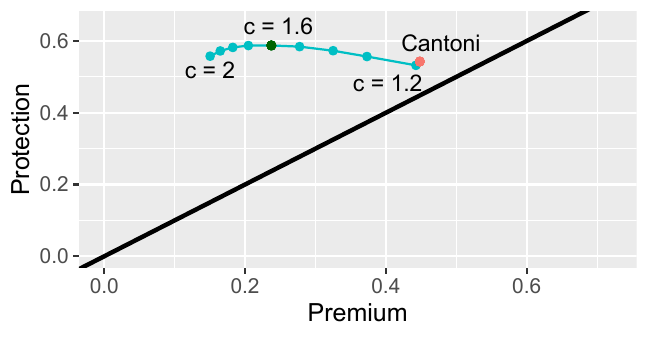} \cr
 \vspace{-2mm}\hspace{-2mm}\rotatebox{90}{\hspace{5mm}\textbf{Scenario 4 -  $n = 20$}} &  \hspace{-0mm} \includegraphics[width=0.41\textwidth]{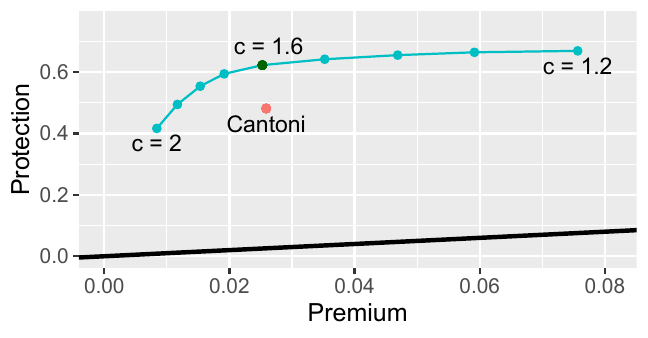} &  \hspace{-5mm} \includegraphics[width=0.41\textwidth]{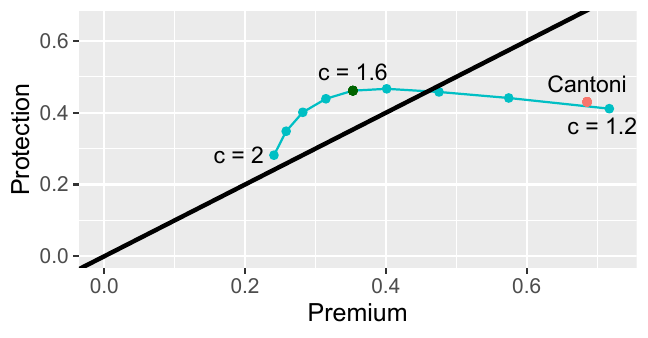} \cr
 \vspace{-2mm}\hspace{-2mm}\rotatebox{90}{\hspace{5mm}\textbf{Scenario 4 -  $n = 40$}} &  \hspace{-0mm} \includegraphics[width=0.41\textwidth]{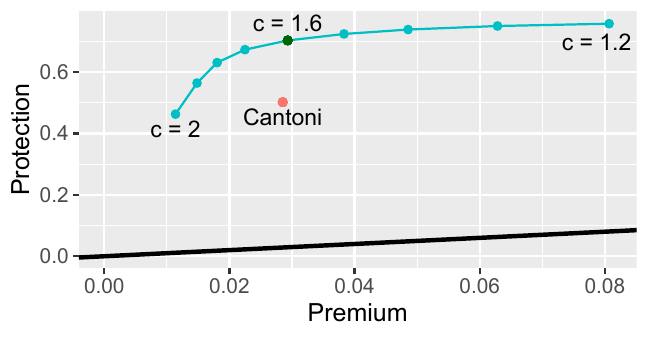} &  \hspace{-5mm} \includegraphics[width=0.41\textwidth]{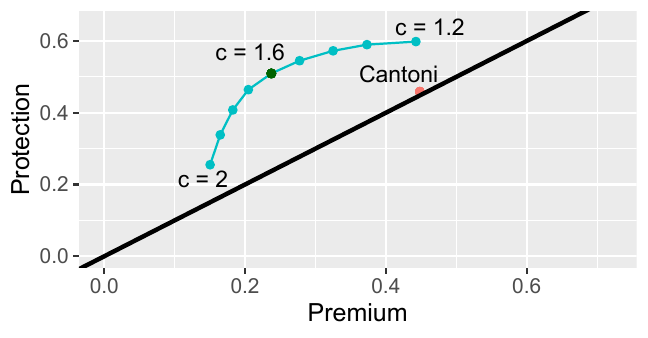}
  \end{array}$\vspace{-2mm}
  \caption{\small Premiums versus protections in Scenarios 3 and 4, with lines premium $=$ protection to identify the robust alternatives that offer better protections than their premiums.}\label{fig:scenarios_3_4}
 \end{figure}
\normalsize

\subsection{Health care expenditures: A Bayesian case study}\label{sec:case_study}

In this section, we provide a Bayesian statistical analysis of health-care expenditures. The data set is about 100 patients hospitalized at the \textit{Centre Hospitalier Universitaire Vaudois} in Lausanne, Switzerland, for medical back problems during 1999. The data set is available in the \textsf{robmixglm R} package \citep{beath2018robmixglm}. It is known for containing outliers, and has been analysed by \cite{marazzi2004adaptively}, \cite{cantoni2006robust} and \cite{beath2018mixture} to highlight the benefits of using robust statistical methods. The objective of a statistical analysis of that data set is to model the cost of stay in that hospital using six explanatory variables such as the length of stay and the admission
type (planned versus emergency). The empirical density of the dependent variable is highly right-skewed. This characteristic of the data, together with the fact the dependent variable is positive, motivate the use of gamma GLM. Analysing such a data set is of interest for actuaries. It can help them understand the main contributing factors, in this case, to the health cost, in order to provide a basis for accurate insurance pricing.

For the analysis, we assign a prior distribution on the parameters which is such that: $\pi(\bbeta \mid \nu) \propto 1$ and $\nu$ has a gamma distribution which is weakly informative and not in contradiction with the likelihood function. We obtain samples from the posterior distributions resulting from gamma GLM and the proposed model using Hamiltonian Monte Carlo \citep{Duane1987}. The posterior estimates are computed using Markov-chain samples of size 1,000,000, after discarding the first 10\% of the iterations. See \autoref{sec:supp_case_study} for the detailed expressions of the posterior densities and gradients. Note that the posterior density resulting from the proposed model has discontinuous derivatives, which surely has an impact on the performance of the numerical method. However, the discontinuity points have null measure and thus does not prevent the use of such a method.

Posterior medians and 95\% highest posterior density (HPD) credible intervals (CIs) are presented in \autoref{table:case_study}. They are computed for gamma GLM, based on the whole data set and without identified outliers, and for the proposed model. We observe significant differences between the estimates for gamma GLM based on the whole data set and the two other sets of estimates. The most significant difference in estimates is for $\nu$. Its point estimate for gamma GLM based on the whole data set is about half of that for the proposed model. Smaller estimated values for this parameter translate into larger CI lengths. We observe the impact of an inflated length in particular in the estimation of $\beta_7$ where the CI includes 0 in the case of gamma GLM (based on the whole data), while it does not in the case of the proposed model. The former CI suggests that the explanatory variable is not significantly related to the dependent variable, while the latter suggests otherwise. To formally conduct hypothesis testing or variable selection, one can implement a reversible jump algorithm as in \cite{gagnon2019RJ} or a lifted version  (which may be beneficial for variable selection) as in \cite{gagnon2019NRJ} or \cite{gagnon2020asymptotic} to sample from the joint posterior distribution of the models induced by the hypothesis testing or variable selection and their parameters.

While there are differences between the estimates for the proposed model and those for gamma GLM based on the data set without identified outliers, the conclusions suggested by the CIs are the same (if we consider that a CI with an endpoint equal to $0.00$, to two decimal places, includes the value $0$). Those differences should not come as a surprise as the estimation of the proposed model does not correspond to that of gamma GLM based on the data set without identified outliers, but rather to an estimation where erroneous and extreme data points are automatically assigned a weight which decreases with the uncertainty regarding whether they really are outliers.

\begin{table}
 \centering
\begin{tabular}{l rrrrrr}
\toprule
 &  \multicolumn{2}{c}{\textbf{Gamma GLM}} & \multicolumn{2}{c}{\textbf{Proposed model}} & \multicolumn{2}{c}{\textbf{Gamma GLM w.o.}} \cr
 \cmidrule(l){2-3} \cmidrule(l){4-5} \cmidrule(l){6-7}
 & Median & 95\% HPD CI & Median & 95\% HPD CI & Median & 95\% HPD CI  \cr
\midrule
 $\beta_1$ & $9.00$ & $(8.84, 9.16)$ & $9.03$ & $(8.92, 9.15)$ &  $9.05$ & $(8.94, 9.15)$  \cr
 $\beta_2$ & $0.68$ & $(0.64, 0.73)$ & $0.71$ & $(0.67, 0.74)$ & $0.71$ & $(0.67, 0.74)$ \cr
 $\beta_3$ & $-0.01$ & $(-0.06, 0.04)$ & $-0.03$ & $(-0.07, 0.01)$ & $-0.03$ & $(-0.07, 0.00)$  \cr
 $\beta_4$ & $0.21$ & $(0.11, 0.32)$ & $0.22$ & $(0.14, 0.30)$ & $0.21$ & $(0.14, 0.28)$  \cr
 $\beta_5$ & $0.09$ & $(-0.06, 0.26)$ & $0.00$ & $(-0.13, 0.12)$ & $-0.02$ & $(-0.14, 0.09)$ \cr
 $\beta_6$ & $0.09$ & $(-0.01, 0.19)$ & $0.08$ & $(0.00, 0.15)$ & $0.05$ & $(-0.02, 0.12)$  \cr
$\beta_7$ & $-0.10$ & $(-0.25, 0.04)$ & $-0.13$ & $(-0.23, -0.03)$ & $-0.13$ & $(-0.22, -0.03)$ \cr
 $\nu$ & $18.84$ & $(14.50, 23.50)$ & $36.32$ & $(28.20, 45.22)$ & $42.82$ & $(33.98, 52.34)$  \cr
 \bottomrule
\end{tabular}
 \caption{ \label{table:case_study} \small Posterior medians and 95\% HPD CIs under gamma GLM, based on the whole data set and without identified outliers, and under the proposed model.}
\end{table}

In regression analyses, residuals are often used to detect outliers. With GLMs, the Pearson residual is computed for each data point. Viewed as a function of the parameters $r_i(\bbeta, \nu)$ that is then estimated in a plug-in fashion in the original definition, it can be estimated in a Bayesian way by the posterior median, for instance, using the Markov-chain samples. The Bayesian estimates based on posterior medians can be found against the posterior medians of $\mu_i$ (the Bayesian analogue of the fitted values) in \autoref{fig:res_vs_fitted}. We observe that the residuals based on the estimation of the proposed model are overall more dispersed than those based on the estimation of gamma GLM, mainly due to a smaller estimated value of $\nu$ in the latter case. Data points are flagged as outliers and investigated if their residuals are extreme. With the residuals based on the estimation of gamma GLM, it is less evident which data points should be flagged as outliers, a consequence of the masking effect. Outlier detection based on the estimation of the proposed model is more effective.

  \begin{figure}[ht]
  \centering\small
  $\begin{array}{cc}
 \includegraphics[width=0.45\textwidth]{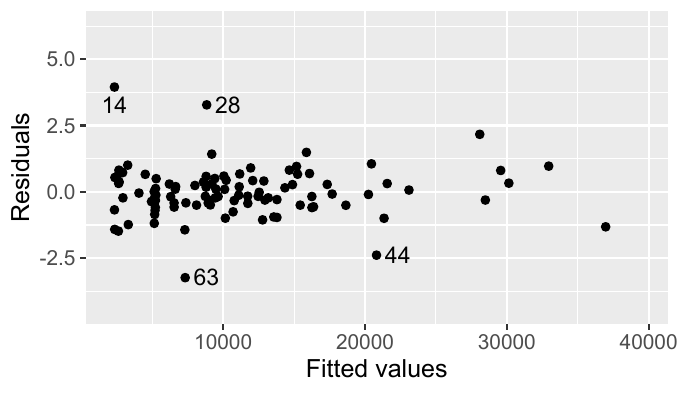} &  \includegraphics[width=0.45\textwidth]{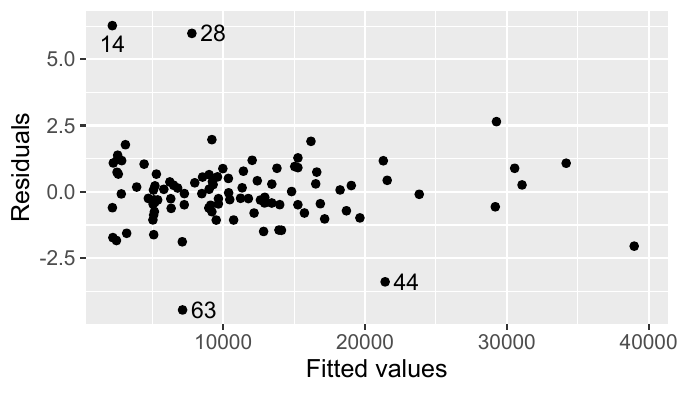}  \cr
   \hspace{-0mm} \textbf{(a) gamma GLM} & \hspace{-0mm} \textbf{(b) proposed model}
  \end{array}$\vspace{-2mm}
  \caption{\small Bayesian Pearson residuals against Bayesian fitted values under gamma GLM and the proposed model.}\label{fig:res_vs_fitted}
 \end{figure}
\normalsize

\section{Discussion}\label{sec:discussion}

In this paper, we highlighted that (non-)robustness against outliers is an aspect to bear in mind when conducting statistical analyses using GLMs. We also highlighted that there are few robust alternatives, especially when one wants to conduct Bayesian statistical analyses. While focusing on gamma GLM, which is an ubiquitous tool in actuarial science, we proposed an effective robust approach that is modelling-based and can thus be used for both frequentist and Bayesian analyses. The proposed model is easy to interpret and understand, and can be straightforwardly estimated (at least on small-to-moderate-size data sets). The resulting MLE (which can be viewed as an M-estimator for gamma GLM) is similar in flavour to the most commonly used robust frequentist estimator of \cite{cantoni2001robust}. The theoretical and empirical comparison made in the paper shows that the proposed approach is an appealing alternative.

 In future work, it would be interesting to study the computational aspect in depth, that is the scalability with the sample size and with the number of covariates. Also, there may be ways to improve the proposed approach. A weakness that we identified (which the approach of \cite{cantoni2001robust} also has) is that the area where the original gamma PDF is replaced by a robustified function (in the case of the method of \cite{cantoni2001robust}, it is instead the area where the derivative of the log-likelihood is replaced) is based on the Pearson residual, in the same spirit as in robust linear regression. But, contrarily to the standardized residual in linear regression, the Pearson residual is not symmetrically distributed. We believe it would be interesting to explore if using a Wilson--Hilferty transformation of the Pearson residual instead would yield a significant improvement (see, e.g., \cite{terrell2003wilson} for a discussion about Wilson--Hilferty transformations). We expect the resulting approach not to perform significantly better than that proposed here when the shape parameter is moderate to large like in our numerical experiments in \autoref{sec:numerical_experiments} as the asymmetry of the gamma is not too severe in this case. Finally, it would be interesting to work on model adaptation of GLMs whose response-variable distributions do not have tails, like logistic regression. For such a model, it is not clear to us that there even exists a precise definition of outliers. There are thus interesting fundamental questions to be answered regarding robustness when one wants to use such a GLM.

 \section*{Acknowledgements}

The authors thank two anonymous referees and an associate editor for constructive comments that led to an improved manuscript. Philippe Gagnon acknowledges support from NSERC (Natural Sciences and Engineering Research Council of Canada) and FRQNT (Fonds de recherche du Québec -- Nature et technologies).

\bibliographystyle{rss}
\bibliography{references}

\appendix

\section{Proofs}\label{sec:proofs}

We now present the proofs of all theoretical results in the same order as the results appeared in the paper.

\begin{proof}[Proof of \autoref{prop:asymptotic_lambdas}]
  We prove the result for $\lambdar$. The result for $\lambdal$ is proved analogously. We have that
 \[
 \lambdar= 1+\frac{f_{\text{mid}}(\zr)\log(\zr) \, \zr}{\P[Z_{\nu} > \zr]}.
 \]
 To prove the result, we will prove that the numerator of the fraction converges towards
 \[
  \frac{c}{\sqrt{2 \pi} \, \e^{c^2 / 2}},
 \]
 and that the denominator converges towards $1 - \Phi(c)$. We start with the analysis of the numerator:
 \begin{align*}
   f_{\text{mid}}(\zr)\log(\zr) \, \zr  &= \frac{\exp\{-\nu \zr\}(\zr \nu)^\nu \log(\zr)}{\Gamma(\nu)} \cr
   &= \frac{\exp\left\{-\nu (1+c/\sqrt{\nu})\right\}((1+c/\sqrt{\nu})\nu)^\nu \log(1+c/\sqrt{\nu})}{\sqrt{2\pi}(\nu/\e)^\nu / \sqrt{\nu}}\frac{\sqrt{2\pi}(\nu/\e)^\nu / \sqrt{\nu}}{\Gamma(\nu)}.
 \end{align*}
 By Stirling's formula, we know that
 \[
  \frac{\sqrt{2\pi}(\nu/\e)^\nu / \sqrt{\nu}}{\Gamma(\nu)} \rightarrow 1.
 \]
 We can thus focus on the other term. After simplification, it is equal to
 \begin{align*}
\frac{\exp\left\{-\nu (1+c/\sqrt{\nu})\right\}((1+c/\sqrt{\nu})\e)^\nu \log[(1+c/\sqrt{\nu})^{\sqrt{\nu}}]}{\sqrt{2\pi}}.
 \end{align*}
 We have that
 \[
  \log[(1+c/\sqrt{\nu})^{\sqrt{\nu}}] \rightarrow c,
 \]
 and
 \begin{align*}
  \exp\left\{-\nu (1+c/\sqrt{\nu})\right\}((1+c/\sqrt{\nu})\e)^\nu &= \exp\left\{-\nu (1+c/\sqrt{\nu})\right\}\exp\left\{\nu(1 + \log(1+c/\sqrt{\nu}))\right\} \cr
  &= \exp\left\{-c\sqrt{\nu}\right\}\exp\left\{\sqrt{\nu}\log[(1+c/\sqrt{\nu})^{\sqrt{\nu}}]\right\} \rightarrow \frac{1}{\e^{c^2 / 2}}.
 \end{align*}
 This concludes the proof that the numerator in $\lambda_{\text{r}}$ converges towards
 \[
  \frac{c}{\sqrt{2 \pi} \, \e^{c^2 / 2}}.
 \]

 We now turn to the proof that the denominator, $\P[Z_{\nu} > \zr]$, converges towards $1 - \Phi(c)$. We have that $Z_{\nu}$ follows a gamma distribution whose mean and shape parameters are given by 1 and $\nu$, respectively. This implies that the scale parameter is given by $1 / \nu$. Therefore,
 \begin{align*}
  \P[Z_{\nu} > \zr] = \P[X_{\nu} > \nu \zr] &= \P[X_{\nu} > \nu + c \sqrt{\nu}] \cr
  &= \P\left[\frac{X_{\nu} - \nu}{\sqrt{\nu}} > c \right],
 \end{align*}
 where $X_{\nu}$ follows a gamma distribution whose shape and scale parameters are $\nu$ and 1, respectively.

 We have that
 \begin{align*}
  \frac{X_{\nu} - \nu}{\sqrt{\nu}} = \sqrt{\frac{\lfloor \nu \rfloor}{\nu}} \sqrt{\lfloor \nu \rfloor} \left(\frac{1}{\lfloor \nu \rfloor}\sum_{k = 1}^{\lfloor \nu \rfloor} X_k - 1\right) + \frac{\tilde{X} - (\nu - \lfloor \nu \rfloor)}{\sqrt{\nu }},
 \end{align*}
 where $X_1, \ldots, X_{\lfloor \nu \rfloor}$ are independent random variables, each having an exponential distribution with a scale parameter of 1, and $\tilde{X}$ follows a gamma distribution whose shape and scale parameters are $\nu - \lfloor \nu \rfloor$ and 1, respectively. By the central limit theorem,
 \[
  \sqrt{\lfloor \nu \rfloor} \left(\frac{1}{\lfloor \nu \rfloor}\sum_{k = 1}^{\lfloor \nu \rfloor} X_k - 1\right)
 \]
 converges in distribution towards a standard normal distribution. Also, $\tilde{X} / \sqrt{\nu }$ converges towards 0 with probability 1. Therefore, by Slutsky’s theorem, we have that
 \[
 \frac{X_{\nu} - \nu}{\sqrt{\nu}}
 \]
 converges in distribution towards a standard normal distribution, which concludes the proof.
\end{proof}

We now present and prove two lemmas that will be used in the proof of \autoref{prop:proper}.

\begin{Lemma}\label{lemma1}
Viewed as a function of $\mu$, $f_{\nu,c}(y/\mu)/\mu$ is strictly increasing on $(0, y)$ and then strictly decreasing on $(y, \infty)$, for all $\nu$, $c$ and $y$. It is thus unimodal with a mode at $\mu = y$, and in particular, it is bounded above by $(\e^{-1}\nu)^\nu/(y\Gamma(\nu))$.
\end{Lemma}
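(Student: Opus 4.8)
The plan is to study $g(\mu) := f_{\nu,c}(y/\mu)/\mu$ as a function of $\mu > 0$ with $y$, $\nu$ and $c$ held fixed, and to analyze the sign of the derivative of $\log g$ (legitimate since $g > 0$) separately on the three regions determined by where the argument $z := y/\mu$ falls relative to $\zl$ and $\zr$. As $\mu$ ranges over $(0, \infty)$, the quantity $z$ decreases from $\infty$ to $0$, and $z = 1$ corresponds exactly to $\mu = y$; since $\zl \leq 1 \leq \zr$, the point $\mu = y$ always lies in the region where $f_{\text{mid}}$ is active. The single observation that makes every computation collapse is that, writing $z = y/\mu$, one has $-\log z - \log \mu = -\log y$, a constant in $\mu$; consequently the factor $1/\mu$ and the $1/z$ factor appearing in $f_{\text{right}}$ and $f_{\text{left}}$ contribute nothing to $(\log g)'$.

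On the middle region, where $\zl \leq z \leq \zr$ (i.e.\ $y/\zr \leq \mu \leq y/\zl$, or $\mu \geq y/\zr$ when $\zl = 0$), I would substitute the expression for $f_{\text{mid}}$ and simplify to $\log g(\mu) = -\nu y/\mu - \nu \log \mu + \text{const}$, whence
\[
 (\log g)'(\mu) = \frac{\nu}{\mu^2}(y - \mu),
\]
which is positive for $\mu < y$, negative for $\mu > y$, and zero only at $\mu = y$. On the right region, where $z > \zr$ (i.e.\ $\mu < y/\zr$, a subset of $(0, y)$), the log-Pareto form combined with the cancellation above gives $\log g(\mu) = \text{const} - \lambdar \log(\log z)$ with $\log z > 0$, so that
\[
 (\log g)'(\mu) = \frac{\lambdar}{\mu \log z} > 0,
\]
using $\lambdar > 0$. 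Symmetrically, on the left region, where $0 < z < \zl$ (i.e.\ $\mu > y/\zl$, a subset of $(y, \infty)$, present only when $\zl > 0$), one has $\log z < 0$ and $\log g(\mu) = \text{const} - \lambdal \log(-\log z)$, giving
\[
 (\log g)'(\mu) = -\frac{\lambdal}{\mu (-\log z)} < 0,
\]
using $\lambdal > 0$.

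To conclude, I would invoke the continuity of $g$, which follows from the continuity of $f_{\nu,c}$ (built into its definition through the boundary factors) together with that of $1/\mu$. Gluing the strictly signed derivatives across the region boundaries then yields that $g$ is strictly increasing on $(0, y]$ and strictly decreasing on $[y, \infty)$, so its unique mode is at $\mu = y$. Evaluating there, $z = 1$ lies in the middle region and $g(y) = f_{\text{mid}}(1)/y = \e^{-\nu}\nu^\nu/(y\Gamma(\nu)) = (\e^{-1}\nu)^\nu/(y\Gamma(\nu))$, which is the claimed upper bound.

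The only genuinely delicate points I anticipate are bookkeeping rather than conceptual: confirming $\lambdar > 0$ and $\lambdal > 0$ from their definitions (both are in fact at least $1$, since the added/subtracted terms inherit the signs of $\log \zr > 0$ and $\log \zl < 0$ respectively), and handling the left tail carefully, where $\log \zl$ and $\log z$ are both negative, so that the power $(\log\zl/\log z)^{\lambdal}$ must be rewritten using $|\log z| = -\log z$ before differentiating. Everything else is a routine derivative computation enabled by the $-\log z - \log\mu = -\log y$ cancellation.
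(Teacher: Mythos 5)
Your proposal is correct and takes essentially the same approach as the paper's proof: a piecewise analysis of $f_{\nu,c}(y/\mu)/\mu$ over the three regions determined by $\zl$ and $\zr$, exploiting the same cancellation $1/(z\mu) = 1/y$ so that the tail pieces reduce to monotone functions of $\log(y/\mu)$ (the paper states this via proportionality, you via explicit log-derivatives), a derivative computation in the middle region locating the mode at $\mu = y$, and continuity to glue the pieces. Your side remark that $\lambdar, \lambdal \geq 1$ is a correct, slightly sharper justification of the positivity the paper simply asserts.
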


\begin{proof}
Based on \eqref{eq:proposed},
\begin{align*}
        f_{\nu,c}(y/\mu)/\mu=\left\{
    \begin{array}{ll}
        f_{\text{mid}}(y/\mu)/\mu = \exp\left\{-\nu y/\mu\right\}\mu^{-\nu}y^{\nu-1}\nu^\nu/\Gamma(\nu)& \mbox{if } \zl\leq y/\mu\leq \zr, \\
        f_{\text{right}}(y/\mu)/\mu= f_{\text{mid}}(\zr)\frac{\zr}{y}\left(\frac{\log(\zr)}{\log(y/\mu)}\right)^{\lambdar} & \mbox{if } y/\mu>\zr,\\
        f_{\text{left}}(y/\mu)/\mu = f_{\text{mid}}(\zl) \frac{\zl}{y}\left(\frac{\log(\zl)}{\log(y/\mu)}\right)^{\lambdal} & \mbox{if } 0<y/\mu<\zl,\\
    \end{array}
    \right.
\end{align*}

We analyse the three parts of the function (of $\mu$) separately. We consider that all three parts exist; otherwise, one part (with $f_{\text{left}}$) has to be skipped.

We first consider that $\mu \in (0,y/\zr)$. In this case, $f_{\nu,c}(y / \mu)/\mu = f_{\text{right}}(y / \mu) / \mu$. We have that
\begin{align*}
  f_{\text{right}}(y/\mu)/\mu = f_{\text{mid}}(\zr) \frac{\zr}{y}\left(\frac{\log(\zr)}{\log(y/\mu)}\right)^{\lambdar}\propto \left(\frac{1}{\log(y/\mu)}\right)^{\lambdar}.
\end{align*}
This function (of $\mu$) is strictly increasing because  $\lambda_{\text{r}}>0$. Thus, for $\mu \in (0, y/z_{\text{r}})$,
\begin{align*}
   f_{\text{right}}(y / \mu) / \mu \leq f_{\text{mid}}(z_{\text{r}}) z_{\text{r}} / y.
\end{align*}

Analogously, we consider that $\mu \in (y / \zl, \infty)$. In this case, $f_{\nu,c}(y / \mu)/\mu = f_{\text{left}}(y / \mu) / \mu$. We have that
\begin{align*}
  f_{\text{left}}(y/\mu)/\mu = f_{\text{mid}}(\zl)\frac{\zl}{y}\left(\frac{\log(\zl)}{\log(y/\mu)}\right)^{\lambdal}\propto \left(\frac{1}{\log(\mu/y)}\right)^{\lambdal}.
\end{align*}
This function (of $\mu$) is strictly decreasing because $\lambdal>0$. Thus, for $\mu \in (y / \zl, \infty)$,
\begin{align*}
   f_{\text{left}}(y / \mu) / \mu \leq f_{\text{mid}}(\zl) \zl / y.
\end{align*}

Finally, consider that $\mu \in [y/\zr, y/\zl]$. In this case, $f_{\nu,c}(y/\mu)/\mu=f_{\text{mid}}(y/\mu)/\mu$. We have
$$f_{\text{mid}}(y / \mu)/\mu= f_{\text{mid}}(y / \mu) (y / \mu) / y. $$
If we evaluate the function at the boundaries, we obtain
\[
 f_{\text{mid}}(\zr) \zr / y,
\]
and
\[
f_{\text{mid}}(\zl) \zl / y,
\]
highlighting that the function is continuous.

Now, we show that the function is strictly increasing on $[y/\zr, y)$ and then strictly decreasing on $(y, y/\zl]$, which implies that it is unimodal with a mode at $\mu = y$. The derivative of the log of $f_{\text{mid}}(y / \mu) (y / \mu) / y$ with respect to $\mu$ is given by
\begin{align*}
  \frac{\partial}{\partial \mu}\log\left[f_{\text{mid}}(y / \mu) (y / \mu) / y\right] &= \frac{\partial}{\partial \mu} \left[-\nu (y / \mu) + \nu\log(y / \mu)+\nu\log\nu-\log(\Gamma(\nu))-\log(y)\right]\\
  &= \frac{\nu}{\mu^2}(y - \mu).
\end{align*}
The root of this function is $\mu=y$. If $\mu <y$, the derivative is strictly positive, meaning that the function $f_{\text{mid}}(y / \mu) (y / \mu) / y$ is strictly increasing on that part of the domain. If $\mu >y$, the derivative is strictly negative, meaning that the function $f_{\text{mid}}(y / \mu) (y / \mu) / y$ is strictly decreasing on that part of the domain. This allows to conclude that the function (of $\mu$) $f_{\nu,c}(y/\mu)/\mu$ is strictly increasing on $[y/z_{\text{r}}, y)$ and then strictly decreasing on $(y, y/z_{\text{l}}]$. Thus, $f_{\text{mid}}(1)/y$ is the maximum of $f_{\nu,c}(y/\mu)/\mu$. Therefore, the upper bound of $f_{\nu,c}(y/\mu)/\mu$ is given by $f_{\nu,c}(1)/y = (\e^{-1}\nu)^\nu/(y\Gamma(\nu))$.
\end{proof}

\begin{Lemma}\label{lemma2}
If $\pi(\, \cdot \,)$ is a proper PDF such that $\int_0^\infty \nu^{(n - p)/ 2} \, \pi(\nu) \, \d\nu < \infty$, then $\int_0^\infty \left[\frac{(\e^{-1}\nu)^\nu}{\Gamma(\nu)}\right]^{n-p}\, \pi(\nu)\,\d\nu<\infty$.
\end{Lemma}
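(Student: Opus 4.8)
The plan is to reduce the statement to the hypothesis by means of a single pointwise bound on the integrand. First I would rewrite $(\e^{-1}\nu)^\nu = \nu^\nu \e^{-\nu}$ and set
\[
 \phi(\nu) := \frac{\nu^\nu \e^{-\nu}}{\Gamma(\nu)}, \qquad \nu > 0,
\]
so that the quantity to be integrated is exactly $\phi(\nu)^{n-p}\,\pi(\nu)$. The key claim is that there exists a constant $C > 0$ with $\phi(\nu) \leq C\sqrt{\nu}$ for every $\nu > 0$. Granting this, and using that $n - p \geq 0$ by assumption (so that raising to the power $n-p$ preserves the inequality), we obtain $\phi(\nu)^{n-p} \leq C^{n-p}\,\nu^{(n-p)/2}$, with the convention that this reads $1 \leq 1$ in the case $n = p$.

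To establish the bound it suffices to show that the function $\nu \mapsto \phi(\nu)/\sqrt{\nu}$, which is continuous and strictly positive on $(0, \infty)$, admits finite limits at both endpoints; a continuous function on $(0,\infty)$ with finite limits at $0^+$ and $+\infty$ extends continuously to the compact set $[0, \infty]$ and is therefore bounded, which yields the constant $C$. As $\nu \rightarrow \infty$, Stirling's formula $\Gamma(\nu) \sim \sqrt{2\pi}\,\nu^{\nu - 1/2}\e^{-\nu}$ (already invoked in the proof of \autoref{prop:asymptotic_lambdas}) gives $\phi(\nu) \sim \sqrt{\nu/(2\pi)}$, so $\phi(\nu)/\sqrt{\nu} \rightarrow 1/\sqrt{2\pi}$. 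As $\nu \rightarrow 0^+$, I would use $\Gamma(\nu) = \Gamma(\nu + 1)/\nu \sim 1/\nu$ together with $\nu^\nu \rightarrow 1$ and $\e^{-\nu} \rightarrow 1$ to get $\phi(\nu) \sim \nu$, whence $\phi(\nu)/\sqrt{\nu} \rightarrow 0$.

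With the uniform bound in hand, the conclusion is immediate:
\[
 \int_0^\infty \phi(\nu)^{n-p}\,\pi(\nu)\,\d\nu \;\leq\; C^{n-p}\int_0^\infty \nu^{(n-p)/2}\,\pi(\nu)\,\d\nu \;<\; \infty,
\]
the final inequality being precisely the hypothesis on $\pi$. The only genuinely nontrivial step is the verification of the uniform bound $\phi(\nu) \leq C\sqrt{\nu}$, and this rests entirely on the two endpoint asymptotics above (Stirling's formula controlling the behaviour at infinity and the simple pole of $\Gamma$ at the origin controlling the behaviour near zero); everything else is routine.
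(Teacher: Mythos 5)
Your proof is correct and takes essentially the same route as the paper's: both arguments rest on Stirling's formula to show that $(\e^{-1}\nu)^\nu/\Gamma(\nu)$ grows like $\sqrt{\nu/(2\pi)}$ as $\nu \rightarrow \infty$, and then invoke the moment hypothesis $\int_0^\infty \nu^{(n-p)/2}\pi(\nu)\,\d\nu < \infty$. The only difference is one of packaging: you establish a single uniform bound $\phi(\nu) \leq C\sqrt{\nu}$ on all of $(0,\infty)$, using the simple pole of $\Gamma$ at the origin to control small $\nu$, whereas the paper splits the integral at a large $\nu^*$ and bounds $\phi$ on the bounded piece via its monotonicity (asserted without proof there) together with the properness of $\pi$ — so your variant is, if anything, marginally tidier.
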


\begin{proof}
We will separate the integral into two parts: from 0 to a large positive constant $\nu^*$, and from $\nu^*$ to $\infty$. The function ${(\e^{-1}\nu)^\nu}/{\Gamma(\nu)}$ is strictly increasing, thus this function is bounded on $0 < \nu \leq \nu^*$ by $(\e^{-1}\nu^*)^{\nu^*}/\Gamma(\nu^*)$. As $\nu$ gets large, $\Gamma(\nu)$ can be approximated by \textit{Stirling's formula}, given by
$$\Gamma(\nu)\approx S(\nu)=\frac{\sqrt{2\pi}(\nu/\e)^\nu}{\sqrt{\nu}}.$$
We thus have ${(\e^{-1}\nu)^\nu}/{\Gamma(\nu)}\approx{(\e^{-1}\nu)^\nu}/{S(\nu)}$. More precisely,
\begin{align*}
    \frac{S(\nu)}{\Gamma(\nu)}\to 1 &\Leftrightarrow \frac{(\e^{-1}\nu)^\nu}{(\e^{-1}\nu)^\nu}\frac{S(\nu)}{\Gamma(\nu)}\to 1 \Leftrightarrow \frac{\sqrt{2\pi}}{\sqrt{\nu}}\frac{(\e^{-1}\nu)^\nu}{\Gamma(\nu)}\to 1.
\end{align*}
Therefore, for all $\delta>0$, we can find a $\nu^*$ such that for all $\nu\geq\nu^*$,
$$\frac{(\e^{-1}\nu)^\nu}{\Gamma(\nu)} = \frac{(\e^{-1}\nu)^\nu}{\Gamma(\nu)}\frac{\sqrt{2\pi}}{\sqrt{\nu}}\frac{\sqrt{\nu}}{\sqrt{2\pi}}\leq(1+\delta)\frac{\sqrt{\nu}}{\sqrt{2\pi}}.$$
Thus,
\begin{align*}
    \int_0^\infty\pi(\nu)\left[\frac{(\e^{-1}\nu)^\nu}{\Gamma(\nu)}\right]^{n-p}\,\d\nu=&\int_0^{\nu^*}\pi(\nu) \left[\frac{(\e^{-1}\nu)^\nu}{\Gamma(\nu)}\right]^{n-p}\,\d\nu+\int_{\nu^*}^\infty\pi(\nu) \left[\frac{(\e^{-1}\nu)^\nu}{\Gamma(\nu)}\right]^{n-p}\,\d\nu\\
    \leq&\left[\frac{(\e^{-1}\nu^*)^{\nu^*}}{\Gamma(\nu^*)}\right]^{n-p}\int_0^{\nu^*}\pi(\nu)\,\d\nu + \int_{\nu^*}^\infty\pi(\nu)\left[(1 + \delta)\frac{\sqrt{\nu}}{\sqrt{2\pi}}\right]^{n-p}\,\d\nu\\
    =&\left[\frac{(\e^{-1}\nu^*)^{\nu^*}}{\Gamma(\nu^*)}\right]^{n-p}\int_0^{\nu^*}\pi(\nu)\,\d\nu + \frac{\int_{\nu^*}^\infty\pi(\nu)(1+\delta)\nu^{(n-p)/2}\,\d\nu}{(2\pi)^{(n-p)/2}}\\
    \leq&\left[\frac{(\e^{-1}\nu^*)^{\nu^*}}{\Gamma(\nu^*)}\right]^{n-p}\int_0^{\nu^*}\pi(\nu)\,\d\nu + \frac{\int_{0}^\infty\pi(\nu)(1+\delta)\nu^{(n-p)/2}\,\d\nu}{(2\pi)^{(n-p)/2}}\\
    <&\infty.
\end{align*}
The last step is due to the condition $\int_{0}^\infty \pi(\nu) \, \nu^{(n-p)/2} \d\nu < \infty$, and because $\pi(\, \cdot \,)$ is a proper PDF.
\end{proof}

\begin{proof}[Proof of \autoref{prop:proper}]
To prove this proposition, it suffices to show that the marginal $m(\by)$ is finite, that is
\begin{align*}
\iint\pi(\boldsymbol{\beta},\nu) \prod_{i=1}^n \frac{1}{\mu_i}f_{\nu,c}\left(\frac{y_i}{\mu_i}\right)\,\d\bbeta\,\d\nu<\infty.
\end{align*}

To prove this, we first split the data points into two parts. The first part contains $p$ data points, which will be used to perform a change of variables from $\boldsymbol{\beta}$ to $z_i = y_i/(\bx_i^T\boldsymbol{\beta})$ for $i=1,\ldots,p$. Without loss of generality, we choose the first $p$ data points. For the rest of the $n-p$ data points, we bound $\prod_{i=1}^{n-p} f_{\nu,c}(y_i/\mu_i)/\mu_i$ by a function depending on $\nu$ and then we show that this bound, multiplied by $\pi(\nu)$, is integrable with respect to $\nu$. We thus use the condition that $n \geq p$. When $n = p$, the proof is seen to be more simple, because the part with the rest of the $n - p$ data points does not actually exist. We have
\begin{align*}
    m(\textbf{y})&=\int_0^\infty\int_{\R^p} \pi(\bbeta, \nu) \prod_{i=1}^n \frac{f_{\nu,c}(y_i/\exp(\bx_i^T \bbeta))}{\exp(\bx_i^T \bbeta)} \, \d\bbeta \,\d\nu\\
    &=\int_0^\infty\int_{\R^p}\pi(\bbeta \mid \nu)\, \pi(\nu) \prod_{i=1}^{p} \frac{f_{\nu,c}(y_i/\exp(\bx_i^T \bbeta))}{\exp(\bx_i^T \bbeta)} \prod_{i=p+1}^n \frac{f_{\nu,c}(y_i/\exp(\bx_i^T \bbeta))}{\exp(\bx_i^T \bbeta)} \, \d\bbeta \, \d\nu\\
    &\overset{a}{\leq}B \int_0^\infty \left(\int_{\R^p} \prod_{i=1}^p \frac{f_{\nu,c}(y_i/\exp(\bx_i^T\bbeta))}{\exp(\bx_i^T \bbeta)} \, \d\bbeta\right) \pi(\nu) \prod_{i=p+1}^n\frac{(\e^{-1}\nu)^\nu}{y_i\Gamma(\nu)}\,\d\nu\\
    &\overset{b}{=}B \int_0^\infty\left(\Bigg| \text{det}\left(
    \begin{matrix}
           \bx_{1}^T \\
           \vdots \\
           \bx_{p}^T
         \end{matrix}\right)\Bigg|^{-1}\prod_{i=1}^p\int_{0}^\infty\frac{f_{\nu,c}(z_i)}{y_i} \d z_i\right) \pi(\nu) \prod_{i=p+1}^n\frac{(\e^{-1}\nu)^\nu}{y_i\Gamma(\nu)}\,\d\nu\\
    &\overset{c}{=}B\,\Bigg|\text{det}\left(
    \begin{matrix}
           \bx_{1}^T \\
           \vdots \\
           \bx_{p}^T
         \end{matrix}\right)\Bigg|^{-1}\prod_{i=1}^n\frac{1}{y_i}\int_0^\infty\displaystyle\pi(\nu)\left[\frac{(e^{-1}\nu)^\nu}{\Gamma(\nu)}\right]^{n-p}\,\d\nu\\
         &\overset{d}{<}\infty.
\quad
\end{align*}
In step $a$, we split the data points into two parts as we explained previously, and we use that $\pi(\boldsymbol{\beta}\mid\nu) \leq B $ with $B$ a positive constant. We also bound the product of $f(y_i/\mu_i)/\mu_i$ for $i=p+1,\ldots,n$ by using \autoref{lemma1}. In step $b$, we perform a change of variables from $\boldsymbol{\beta}$ to $z_i=y_i/\exp(\bx_i^T \bbeta)$, for $i=1, \dots, p$. For each $i$, we have $\partial z_i/ \partial\bbeta = y_i\bx_i^T/\exp(\bx_i^T\bbeta)$. The determinant is non-null because all explanatory variables are continuous. Indeed, consider the case $p = 2$ for instance; the determinant is
different from 0 provided that $x_{12}\neq x_{22}$, which happens with probability 1. When any type of explanatory variables is considered, we need to be able to select $p$ observations, say those with $\bx_{i_1},\ldots, \bx_{i_p}$, such that the matrix with rows $\bx_{i_1}^T,\ldots, \bx_{i_p}^T$ has a non-null
determinant. In step $c$, we used that $f_{\nu,c}$ is a PDF. In step $d$, we used \autoref{lemma2}.
\end{proof}

\begin{proof}[Proof of \autoref{prop:limit_PDF}]
We first consider the case where $i$ is such that $l_i = 1$, and in order to prove the result we show that
\[
 \lim_{y\rightarrow\infty} \frac{f_{\nu,c}(y/\mu)/\mu}{f_{\nu,c}(y)} = 1.
\]
In the denominator $f_{\nu,c}(y)$, $f_{\text{right}}$ is activated, and we have
\[
 f_{\text{right}}(y)= f_{\text{mid}}(\zr) \frac{\zr}{y} \left(\frac{\log(\zr)}{\log(y)}\right)^{\lambdar},
\]
  where $\zr$, $f_{\text{mid}}(\zr)$, and $\lambdar$ depend only on $\nu$ and $c$. As $y/\mu \rightarrow \infty$, $f_{\text{right}}$ is also activated in $f_{\nu,c}(y/\mu)/\mu$. Thus,
\begin{align*}
    \frac{f_{\text{right}}(y/\mu)/\mu}{f_{\text{right}}(y)}&=f_{\text{mid}}(\zr) \frac{\zr}{y}\left.\left(\frac{\log(\zr)}{\log(y/\mu)}\right)^{\lambdar} \right/ f_{\text{mid}}(\zr) \frac{\zr}{y}\left(\frac{\log(\zr)}{\log(y)}\right)^{\lambdar}\\
    &=\left(\frac{\log(y)}{\log(y)-\log(\mu)}\right)^{\lambdar}\rightarrow 1.
\end{align*}

We consider now that $y\rightarrow0$, under the condition that $c<\sqrt{\nu}$ and $\nu>1$. Under this condition, $f_{\text{left}}$ exists and is activated in both $f_{\nu,c}(y)$ and $f_{\nu,c}(y/\mu)/\mu$. We have
\begin{align*}
    \frac{f_{\text{left}}(y/\mu)/\mu}{f_{\text{left}}(y)}&=f_{\text{mid}}(\zl)\frac{\zl}{y}\left.\left(\frac{\log(\zl)} {\log(y/\mu)}\right)^{\lambdal} \right/ f_{\text{mid}}(\zl)\displaystyle\frac{\zl}{y}\left(\frac{\log(\zl)}{\log(y)}\right)^{\lambdal}\\
    &=\left(\frac{\log(y)}{\log(y)-\log(\mu)}\right)^{\lambda_{\text{l}}}\to 1.
\end{align*}
\end{proof}

\begin{proof}[Proof of \autoref{prop:estimating}]
 Let us consider that $\nu>1$ and $c<\sqrt{\nu}$; otherwise, the third case in the partial derivative below is omitted. We have that
\[
 \frac{\partial}{\partial \bbeta} \log f_{\nu, c}\left(\frac{y_i}{\mu_i}\right) - \log \mu_i = \begin{cases}
    \nu \left(\frac{y_i}{\mu_i} - 1\right) \, \bx_i \quad \text{if} \quad \zl \leq \frac{y_i}{\mu_i} \leq \zr, \cr
    \frac{\lambdar}{\log(y_i / \mu_i)} \, \bx_i \quad \text{if} \quad \frac{y_i}{\mu_i} > \zr, \cr
    \frac{\lambdal}{\log(y_i / \mu_i)} \, \bx_i \quad \text{if} \quad \frac{y_i}{\mu_i} < \zl.
 \end{cases}
\]
This can be rewritten as
\[
 \frac{\partial}{\partial \bbeta} \log f_{\nu, c}\left(\frac{y_i}{\mu_i}\right) - \log \mu_i = \begin{cases}
    \sqrt{\nu} \sqrt{\nu} \left(\frac{y_i}{\mu_i} - 1\right) \, \bx_i \quad \text{if} \quad -c \leq \sqrt{\nu}\left(\frac{y_i}{\mu_i} -1\right) \leq c, \cr
    \sqrt{\nu} \, \frac{\lambdar / \sqrt{\nu}}{\log(\sqrt{\nu}(y_i / \mu_i - 1) / \sqrt{\nu} + 1)} \, \bx_i \quad \text{if} \quad \sqrt{\nu}\left(\frac{y_i}{\mu_i} -1\right) > c, \cr
    \sqrt{\nu}\, \frac{\lambdal / \sqrt{\nu}}{\log(\sqrt{\nu}(y_i / \mu_i - 1) / \sqrt{\nu} + 1)} \, \bx_i \quad \text{if} \quad \sqrt{\nu}\left(\frac{y_i}{\mu_i} -1\right) < -c.
 \end{cases}
\]
The proof is concluded by recalling that $r_i(\bbeta, \nu) = \sqrt{\nu}(y_i / \mu_i - 1)$.
\end{proof}

\begin{proof}[Proof of \autoref{thm:robustness}]
 We start with the proof of Result (a), which is quite lengthy. We next turn to the proofs of Results (b) and (c) which are shorter.

 Let us assume for now that $m(\by) < \infty$ for all $\omega$, and $m(\by_k)< \infty$. This is proved below. We first observe that
\begin{align*}
\frac{m(\by)}{m(\by_k)\prod_{i=1}^n[f_{\nu,c}(y_i)]^{s_i + l_i}} & = \frac{m(\by)}{m(\by_k)\prod_{i=1}^n[f_{\nu,c}(y_i)]^{s_i+l_i}}\int_{\R^p}\pi(\bbeta \mid \by) \,\d\bbeta\\
& =\int_{\R^p}\frac{\pi(\bbeta)\prod_{i=1}^n[f_{\nu,c}(y_i/\mu_i)/\mu_i]}{m(\by_k)\prod_{i=1}^n[f_{\nu,c}(y_i)]^{s_i+l_i}} \,\d\bbeta\\
&= \int_{\R^p}\pi(\bbeta \mid \by_k)\prod_{i=1}^n\left[\frac{ f_{\nu,c}(y_i/\mu_i)/\mu_i}{f_{\nu,c}(y_i)}\right]^{s_i+l_i} \d\bbeta.
\end{align*}

We show that the last integral converges to 1 as $\omega\rightarrow\infty$. Assuming that we can interchange the limit and the integral, we obtain that
\begin{align*}
\lim_{\omega\rightarrow\infty}\int_{\R^p}\pi(\bbeta\mid\by_k)\prod_{i=1}^n\left[\frac{ f_{\nu,c}(y_i/\mu_i)/\mu_i}{f_{\nu,c}(y_i)}\right]^{s_i+l_i} \d\bbeta &= \int_{\R^p}\lim_{\omega\rightarrow\infty}\pi(\bbeta\mid\by_k)\prod_{i=1}^n\left[\frac{f_{\nu,c}(y_i/\mu_i)/\mu_i} {f_{\nu,c}(y_i)}\right]^{s_i+l_i}\d\bbeta\\
&\overset{a}{=}\int_{\R^p}\pi(\bbeta\mid\by_k)\times1\ \d\bbeta\overset{b}{=}1.
\end{align*}
In step $a$, we use \autoref{prop:limit_PDF}. In step $b$, we use that $\pi(\bbeta \mid \by_k)$ is proper. Indeed, we notice in the proof of \autoref{prop:proper} that, if $\nu$ is fixed, the posterior distribution (of $\bbeta$) is proper if the prior is bounded and if $k \geq p$. These conditions are satisfied because we assume that $\pi$ is bounded, and that $k\geq \lceil\lambdal / \lambdar\rceil(l + s) + 2p-1 \geq p$. Note that this implies that $m(\by_k) < \infty$ and $m(\by) < \infty$ for all $\omega$.

To prove that we can interchange the limit and the integral, we use Lebesgue's dominated convergence theorem. We thus need to prove that the integrand is bounded by an integrable function of $\bbeta$ that does not depend on $\omega$. Therefore, we need to show that
\begin{align*}
\pi(\bbeta \mid \by_k) \prod_{i=1}^n\left[\frac{f_{\nu,c}(y_i/\mu_i)/\mu_i}{f_{\nu,c}(y_i)}\right]^{s_i + l_i}
& \propto\frac{\pi(\bbeta)\prod_{i=1}^n[f_{\nu,c}(y_i/\mu_i)/\mu_i]}{\prod_{i=1}^nf_{\nu,c}(y_i)^{s_i+l_i}}\\
&= \pi(\bbeta)\prod_{i=1}^n\left[\frac{ f_{\nu,c}(y_i/\mu_i)/\mu_i}{f_{\nu,c}(y_i)}\right]^{s_i} \prod_{i=1}^n\left[\frac{f_{\nu,c}(y_i/\mu_i)/\mu_i}{f_{\nu,c}(y_i)}\right]^{l_i} \prod_{i=1}^n\left[f_{\nu,c}(y_i/\mu_i)/\mu_i\right]^{k_i}\\
&\leq B \prod_{i=1}^n \left[\frac{f_{\nu,c}(y_i/\mu_i)/\mu_i}{f_{\nu,c}(y_i)}\right]^{s_i} \prod_{i=1}^n\left[\frac{f_{\nu,c}(y_i/\mu_i)/\mu_i}{f_{\nu,c}(y_i)}\right]^{l_i} \prod_{i=1}^n\left[f_{\nu,c}(y_i/\mu_i)/\mu_i\right]^{k_i}\\
&= g(\bbeta) \, h(\bbeta, \omega),
\end{align*}
with $g$ an integrable function and $h$ a bounded function, where we used that $\pi(\bbeta) \leq B$ for all $\bbeta$ with $B$ a positive constant. The functions $g$ and $h$ are defined below.

Under the assumptions of \autoref{thm:robustness}, we know that there are at least $\lceil\lambdal / \lambdar\rceil(l + s) + 2p-1$ non-outliers in the data set. Without loss of generality, assume that the first $\lceil\lambdal / \lambdar\rceil(l + s) + 2p-1$ points are non-outliers, that is $k_1, \ldots, k_{\lceil\lambdal / \lambdar\rceil(l + s) + 2p-1} =1$.

\textbf{Step 1.} We first choose $p$ points among the non-outliers. Without loss of generality, we choose $(\bx_1,y_1),\ldots,(\bx_p,y_p)$. We show that $g$ defined as $g(\bbeta): = B \prod_{i=1}^p f_{\nu,c}(y_i/\mu_i)/\mu_i$ is integrable. We have, similarly as in the proof of \autoref{prop:proper},
\begin{align*}
\int_{\R^p}B\prod_{i=1}^p\frac{f_{\nu,c}(y_i/\exp(\bx_i^T\bbeta))}{\exp(\bx_i^T\bbeta)}\,\d\bbeta &= B\Bigg|\det\left(\begin{matrix}
\bx_1^T \\
\vdots \\
\bx_p^T
\end{matrix}\right)\Bigg|^{-1}\prod_{i=1}^p\frac{1}{y_i}\prod_{i=1}^p\int_{0}^{\infty}f_{\nu,c}(z_i)\,\d z_i\\
&=B\Bigg|\det\left(\begin{matrix}
\bx_1^T \\
\vdots \\
\bx_p^T
\end{matrix}\right)\Bigg|^{-1}\prod_{i=1}^p\frac{1}{y_i}<\infty,
\end{align*}
where we use the change of variables $z_i=y_i/\exp(\bx_i^T\bbeta)$, for $i=1, \ldots, p$. The determinant term is different from 0 because $\bx_1, \ldots, \bx_p$ are linearly independent (because the covariates are continuous). Given that these $p$ observations are non-outlying, $\prod_{i=1}^p\frac{1}{y_i}$ is bounded and independent of $\omega$.

\textbf{Step 2.} We show that the rest of the product, i.e.
\begin{align*}
  h(\bbeta, \omega):=\prod_{i=(\lambdal / \lambdar)(l + s) + 2p}^n\left[\frac{ f_{\nu,c}(y_i/\mu_i)/\mu_i}{f_{\nu,c}(y_i)}\right]^{s_i}\prod_{i=(\lambdal / \lambdar)(l + s) + 2p}^n\left[\frac{ f_{\nu,c}(y_i/\mu_i)/\mu_i}{f_{\nu,c}(y_i)}\right]^{l_i}\prod_{i=p+1}^n\left[f_{\nu,c}(y_i/\mu_i)/\mu_i\right]^{k_i}
\end{align*}
is bounded, and that the bound does not depend on neither $\bbeta$ nor $\omega$.

In order to show this, we split the domain of $\bbeta$. Doing so will allow for technical arguments yielding bounds that do not depend on neither $\bbeta$ nor $\omega$. Before presenting the precise split of the domain, we provide intuition of why it is useful to proceed like that. Consider $i$ with $l_i = 1$. The main difficulty in bounding
\[
\frac{f_{\nu,c}(y_i/\mu_i)/\mu_i}{f_{\nu,c}(y_i)}
\]
essentially resides in dealing with the term $f_{\nu,c}(y_i)$ in the denominator because it is small. The case $i$ with $l_i = 1$ is the easiest to provide intuition. The case $i$ with $s_i = 1$ is analogous but it is more difficult to provide intuition. The goal is thus essentially to get rid of $f_{\nu,c}(y_i)$. When $\mathbf{x}_i^T\boldsymbol\beta \leq \log(\omega) / 2$, say, $y_i/\mu_i =b_i\omega/\exp(\mathbf{x}_i^T\boldsymbol\beta) \geq b_i \sqrt{\omega}$ is large and, for any fixed $c$ and $\nu$, a corollary of \autoref{prop:limit_PDF} can be used to bound
\[
\frac{f_{\nu,c}(y_i/\mu_i)/\mu_i}{f_{\nu,c}(y_i)}.
\]
 When $\mathbf{x}_i^T\boldsymbol\beta > \log(\omega) / 2$, we are not guaranteed that $y_i/\mu_i$ is large and thus cannot use the PDF term of the outlier to bound $1/f_{\nu,c}(y_i)$. We thus have to resort to non-outliers. With non-outliers, we consider $y_j$ as fixed, and it is only when $1/\exp(\mathbf{x}_j^T\boldsymbol\beta)$ is large that the PDF term $f_{\nu,c}(y_j/\mu_j)/\mu_j$ can be used to bound $1/f_{\nu,c}(y_i)$.
 The strategy used below to deal with the situation is to divide the parameter space  in mutually exclusive areas for which we know exactly in which case we are: either we can use the outlier PDF term to bound $1/f_{\nu,c}(y_i)$ or not; in the latter case, we know that we have sufficiently non-outliers that can be used to bound all terms $1/f_{\nu,c}(y_i)$. To have a precise control over the number of non-outliers that can be used, we prove that when we cannot use PDF terms of outliers to bound $1/f_{\nu,c}(y_i)$, we have a maximum of $p - 1$ non-outliers that cannot be used to that job either. Using that $k\geq \lceil\lambdal / \lambdar\rceil(l + s) + 2p-1$, we know that at least $\lceil\lambdal / \lambdar\rceil(l + s)$ non-outlying points can be used to bound the terms $1/f_{\nu,c}(y_i)$, which will be shown to be sufficient (recall that $p$ non-outlying points have already been used to obtain a integrable function).

Let us now continue with the formal proof and present how we split the domain of $\bbeta$:
\begin{align*}
    &\R^p=\left[\cap_i\mathcal{O}_i^\mathsf{c}\right]\cup\left[\cup_i(\mathcal{O}_i \cap(\cap_{i_1}\mathcal{F}_{i_1}^\mathsf{c}))\right] \cup\left[\cup_{i,i_1}(\mathcal{O}_i\cap\mathcal{F}_{i_1}\cap(\cap_{i_2\neq i_1}\mathcal{F}_{i_2}^\mathsf{c}))\right]\\
    &\cup\ldots\cup\left[\cup_{i,i_1,\ldots,i_{p-1}(i_j\neq i_s, \forall i_j, i_s\ \text{s.t.} j\neq s)}\left(\mathcal{O}_i\cap\mathcal{F}_{i_1}\cap\ldots\cap\mathcal{F}_{i_{p-1}}\cap\left(\cap_{i_p\neq i_1,\ldots, i_{p-1}}\mathcal{F}_{i_p}^\mathsf{c}\right)\right)\right]\\
    &\qquad\cup\left[\cup_{i,i_1,\ldots,i_p(i_j\neq i_s, \forall i_j, i_s\ \text{s.t.} j\neq s)}\left(\mathcal{O}_i\cap\mathcal{F}_{i_1}\cap\ldots\mathcal{F}_{i_p}\right)\right],
\end{align*}
where \begin{align*}
   &\mathcal{O}_i:=
\begin{cases}
    \left\{\bbeta:\log(b_i\omega)-\bx_i^T\bbeta<\log(\omega)/2\right\}& \text{if } i \in\mathcal{I_L},\\
    \left\{\boldsymbol{\beta}: \bx_i^T\bbeta-\log(1/b_i \omega) < \log(\omega) / 2\right\} & \text{if } i \in\mathcal{I_S},
\end{cases}\\
   &\mathcal{F}_i:=\left\{\boldsymbol{\beta}:|\bx_i^T \bbeta| < \log(\omega)/\gamma\right\}\ \text{if }\ i\in \mathcal{I_F},
\end{align*}
with $\mathcal{I_L}$, $\mathcal{I_S}$, and $\mathcal{I_F}$ defined as follows:
\begin{align*}
   \mathcal{I_L}&:=\{i:i\in\{\lceil\lambdal / \lambdar\rceil(l + s) + 2p,\ldots,n\}\ \text{and}\ l_i=1\},\\
   \mathcal{I_S}&:=\{i:i\in\{\lceil\lambdal / \lambdar\rceil(l + s) + 2p,\ldots,n\}\ \text{and}\ s_i=1\},\\
   \mathcal{I_F}&:=\{p+1,\ldots,\lceil\lambdal / \lambdar\rceil(l + s) + 2p-1\},
\end{align*}
$\gamma$ being a positive constant that will be defined.

Remember that the first $p$ points, which are non-outliers, have already been used for the purpose of integration in Step 1. Thus, the index of each remaining non-outliers is greater than or equal to $p+1$.

The set $\mathcal{O}_i$ represents the hyperplanes $\bx_i^T\bbeta$ characterized by the different values of $\bbeta$ satisfying $\log(b_i\omega)-\bx_i^T\bbeta<\log(\omega)/2$ for $i\in\mathcal{I_L}$, and $\log(b_i/\omega)-\bx_i^T\bbeta<\log(\omega)/2$ for $i\in\mathcal{I_S}$. The points $(\bx_i, \log(b_i\omega))$ and $(\bx_i, \log(b_i/\omega))$ can be seen as log transformations of large outliers and of small outliers, respectively, given that $\omega\rightarrow\infty$.

Now we claim that $\mathcal{O}_i\cap\mathcal{F}_{i_1}\dots\cap\mathcal{F}_{i_p}=\varnothing$ for all $i,i_1,\ldots,i_p$ with $i_j\neq i_s, \forall i_j,i_s$ such that $j\neq s$. To prove this, we use the fact that $\bx_i$ (a vector of dimension $p$) can be expressed as a linear combination of $\bx_{i_1}, \ldots, \bx_{i_p}$. This is true because all explanatory variables are continuous, therefore the space spanned by the vectors $\bx_{i_1}, \ldots, \bx_{i_p}$ has dimension $p$. As a result, if $\bbeta\in \mathcal{F}_{i_1}\cap\ldots\cap\mathcal{F}_{i_p}$ and $\bx_i=\sum_{s=1}^p a_s\bx_{i_s}$, for some $a_1, \ldots, a_p\in\R$, and
\begin{itemize}
    \item if $i\in\mathcal{I_L}$,
    \begin{align*}
        \log(b_i\omega)-\bx_i^T\bbeta = \log(b_i\omega)-\left(\sum_{s=1}^p a_s\bx_{i_s}\right)^T\bbeta & \overset{a}{\geq}\log(\omega)-\sum_{s=1}^pa_s\bx_{i_s}^T\bbeta\\
        &\overset{b}{>} \log(\omega)-\frac{\log(\omega)}{\gamma}\sum_{s=1}^p a_s \cr
        & \overset{c}{\geq} \log(\omega)/2;\\
    \end{align*}
    \item if $i\in\mathcal{I_S}$,
    \begin{align*}
        \log(1/b_i\omega)-\bx_i^T\bbeta&=\log(1/b_i\omega)-\left(\sum_{s=1}^p a_s\bx_{i_s}\right)^T\bbeta \cr
        &<-\log(\omega)-\sum_{s=1}^pa_s\bx_{i_s}^T\bbeta\\
        &\overset{d}{<}-\log(\omega)+\left(\frac{\log(\omega)}{\gamma}\sum_{s=1}^p a_s\right)\overset{e}{\leq}-\log(\omega)/2.
    \end{align*}
\end{itemize}
In Step $a$, we use that $b_i\geq 1$ and we simplify the form of the linear combination. In Step $b$, because $\bbeta\in\mathcal{F}_{i_1}\cap\ldots\cap\mathcal{F}_{i_p}$, we have $\bx_i^T\bbeta<\log(\omega)/\gamma$ for all $i\in\{i_1,\ldots, i_p\}$. Thus, $-\sum_{s=1}^p a_s\bx^T_{i_s}\bbeta > -(\log(\omega)/\gamma)\sum_{s=1}^p a_s$. In Step $c$, we define the constant $\gamma$ such that $\gamma\geq 2\sum_{s=1}^p a_s$ (we define $\gamma$ such that it satisfies this inequality for any combination of $i$ and $i_1, \ldots i_p$; without loss of generality, we consider that $\gamma \geq 1$). The proof is analogous in the case where $i\in\mathcal{I_S}$. In Step $d$, we use the fact that $\bx_{i}^T\bbeta>-\log(\omega)/\gamma$, thus $-\sum_{s=1}^p a_s \bx_{i_s}^T\bbeta < \log(\omega)/\gamma$. In Step $e$, we use that $\gamma$ is such that $\gamma\geq 2\sum_{s=1}^p a_s$.

Therefore, we have that if $\bbeta\in\mathcal{F}_{i_1}\cap\ldots\cap\mathcal{F}_{i_p}$, then $\bbeta\notin\mathcal{O}_i$. This proves that $\mathcal{O}_i\cap\mathcal{F}_{i_1}\cap\ldots\cap\mathcal{F}_{i_p}=\varnothing$ for all $i,i_1,\ldots, i_p$ with $i_j\neq i_s, \forall i_j,i_s$ such that $j\neq s$. This result in turn implies that the domain of $\bbeta$ can be written as
\begin{align*}
    &\R^p=\left[\cap_i\mathcal{O}_i^\mathsf{c}\right]\cup\left[\cup_i(\mathcal{O}_i\cap(\cap_{i_1}\mathcal{F}_{i_1}^\mathsf{c}))\right] \cup\left[\cup_{i,i_1}(\mathcal{O}_i\cap\mathcal{F}_{i_1}\cap(\cap_{i_2\neq i_1}\mathcal{F}_{i_2}^\mathsf{c}))\right]\\
    &\cup\ldots\cup\left[\cup_{i,i_1,\ldots,i_{p-1}(i_j\neq i_s \text{s.t.} j\neq s)}\left(\mathcal{O}_i\cap\mathcal{F}_{i_1}\cap\ldots\cap\mathcal{F}_{i_{p-1}}\cap\left(\cap_{i_p\neq i_1,\ldots, i_{p-1}}\mathcal{F}_{i_p}^\mathsf{c}\right)\right)\right].
\end{align*}
This decomposition of $\R^p$ consists of $1+\sum_{i=0}^{p-1}\binom{\lceil\lambdal / \lambdar\rceil(l + s)+p-1}{i}$ mutually exclusive sets given by $\cap_i\mathcal{O}_i^\mathsf{c}$, $\cup_i(\mathcal{O}_i\cap(\cap_{i_1}\mathcal{F}_{i_1}^\mathsf{c}))$, $\cup_i(\mathcal{O}_i\cap\mathcal{F}_{i_1}\cap(\cap_{i_2\neq i_1}\mathcal{F}_{i_2}^\mathsf{c}))$ for $i_1\in\mathcal{I_F}$, and so on.

We find an upper bound on each of these subsets. Because there is a finite number of subsets, we will be able to bound $h$ by the maximal bound. Recall that
\begin{align*}
   h(\bbeta, \omega)&=\prod_{i=\lceil\lambdal / \lambdar\rceil(l + s) + 2p}^n\left[\frac{ f_{\nu,c}(y_i/\mu_i)/\mu_i}{f_{\nu,c}(y_i)}\right]^{l_i}\prod_{i=\lceil\lambdal / \lambdar\rceil(l + s) + 2p}^n\left[\frac{ f_{\nu,c}(y_i/\mu_i)/\mu_i}{f_{\nu,c}(y_i)}\right]^{s_i}\prod_{i=p+1}^n\left[f_{\nu,c}(y_i/\mu_i)/\mu_i\right]^{k_i}\\
   &:=A\times B \times C,
\end{align*}
where $A$, $B$ and $C$ represent the product on the left, the product in the middle and the product on the right, respectively.

\textbf{Situation 1.} If $\bbeta\in\cap_i\mathcal{O}_i^\mathsf{c}$, we have
\begin{align*}
A  \overset{a}{=} \prod_{i=\lceil\lambdal / \lambdar\rceil(l + s) + 2p}^n\left(\frac{\log(y_i)}{\log(y_i)-\log(\mu_i)}\right)^{l_i\lambdar} &\overset{b}{\leq} \prod_{i=\lceil\lambdal / \lambdar\rceil(l + s) + 2p}^n\left(\frac{\log(b_i)+\log(\omega)}{\log(\omega)/2}\right)^{l_i\lambdar} \\
& = \prod_{i=\lceil\lambdal / \lambdar\rceil(l + s) + 2p}^n\left(2+2\frac{\log(b_i)}{\log(\omega)}\right)^{l_i\lambdar} \overset{c}{\leq}3^{l\lambdar}\overset{d}{<} \infty.
\end{align*}
In Step $a$, if $\bbeta\in\cap_i\mathcal{O}_i^\mathsf{c}$, it means that  $y_i/\mu_i = b_i\omega/\mu_i\geq\sqrt{\omega}$ for $i \in \mathcal{I_L}$. We are thus sure that $b_i\omega/\mu_i$ and $b_i\omega$ are both on the right tail of $f_{\nu,c}$, that is $f_{\text{right}}$. In Step $b$, we use that $\log(b_i\omega)-\bx_i^T\bbeta\geq\log(\omega)/2$. In Step $c$, we have $2\log(b_i)/\log(\omega)\leq1$ for large enough $ \omega$. In Step $d$, for any fixed $\nu$, $3^{l \lambda_r}$ is finite given that $\lambda_r$, which depends only on $c$ and $\nu$, is finite.

For $B$, we have
\begin{align*}
B &\overset{a}{=} \prod_{i=\lceil\lambdal / \lambdar\rceil(l + s) + 2p}^n\left(\frac{\log(y_i)}{\log(y_i)-\log(\mu_i)}\right)^{s_i\lambdal} \cr
&\overset{b}{\leq} \prod_{i=\lceil\lambdal / \lambdar\rceil(l + s) + 2p}^n\left(\frac{-\log(b_i)-\log(\omega)}{-\log(\omega)/2}\right)^{s_i\lambdal}\\
& = \prod_{i=\lceil\lambdal / \lambdar\rceil(l + s) + 2p}^n\left(2+2\frac{\log(b_i)}{\log(\omega)}\right)^{s_i\lambdal} < 3^{s\lambdal}\overset{c}{<} \infty.
\end{align*}
The proof is analogous. In Step $a$, if $\bbeta\in\cap_i\mathcal{O}_i^\mathsf{c}$, it means that  $y_i/\mu_i = (1/b_i \omega)/\mu_i\leq1/\sqrt{\omega}$ for $i \in \mathcal{I_S}$. We are thus sure that $(1/b_i \omega)/\mu_i$ and $1/ b_i \omega$ are both on the left tail of $f_{\nu,c}$, that is $f_{\text{left}}$. In Step $b$, we use that $\log(y_i)-\bx_i^T\bbeta\leq-\log(\omega)/2$. In Step $c$, $3^{s\lambdal}$ is finite given that $\lambdal$, depending only on $c$ and $\nu$, is finite.

For $C$, we have
\begin{align*}
   \prod_{i=p+1}^n\left[f_{\nu,c}(y_i/\mu_i)/\mu_i\right]^{k_i}\overset{a}{\leq}\prod_{i=p+1}^n\left[\frac{(\e^{-1}\nu)^\nu}{y_i\Gamma(\nu)}\right]^{k_i}<\infty.
\end{align*}
In Step $a$, according to \autoref{lemma1}, $f_{\nu,c}(y/\mu)/\mu$ is upper bounded by $(\e^{-1}\nu)^\nu/(y\Gamma(\nu))$ for any value of $\mu$, when $y, \nu$ and $c$ are considered fixed.

We conclude that in this situation, $A\times B\times C$ is bounded.

\textbf{Situation 2.} Consider now that $\bbeta$ belongs to one of the $\sum_{i=1}^{p-1}\binom{\lceil\lambdal / \lambdar\rceil(l + s)+p-1}{i}$ mutually exclusive sets $\cup_i(\mathcal{O}_i\cap(\cap_{i_1}\mathcal{F}_{i_1}^\mathsf{c}))$, $\cup_{i}(\mathcal{O}_i\cap\mathcal{F}_{i_1}\cap(\cap_{i_2\neq i_1}\mathcal{F}_{i_2}^\mathsf{c}))$ for $i_1\in \mathcal{I_F}$, etc.

We analyse $A$, $B$ and $C$ separately. We have
\begin{align}
\label{partA}
A &=  \prod_{i=\lceil\lambdal / \lambdar\rceil(l + s) + 2p}^n\left[\frac{1}{\mu_i} f_{\nu,c}(y_i/\mu_i)\right]^{l_i}\prod_{i=\lceil\lambdal / \lambdar\rceil(l + s) + 2p}^n\left[\frac{1}{f_{\nu,c}(y_i)}\right]^{l_i}\nonumber\\
&\propto\prod_{i=\lceil\lambdal / \lambdar\rceil(l + s) + 2p}^n\left[\frac{1}{\mu_i} f_{\nu,c}(y_i/\mu_i)\right]^{l_i}\prod_{i=\lceil\lambdal / \lambdar\rceil(l + s) + 2p}^n\left[y_i\left(\log(y_i)\right)^{\lambdar}\right]^{l_i}\nonumber\\
& = \prod_{i=\lceil\lambdal / \lambdar\rceil(l + s) + 2p}^n\left[\frac{y_i}{\mu_i} f_{\nu,c}(y_i/\mu_i)\right]^{l_i}\prod_{i=\lceil\lambdal / \lambdar\rceil(l + s)  + 2p}^n\left(\log(b_i\omega)\right)^{l_i\lambdar}\nonumber\\
& \overset{a}{\leq} \left(\frac{(\e^{-1}\nu)^\nu}{\Gamma(\nu)}\right)^l\prod_{i=\lceil\lambdal / \lambdar\rceil(l + s) + 2p}^n\left(\log(b_i\omega)\right)^{l_i\lambdar}.
\end{align}
In Step $a$, we can deduce from \autoref{lemma1} that, viewed as a function of $\mu$, $(y /\mu) f_{\nu,c}(y / \mu)$ is bounded by $(\e^{-1}\nu)^\nu/\Gamma(\nu)$, for all $\nu, c$, and $y$. Notice that the only part that depends on $\omega$ in \eqref{partA} is a product of terms $\log(b_i\omega)^{\lambdar}$, one for each large outlier.

Analogously, for $B$, we have
\begin{align}
\label{partB}
 B &=  \prod_{i=\lceil\lambdal / \lambdar\rceil(l + s) + 2p}^n\left[\frac{1}{\mu_i} f_{\nu,c}(y_i/\mu_i)\right]^{s_i}\prod_{i=\lceil\lambdal / \lambdar\rceil(l + s) + 2p}^n\left[\frac{1}{f_{\nu,c}(y_i)}\right]^{s_i}\nonumber\\
&=\prod_{i=\lceil\lambdal / \lambdar\rceil(l + s) + 2p}^n\left[\frac{1}{\mu_i} f_{\nu,c}(y_i/\mu_i)\right]^{s_i}\prod_{i=\lceil\lambdal / \lambdar\rceil(l + s) + 2p}^n\left[\frac{y_i}{\zl}\left(\frac{\log(y_i)}{\log(\zl)}\right)^{\lambdal}\right]^{s_i}\nonumber\\
& \overset{a}{\propto} \prod_{i=\lceil\lambdal / \lambdar\rceil(l + s) + 2p}^n\left[\frac{y_i}{\mu_i} f_{\nu,c}(y_i/\mu_i)\right]^{s_i}\prod_{i=\lceil\lambdal / \lambdar\rceil(l + s) + 2p}^n\left(-\log(1/b_i \omega)\right)^{s_i\lambdar}\nonumber\\
&\overset{b}{\leq} \left(\frac{(\e^{-1}\nu)^\nu}{\Gamma(\nu)}\right)^s\prod_{i=\lceil\lambdal / \lambdar\rceil(l + s)+2p}^n\left(\log(b_i \omega)\right)^{s_i\lambdal}.
\end{align}
In Step $a$, we change the sign of $\log(1/b_i \omega)$ because $\log(\zl)<0$. In Step $b$, we bound $(y /\mu) f_{\nu,c}(y / \mu)$ by $(\e^{-1}\nu)^\nu/\Gamma(\nu)$. Notice that the only part that depends on $\omega$ in \eqref{partB} is a product of terms $\log(b_i\omega)^{\lambdal}$, one for each small outlier.

We now turn to $C$. We have shown previously that in any of the sets to which $\bbeta$ can belong, there are at most $p-1$ non-outlying points such that $|\bx_i^T\bbeta|< \log(\omega)/\gamma$. The case where that upper bound is attained is that where $\bbeta\in\cup_i(\mathcal{O}_i\cap\mathcal{F}_{i_1}\cap\ldots\cap\mathcal{F}_{i_{p-1}}\cap(\cap_{i_p\neq i_1,\ldots,i_{p-1}}\mathcal{F}_{i_p}^\mathsf{c}))$. Without loss of generality, suppose that all non-outlying points such that $|\bx_i^T \bbeta| < \log(\omega)/\gamma$ have index $i$ belonging to $\{p+1, ..., 2p-1\}$. There are thus at least $\lceil\lambdal / \lambdar\rceil(l + s)+p-1-(p-1)=\lceil\lambdal / \lambdar\rceil(l + s)$ remaining non-outliers such that $|\bx_i^T\bbeta|\geq \log(\omega)/\gamma$, with $i=2p,\ldots, \lceil\lambdal / \lambdar\rceil(l + s)+2p-1$. Therefore, for these points, we are sure that $y_i/\mu_i$ is on the extremities of $f_{\nu,c}$, as either $y_i/\mu_i = \exp(\log(a_i) - \bx_i^T\bbeta)) \geq a_i\omega^{1/\gamma} > \zr$ or $y_i/\mu_i  \leq a_i/ \omega^{1/\gamma} < \zl$, for large enough $\omega$.

In the situation where $y_i/\mu_i$ is on right tail, that is $y_i/\mu_i \geq a_i\omega^{1/\gamma}$, we have
\begin{align}
\label{eq:boundr}
    \frac{1}{\mu_i} f_{\nu,c}(y_i/\mu_i)\propto\frac{1}{a_i}\left(\frac{\log(\zr)}{\log(a_i)-\log(\mu_i)}\right)^{\lambdar}&\propto\left(\frac{1}{\log(a_i)-\log(\mu_i)}\right)^{\lambda_{\text{r}}} \nonumber \\
   &\overset{a}{\leq}\left(\frac{1}{\log(a_i)+\log(\omega)/\gamma}\right)^{\lambdar}
   \overset{b}{\leq}\left(\frac{2\gamma}{\log(\omega)}\right)^{\lambdar}.
\end{align}
In Step $a$, we have that $\mu_i\leq \omega^{-1/\gamma}$ given that $y_i/\mu_i \geq a_i\omega^{1/\gamma}$, thus $-\log(\mu_i)\geq \log(\omega)/\gamma$. In Step $b$, we use the fact that $\log(a_i)\geq-\log(\omega)/(2\gamma)$ for large enough $\omega$. We thus have $\log(a_i)+\log(\omega)/\gamma\geq \log(\omega)/(2\gamma)$.

In the situation where $y_i/\mu_i$ is on the left tail, that is $y_i/\mu_i \leq a_i/\omega^{1/\gamma}$, we have
\begin{align}
\label{eq:boundl}
\frac{1}{\mu_i} f_{\nu,c}(y_i/\mu_i)  \propto\frac{1}{a_i}\left(\frac{\log(\zl)}{\log(a_i)-\log(\mu_i)}\right)^{\lambdal}&\overset{a}{\propto}\left(\frac{1}{\log(\mu_i)-\log(a_i)}\right)^{\lambdal} \nonumber \\
   &\leq\left(\frac{1}{\log(\omega)/\gamma-\log(a_i)}\right)^{\lambdal}\overset{b}{\leq}\left(\frac{2\gamma}{\log(\omega)}\right)^{\lambdal}.
\end{align}
In Step $a$, we change the sign because $\log(\zl)<0$. In Step $b$, we use the fact that $\log(a_i)\leq\log(\omega)/(2\gamma)$ for large enough $\omega$.

The reason we consider these two cases is that we want to use densities of ``extreme non-outliers'', that is $f_{\nu,c}(y_j/\mu_j)/\mu_j$ with $j$ such that $\bbeta \in \mathcal{F}_j^\mathsf{c}$, to cancel each $\log(b_i \omega)$ at some power for $i\in\mathcal{I_L}\cup\mathcal{I_S}$ that appears in the bounds of $A$ and $B$ (recall \eqref{partA} and \eqref{partB}). As explained, there are at least $\lceil\lambdal / \lambdar\rceil(l + s)$ extreme non-outliers that can be used. However, the major problem here is that we do not know how many of those $\lceil\lambdal / \lambdar\rceil(l + s)$ extreme non-outliers are such that $y_j/\mu_j$ is on the right tail, and how many are such that $y_j/\mu_j$ is on the left tail, which depends on the value of $\bbeta$. We thus have to consider all possible scenarios, including the worst-case scenario. We now present clearly how we bound each $\log(b_i \omega)$ at some power for $i\in\mathcal{I_R}\cup\mathcal{I_S}$ by using the densities of extreme non-outliers, in all scenarios.

Let us first consider that $y_i$ is a large outlying observation, that is $i\in\mathcal{I_L}$. We take $\lceil\lambdal / \lambdar\rceil$ non-outliers among the $\lceil\lambdal / \lambdar\rceil(l+s)$ extreme non-outliers that are thus such that $\bbeta\in\mathcal{F}_j^\mathsf{c}$ for all of these. In other words, all these points are such that $|\bx_j^T\bbeta|\geq \log(\omega)/\gamma$. There are two possible cases.

\textbf{Case 1.} There is at least one point among the $\lceil\lambdal / \lambdar\rceil$ points such that $y_j/\mu_j\geq a_j\omega^{1/\gamma}$, implying that the density is evaluated on the right tail. In this case, we have
\begin{align*}
    \log(b_i\omega)^{\lambdar}\prod_{j=i_1}^{i_{\lceil\lambdal / \lambdar\rceil}}\frac{1}{\mu_j}f_{\nu,c}(y_j/\mu_j) &\overset{a}{\leq} \log(b_i\omega)^{\lambdar}\left(\frac{2\gamma}{\log(\omega)}\right)^{\lambdar}\\
    &=\left(\frac{2\gamma\log(b_i\omega)}{\log(\omega)}\right)^{\lambda_{\text{r}}}\overset{b}{\leq}\left(4\gamma\right)^{\lambda_{\text{r}}}\overset{c}{<}\infty.
\end{align*}
In Step $a$, we take one point such that $y_j/\mu_j\geq a_j\omega^{1/\gamma}$. We bound $f_{\nu,c}(y_j/\mu_j)/\mu_j$ by the bound presented in \eqref{eq:boundr}, and we bound the rest of the points by 1 using the bounds in \eqref{eq:boundr} and \eqref{eq:boundl}, given that $2 \gamma /\log(\omega) \leq 1$ for large enough $\omega$. In Step $b$, we have that $\log(b_i\omega)/\log(\omega) = (\log(b_i)+\log(\omega))/\log(\omega)\leq2$, as $\log(b_i)/\log(\omega)\leq 1$ for large enough $\omega$. In Step $c$, every term is a well-defined constant, the result is thus finite.

\textbf{Case 2.} No point among the $\lceil\lambdal / \lambdar\rceil$ points is such that $y_j/\mu_j\geq a_j\omega^{1/\gamma}$, implying that the density of every point is evaluated on the left tail. In this case, we have
\begin{align*}
    \log(b_i\omega)^{\lambdar}\prod_{j=i_1}^{i_{\lceil\lambdal / \lambdar\rceil}}\frac{1}{\mu_j}f_{\nu,c}(y_j/\mu_j)
    &\overset{a}{\leq}\log(b_i\omega)^{\lambdar}\left(\frac{2\gamma}{\log(\omega)}\right)^{\lceil\lambdal / \lambdar\rceil\lambdal}\\
    &=\left(\frac{2\gamma\log(b_i\omega)}{\log(\omega)}\right)^{\lambdar}\left(\frac{2\gamma}{\log(\omega)}\right)^{\lceil\lambdal / \lambdar\rceil\lambdal-\lambdar}\\
    &\overset{b}{\leq}\left(4\gamma\right)^{\lambdar}<\infty.
\end{align*}
In Step $a$, we bound every term $f_{\nu,c}(y_j/\mu_j)/\mu_j$ by the bound in \eqref{eq:boundl}. In Step $b$, given that $\lambda_{\text{l}}/\lambda_{\text{r}}\geq 1$ (recall \autoref{fig:lambdas}), we know that $\lceil\lambdal / \lambdar\rceil \lambdal \geq \lambdar$. We also use that $2\gamma/\log(\omega)\leq 1$, and $\log(b_i\omega)/\log(\omega) = (\log(b_i)+\log(\omega))/\log(\omega)\leq2$, as $\log(b_i)/\log(\omega)\leq 1$ for large enough $\omega$.

We showed that we can use the product of the densities of $\lceil\lambdal / \lambdar\rceil$ extreme non-outliers to offset $\log(b_i \omega)^{\lambdar}$ for $i\in \mathcal{I_L}$, so that the product
\[
 \log(b_i\omega)^{\lambdar}\prod_{j=i_1}^{i_{\lceil\lambdal / \lambdar\rceil}}\frac{1}{\mu_j}f_{\nu,c}(y_j/\mu_j)
\]
is bounded. The approach is analogous for small outliers, that is for $i\in\mathcal{I_S}$. A difference is that, in Case 1, we instead consider that there is at least one point among the $\lceil\lambdal / \lambdar\rceil$ points such that $y_i/\mu_i \leq a_i/\omega^{1/\gamma}$, implying that the density is evaluated on the left tail. Also, in Case 2, we instead consider that no point among the $\lceil\lambdal / \lambdar\rceil$ points is such that $y_j/\mu_j\leq a_j\omega^{1/\gamma}$, implying that the density of every point is evaluated on the right tail. In this case, we have a product
\[
\left(\frac{2\gamma\log(b_i\omega)}{\log(\omega)}\right)^{\lambdal}\left(\frac{2\gamma}{\log(\omega)}\right)^{\lceil\lambdal / \lambdar\rceil\lambdar-\lambdal}
\]
that appears and we bound the right term by 1 as above using that $\lceil\lambdal / \lambdar\rceil\lambdar-\lambdal \geq (\lambdal / \lambdar)\lambdar-\lambdal = 0$. We therefore know that we can offset $\log(b_i \omega)$ at some power for $i\in \mathcal{I_L}\cup \mathcal{I_S}$ using the product of the densities of $\lceil\lambdal / \lambdar\rceil$ extreme non-outliers, in all scenarios.

Therefore, if we multiply now $A$, $B$ and $C$, we obtain that
\begin{align*}
    A\times B\times C&\overset{a}{\leq} \left[\left(\frac{(\e^{-1}\nu)^\nu}{\Gamma(\nu)}\right)^{s+l}\prod_{i=\lceil\lambdal / \lambdar\rceil(l + s)+2p}^n\left(\log(b_i\omega)\right)^{l_i\lambdar+s_i\lambdal}\right]\prod_{i=p+1}^n\left(\frac{f_{\nu,c}(y_i/\mu_i)}{\mu_i}\right)^{k_i}\\
    & \overset{b}{\leq}\left[ \left(\frac{(\e^{-1}\nu)^\nu}{\Gamma(\nu)}\right)^{s+l}\prod_{i=\lceil\lambdal / \lambdar\rceil(l + s)+2p}^n\left(\log(b_i\omega)\right)^{l_i\lambdar+s_i\lambdal}\right]\prod_{i=p+1}^{2p-1}\left(\frac{(\e^{-1}\nu)^\nu}{y_i\Gamma(\nu)}\right)\prod_{i=2p}^n\left(\frac{f_{\nu,c}(y_i/\mu_i)}{\mu_i}\right)^{k_i}\\
    & \overset{c}{\leq}\left[ \left(\frac{(\e^{-1}\nu)^\nu}{\Gamma(\nu)}\right)^{s+l}\right]\left[\prod_{i=p+1}^{2p-1}\left(\frac{(\e^{-1}\nu)^\nu}{y_i\Gamma(\nu)}\right)\right]\left[\prod_{i=\lceil\lambdal / \lambdar\rceil(l + s)+2p}^n\left(\frac{f_{\nu,c}(y_i/\mu_i)}{\mu_i}\right)^{k_i}\right]\left[\left(4\gamma\right)^{l\lambdar+s\lambdal}\right]\\
    &\overset{d}{\leq}\left[\left(\frac{(\e^{-1}\nu)^\nu}{\Gamma(\nu)}\right)^{n-p-\lceil\lambdal / \lambdar\rceil(l + s)}\left(\prod_{i=p+1}^{2p-1}\frac{1}{y_i}\right)\prod_{i=\lceil\lambdal / \lambdar\rceil(l + s)+2p}^{n}\left(\frac{1}{y_i}\right)^{k_i}\right]\left[\left(4\gamma\right)^{l\lambdar+s\lambdal}\right]\\
    &\overset{e}{=}\left[\left(\frac{(\e^{-1}\nu)^\nu}{\Gamma(\nu)}\right)^{n-p-\lceil\lambdal / \lambdar\rceil(l + s)}\left(\prod_{i=p+1}^{2p-1}\frac{1}{a_i}\right)\prod_{i=\lceil\lambdal / \lambdar\rceil(l + s)+2p}^{n}\left(\frac{1}{a_i}\right)^{k_i}\right]\left[\left(4\gamma\right)^{l\lambdar+s\lambdal}\right]\\
    &\overset{f}{<}\infty.
\end{align*}

In Step $a$, we bound $A$ and $B$ by expressions that are previously shown (see \eqref{partA} and \eqref{partB}). In Step $b$, we bound $f_{\nu,c}(y_i/\mu_i)/\mu_i$, for $i=p+1,\ldots,2p-1$, by $(\e^{-1}\nu)^\nu/(y_i\Gamma(\nu))$ (see \autoref{lemma1}). Recall that these are non-outliers such that $|\bx_i^T\bbeta|<\log(\omega)/\gamma$. In Step $c$, we simplify each $\log(b_i\omega)^{\lambdar}$ and $\log(b_i\omega)^{\lambdal}$ by multiplying each of these terms by the product of densities of $\lceil\lambdal / \lambdar\rceil$ extreme non-outliers and by bounding the resulting product by $\left(4\gamma\right)^{\lambdar}$ or $\left(4\gamma\right)^{\lambdal}$, as we have explained earlier. In Step $d$, we bound the rest of the non-outlier terms $f_{\nu,c}(y_i/\mu_i)/\mu_i$ by $(\e^{-1}\nu)^\nu/(y_i\Gamma(\nu))$. If we consider all the $k$ non-outliers, $p$ non-outliers were used for the change of variables and to integrate over $\bbeta$ at the beginning, we bounded $p - 1$ terms of non-extreme non-outliers, and $\lceil\lambdal / \lambdar\rceil(l + s)$ were used to offset the outliers. After Steps (a)-(c), there are thus possibly still $k-p-(p-1)-\lceil\lambdal / \lambdar\rceil(l+s)=k-2p-\lceil\lambdal / \lambdar\rceil(l+s)+1$ non-outliers left, that need to be considered, which is what was done in Step d. The condition of this theorem $k\geq \lceil\lambdal / \lambdar\rceil(l+s)+2p-1$ is to make sure that we have enough non-outlying points to bound the whole product. The proof is simpler and still valid if there is no outlier left after Steps (a)-(c); Step (d) is simply skipped.  In Step $e$, every $y_i$ in the expression is a non-outlying observation, and is thus equal to $a_i$. Finally, in Step $f$, the whole expression is finite given that all terms are constants.

Therefore, $h(\bbeta, \omega)=A\times B\times C$ is bounded. This completes the proof of Result (a).

We now turn to the proof of Result (b). We have that
\begin{align*}
    \pi(\bbeta\mid\by) = \pi(\bbeta\mid\by_k) \, \frac{m(\by_k)\prod_{i=1}^n f_{\nu,c}(y_i)^{s_i+l_i}}{m(\by)}\prod_{i=1}^n\left[\frac{f_{\nu,c}(y_i/\mu_i)/\mu_i}{f_{\nu,c}(y_i)}\right]^{s_i+l_i},
\end{align*}
and
\begin{align*}
\frac{m(\by_k)\prod_{i=1}^n f_{\nu,c}(y_i)^{s_i+l_i}}{m(\by)}\prod_{i=1}^n\left[\frac{f_{\nu,c}(y_i/\mu_i)/\mu_i}{f_{\nu,c}(y_i)}\right]^{s_i+l_i} \rightarrow 1,
\end{align*}
as $\omega \rightarrow\infty$, for any $\bbeta\in\R^p$, using Result (a) and \autoref{prop:limit_PDF}. We also showed that $\pi(\beta\mid \by_k)$ is proper. This concludes the proof of Result (b).

We finish with the proof of Result (c). This result is a direct consequence of Result (b) using Scheffé's theorem \citep{scheffe1947useful}.
\end{proof}

\section{Supplementary material for \autoref{sec:case_study}}\label{sec:supp_case_study}

With gamma GLM, the posterior density $\pi(\, \cdot \,, \cdot \mid \by)$ is such that
\[
 \pi(\bbeta, \nu \mid \by) \propto \pi(\bbeta, \nu) \prod_{i = 1}^n \frac{1}{\mu_i} f_\nu\left(\frac{y_i}{\mu_i}\right),
\]
with
\[
 f_\nu(z) := \frac{\exp(-\nu z) \, z^{\nu - 1} \nu^\nu}{\Gamma(\nu)}, \quad z > 0.
\]
The posterior estimates are computed using Hamiltonian Monte Carlo. We apply a change of variables $\eta := \log \nu$ so that the resulting posterior distribution has $\R^{p+1}$ for support. The resulting posterior density is such that
\[
 \pi(\bbeta, \eta \mid \by) \propto \e^{\eta} \, \pi(\e^{\eta}) \prod_{i = 1}^n \frac{1}{\mu_i} f_{\e^{\eta}}\left(\frac{y_i}{\mu_i}\right);
\]
recall the prior on the parameters. The log density is such that (if we omit the normalization constant)
\[
 \log \pi(\bbeta, \eta \mid \by) = \eta + \log \pi( \e^{\eta}) + \sum_{i = 1}^n \log f_{\e^{\eta}}\left(\frac{y_i}{\mu_i}\right) - \log \mu_i.
\]
We have that
\[
 \frac{\partial}{\partial \eta} \log \pi( \e^{\eta}) = (\alpha - 1) - \frac{\e^\eta}{\theta},
\]
where $\alpha > 0$ and $\theta > 0$ are the shape and scale parameters of the prior distribution. Also,
\begin{align*}
 &\frac{\partial}{\partial \bbeta} \log f_{\e^{\eta}}\left(\frac{y_i}{\mu_i}\right) - \log \mu_i = \e^\eta \left(\frac{y_i}{\mu_i} - 1\right) \, \bx_i, \cr
 &\frac{\partial}{\partial \eta} \log f_{\e^{\eta}}\left(\frac{y_i}{\mu_i}\right) - \log \mu_i = \e^\eta \left(-\frac{y_i}{\mu_i} + \log\left(\frac{y_i}{\mu_i}\right) + \eta + 1 - \frac{\Gamma'(\e^\eta)}{\Gamma(\e^\eta)}\right),
\end{align*}
where $\Gamma'$ is the derivative of $\Gamma$.

With the proposed robust GLM, the posterior density $\pi(\, \cdot \,, \cdot \mid \by)$ (with the same prior as before) is such that
\[
 \pi(\bbeta, \nu \mid \by) \propto \pi(\nu) \prod_{i = 1}^n \frac{1}{\mu_i} f_{\nu, c}\left(\frac{y_i}{\mu_i}\right).
\]
We also use Hamiltonian Monte Carlo to compute the posterior estimates.

The posterior density resulting from the change of variables $\eta := \log \nu$ is such that
\[
 \pi(\bbeta, \eta \mid \by) \propto \e^{\eta} \, \pi(\e^{\eta}) \prod_{i = 1}^n \frac{1}{\mu_i} f_{\e^{\eta}, c}\left(\frac{y_i}{\mu_i}\right).
\]
The log density is such that (if we omit the normalization constant)
\[
 \log \pi(\bbeta, \nu \mid \by) = \eta + \log \pi( \e^{\eta}) + \sum_{i = 1}^n \log f_{\e^{\eta}, c}\left(\frac{y_i}{\mu_i}\right) - \log \mu_i.
\]

We have that
\[
 \frac{\partial}{\partial \bbeta} \log f_{\e^{\eta}, c}\left(\frac{y_i}{\mu_i}\right) - \log \mu_i = \begin{cases}
    \e^\eta \left(\frac{y_i}{\mu_i} - 1\right) \, \bx_i \quad \text{if} \quad \zl \leq \frac{y_i}{\mu_i} \leq \zr, \cr
    \frac{\lambdar}{\log(y_i / \mu_i)} \, \bx_i \quad \text{if} \quad \frac{y_i}{\mu_i} > \zr, \cr
    \frac{\lambdal}{\log(y_i / \mu_i)} \, \bx_i \quad \text{if} \quad \frac{y_i}{\mu_i} < \zl,
 \end{cases}
\]
and
\[
 \frac{\partial}{\partial \eta} \log f_{\e^{\eta}, c}\left(\frac{y_i}{\mu_i}\right) - \log \mu_i = \begin{cases}
   \e^\eta \left(-\frac{y_i}{\mu_i} + \log\left(\frac{y_i}{\mu_i}\right) + \eta + 1 - \frac{\Gamma'(\e^\eta)}{\Gamma(\e^\eta)}\right) \quad \text{if} \quad \zl \leq \frac{y_i}{\mu_i} \leq \zr, \cr
    - \frac{c}{2} \, \e^{\eta / 2}\left(1 + \frac{1}{1 + c \, \e^{-\eta / 2}} \right) + \e^{\eta}\left(\log(1 + c \, \e^{-\eta / 2}) +\eta - \frac{\Gamma'(\e^\eta)}{\Gamma(\e^\eta)}\right) \cr
      \quad + \lambdar' \left(\log(\log(\zr)) - \log\left(\log\left(\frac{y_i}{\mu_i}\right)\right)\right) - \frac{\lambdar}{\log(1 + c \, \e^{-\eta / 2})} \frac{c \, \e^{-\eta / 2}}{2(1 + c \, \e^{-\eta / 2})} \quad \text{if} \quad \frac{y_i}{\mu_i} > \zr, \cr
     \frac{c}{2} \, \e^{\eta / 2}\left(1 + \frac{1}{1 - c \, \e^{-\eta / 2}} \right) + \e^{\eta}\left(\log(1 - c \, \e^{-\eta / 2}) +\eta - \frac{\Gamma'(\e^\eta)}{\Gamma(\e^\eta)}\right) \cr
      \quad + \lambdal' \left(\log(-\log(\zl)) - \log\left(-\log\left(\frac{y_i}{\mu_i}\right)\right)\right) + \frac{\lambdal}{\log(1 - c \, \e^{-\eta / 2})} \frac{c \, \e^{-\eta / 2}}{2(1 - c \, \e^{-\eta / 2})} \quad \text{if} \quad \frac{y_i}{\mu_i} < \zl,
 \end{cases}
\]
where $\lambdar' := \frac{\partial}{\partial \eta} \lambdar$ and $\lambdal' := \frac{\partial}{\partial \eta} \lambdal$ which are evaluated numerically.

\section{Robust inverse Gaussian GLM}\label{sec:robust_inv_GLM}

In this section, we propose a heavy-tailed version of inverse Gaussian GLM. We proceed in a similar fashion as in \autoref{sec:proposed}: we present the model in \autoref{sec:model_inv} and theoretical properties in \autoref{sec:properties_inv}; the proofs of theoretical results are deferred to \autoref{sec:proofs_inv}.

\subsection{Model definition}\label{sec:model_inv}

The inverse Gaussian PDF is such that
\[
 f_{\mu, \nu}(y) = \sqrt{\frac{\nu}{2 \pi y^3}} \exp\left(-\frac{\nu (y - \mu)^2}{2 \mu^2 y}\right), \quad y > 0,
\]
where $\mu > 0$ is the mean parameter and $\nu > 0$ is the shape parameter, the latter corresponding to the inverse of the dispersion parameter. In a context of GLM, the link function that is the most commonly used is the log function; it will be used in the rest of the section.

With the inverse Gaussian PDF, $\mu$ is not a scale parameter and neither does $\nu$. This implies that we cannot define a heavy-tailed version the same way we did with the gamma PDF. That being said, by considering the same transformation $Z := Y / \mu \sim g_{\mu, \nu}$, we will be able to proceed similarly. The difference is that the PDF of the transformed random variable still depends on $\mu$ and is as follows:
\[
  g_{\mu, \nu}(z) = \sqrt{\frac{\nu / \mu}{2\pi z^3}} \exp\left(-\frac{\nu / \mu}{2}\left(z - 1\right)^2 \frac{1}{z}\right), \quad z > 0.
\]
Interestingly, in both the polynomial and the exponential terms, we see a function of $z$ that is multiplied by $\nu / \mu$. Note that this is the PDF of an inverse Gaussian distribution with a mean parameter of $1$ and a shape parameter of $\nu / \mu$.

We thus base our proposal on this PDF, in the same spirit as the gamma variant. We assume that $Z_i := Y_i / \mu_i \sim f_{\mu_i, \nu, c}$, where the proposed PDF  $f_{\mu_i, \nu, c}$ is defined as
\begin{align*}
 f_{\mu_i, \nu, c}(z_i) := \begin{cases}
    f_{\text{mid}}(z_i) := \sqrt{\frac{\nu / \mu_i}{2\pi z_i^3}} \exp\left(-\frac{\nu / \mu_i}{2}\left(z_i - 1\right)^2 \frac{1}{z_i}\right) \quad \text{if} \quad \zl \leq z_i \leq \zr, \cr
    f_{\text{right}}(z_i) := f_{\text{mid}}(\zr) \frac{\zr}{z_i} \left(\frac{\log \zr}{\log z_i}\right)^{\lambdar} \quad \text{if} \quad z_i > \zr, \cr
    f_{\text{left}}(z_i) := f_{\text{mid}}(\zl) \frac{\zl}{z_i} \left(\frac{\log \zl}{\log z_i}\right)^{\lambdal} \quad \text{if} \quad 0 < z_i < \zl,
 \end{cases}
\end{align*}
where $\zr, \lambdar, \zl$ and $\lambdal$ are functions of $\mu_i > 0$, $\nu > 0$ and $c > 0$ given by
\begin{align*}
 & \zr := 1 +\sqrt{\frac{\mu_i}{\nu}} \, c, \quad \zl := \max\left\{0, 1 - \sqrt{\frac{\mu_i}{\nu}} \, c\right\}, \cr
 & \lambdar := 1 + \frac{f_{\text{mid}}(\zr) \log(\zr) \, \zr}{\P(Z_{\nu/\mu_i} > \zr)}, \quad \text{and} \quad \lambdal := 1 - \frac{f_{\text{mid}}(\zl) \log(\zl) \, \zl}{\P(Z_{\nu/\mu_i} < \zl)} = 1 + \frac{f_{\text{mid}}(\zl) \log(1 / \zl) \, \zl}{\P(Z_{\nu/\mu_i} < \zl)},
\end{align*}
with $Z_{\nu/\mu_i}$ being a random variable following an inverse Gaussian distribution whose mean and shape parameters are 1 and $\nu / \mu_i$, respectively. We observe a lot of similarities with the model defined in \autoref{sec:model}. In particular, we can obtain a result analogous to \autoref{prop:asymptotic_lambdas} about the behaviour of $\lambdar$ and $\lambdal$ (see \autoref{prop:asymptotic_lambdas_inv} and \autoref{fig:lambdas_inv}). We use the same notation to simplify.

\begin{Proposition}\label{prop:asymptotic_lambdas_inv}
 Viewed as functions of $\nu / \mu_i$, both $\lambdal$ and $\lambdar$ converge, as $\nu/\mu_i \rightarrow \infty$ for any fixed $c$, towards
\[
 1 + \frac{c \, \e^{-c^2 / 2}}{\sqrt{2 \pi} \, (1 - \Phi(c))}.
\]
\end{Proposition}

  \begin{figure}[ht]
  \centering
\includegraphics[width=0.45\textwidth]{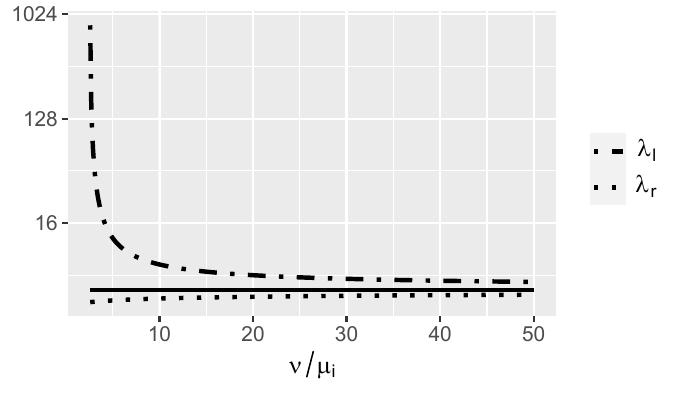}
\vspace{-2mm}
  \caption{\small $\lambdar$ and $ \lambdal$ as a function of $\nu / \mu_i$ when $c = 1.6$; the scale on the $y$-axis is
logarithmic; the black horizontal line represents the asymptotic value as $\nu / \mu_i \rightarrow \infty$.}\label{fig:lambdas_inv}
 \end{figure}
\normalsize

A difference between the model defined here and that in \autoref{sec:model} is that $\zl$ is defined differently; here, it is equal to $1 - \sqrt{\frac{\mu_i}{\nu}} \, c$ as soon as this is positive. This is because the inverse Gaussian PDF always has a left tail, in the sense that $f_{\mu, \nu}(y) \rightarrow 0$ as $y \rightarrow 0$, regardless of the value of $\mu$ and $\nu$.

Comparisons between inverse Gaussian PDFs (with mean and shape parameters of 1 and $\nu /\mu_i$, respectively) and $f_{\mu_i, \nu, c}$ with $c = 1.6$ are shown for different values of $\nu/\mu_i$ in \autoref{fig:inv_prop}. The observations are the same as in \autoref{sec:model}: both PDFs are globally quite similar, but beyond the threshold at which they start to be defined differently, $f_{\mu_i, \nu, c}$ first decreases slightly faster for a short interval (a consequence of the continuity of the function with a constraint of integrating to 1), after which $f_{\mu_i, \nu, c}$ goes above the inverse Gaussian PDF.

  \begin{figure}[ht]
  \centering\small
  $\begin{array}{ccc}
 \vspace{-2mm}\hspace{-2mm}\includegraphics[width=0.34\textwidth]{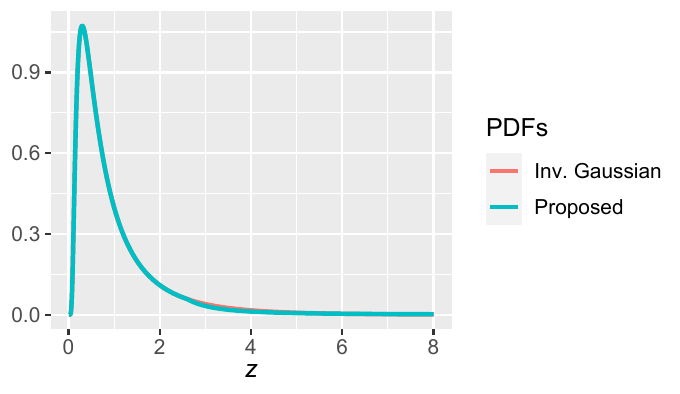} &  \hspace{-5mm} \includegraphics[width=0.34\textwidth]{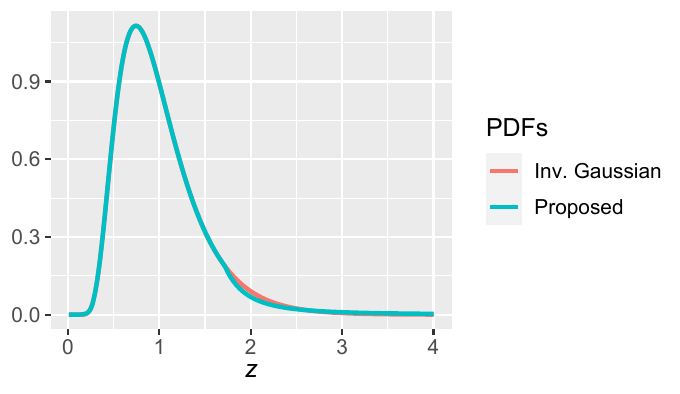} &  \hspace{-5mm} \includegraphics[width=0.34\textwidth]{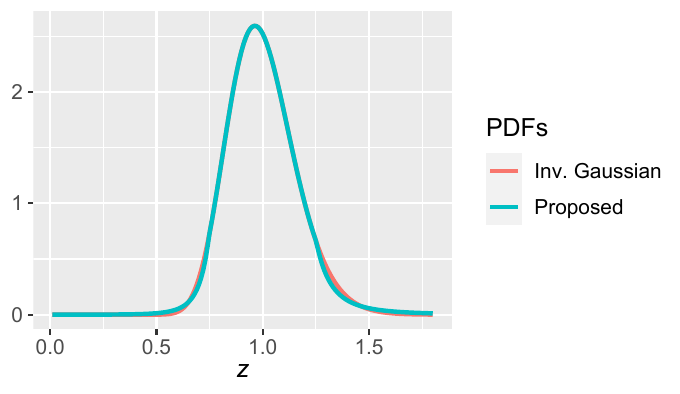} \cr
   \hspace{-5mm} \textbf{(a) $\nu / \mu_i = 1$} & \hspace{-7mm} \textbf{(b) $\nu / \mu_i = 5$} & \hspace{-5mm} \textbf{(c) $\nu / \mu_i = 40$} \cr
   \vspace{-2mm}\hspace{-2mm}\includegraphics[width=0.34\textwidth]{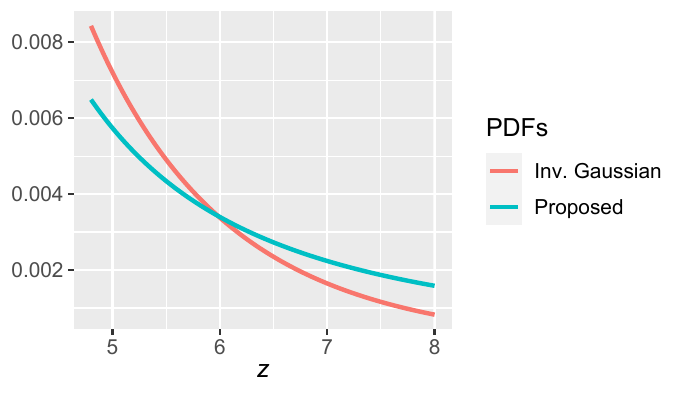} &  \hspace{-5mm} \includegraphics[width=0.34\textwidth]{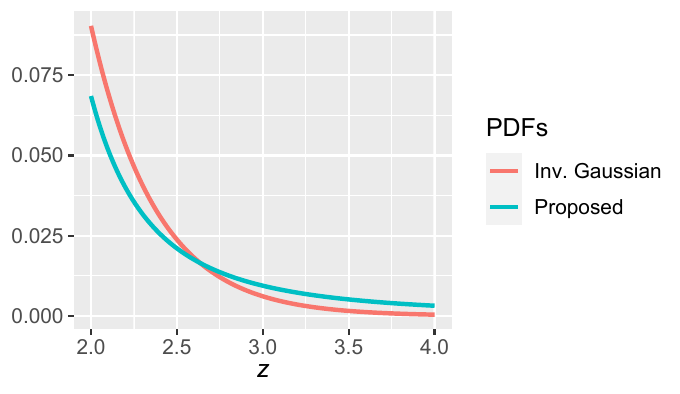} &  \hspace{-5mm} \includegraphics[width=0.34\textwidth]{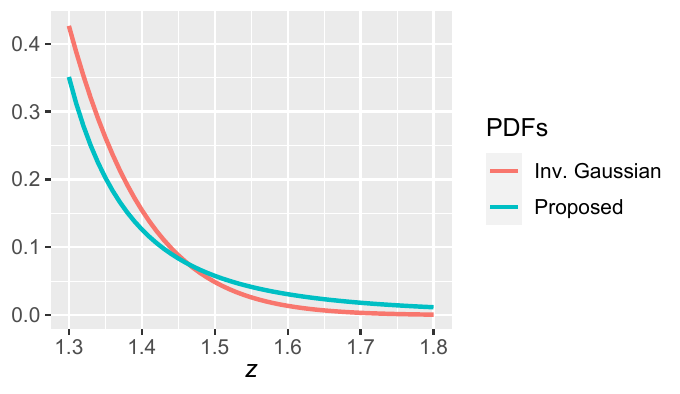} \cr
   \hspace{-5mm} \textbf{(d) $\nu / \mu_i = 1$ (zoom in right tail)} & \hspace{-7mm} \textbf{(e) $\nu / \mu_i = 5$ (zoom in right tail)} & \hspace{-5mm} \textbf{(f) $\nu / \mu_i = 40$ (zoom in right tail)} \cr
  \end{array}$\vspace{-2mm}
  \caption{\small Comparisons between inverse Gaussian PDFs and $f_{\mu_i,\nu, c}$ with $c = 1.6$, for different values of $\nu / \mu_i$.}\label{fig:inv_prop}
 \end{figure}
\normalsize

We finish this section by presenting results of a numerical experiment similar to that conducted for \autoref{fig:estimates_yn}. The simulation setting is the same, with the obvious difference that the original data set is simulated using a inverse Gaussian distribution. Next, we gradually increased the value of $y_n$ from 5 (a non-outlying value) to 15 (a clearly outlying value). For each data set associated with a different value of $y_n$, we estimated the parameters $\nu$, $\beta_1$ and $\beta_2$ of inverse Gaussian GLM and the proposed model based on maximum likelihood method. As expected, we observe a robustness problem with the estimation of inverse Gaussian GLM, a problem which is addressed with the proposed model. In \autoref{sec:properties_inv}, we present a theoretical result which allows to have a characterization of the robustness of the proposed model. Note that a robust estimation of inverse Gaussian GLM is not available through the \textsf{robustbase R} package.

  \begin{figure}[ht]
  \centering\small
  $\begin{array}{ccc}
 \vspace{-2mm}\hspace{-2mm}\includegraphics[width=0.34\textwidth]{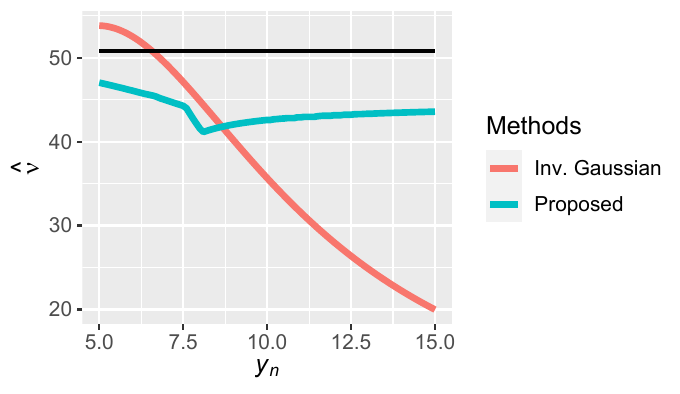} &  \hspace{-5mm} \includegraphics[width=0.34\textwidth]{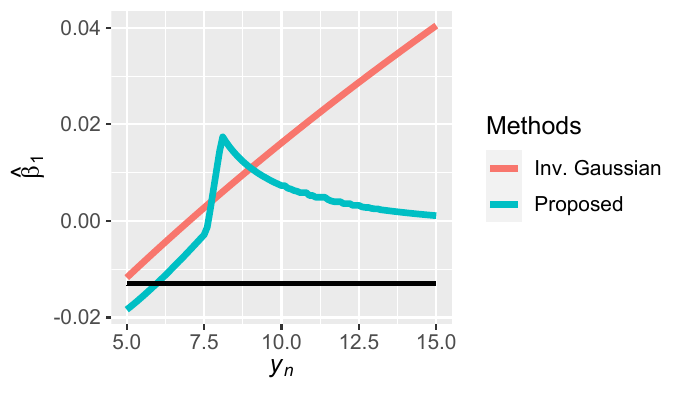} &  \hspace{-5mm} \includegraphics[width=0.34\textwidth]{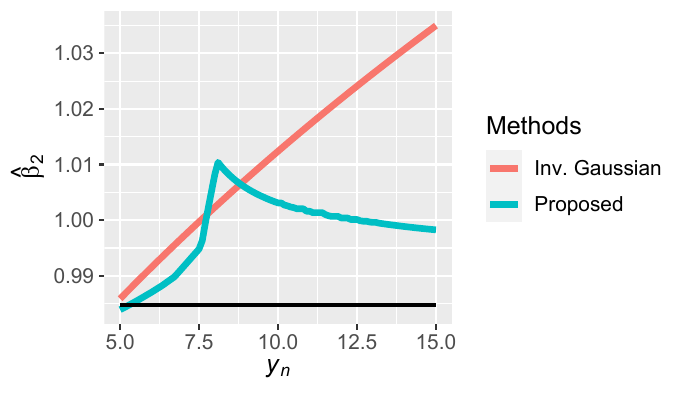} \cr
   \hspace{-5mm} \textbf{(a) $\hat{\nu}$ as a function of $y_n$} & \hspace{-7mm} \textbf{(b) $\hat{\beta}_1$ as a function of $y_n$} & \hspace{-5mm} \textbf{(c) $\hat{\beta}_2$ as a function of $y_n$}
  \end{array}$\vspace{-2mm}
  \caption{\small Estimates of $\nu$, $\beta_1$ and $\beta_2$ as a function of $y_n$ based on the estimation of inverse Gaussian GLM and the proposed method (with $c = 1.6$); the black horizontal lines represent the maximum likelihood estimates of inverse Gaussian GLM based on the data set excluding the outlier.}\label{fig:estimates_yn_inv}
 \end{figure}
\normalsize

\subsection{Theoretical properties}\label{sec:properties_inv}

In this section, we first consider a Bayesian framework and provide conditions under which the posterior distribution is proper (result analogous to \autoref{prop:proper}). We next provide a result about the limiting behaviour of the proposed PDF evaluated at an outlying data point (result analogous to \autoref{prop:limit_PDF}).

Let us start by setting our Bayesian framework. The framework is as in \autoref{sec:properties}. We consider that the explanatory-variable data points $\bx_1, \ldots, \bx_n$ are fixed and known, that is not realizations of random variables, contrarily to $y_1, \ldots, y_n$. The posterior distribution is thus conditional on the latter only. The prior distribution is denoted by $\pi(\, \cdot \,, \cdot \,)$. Let $\pi(\, \cdot \,, \cdot \mid \by)$ be the posterior distribution, where $\by := (y_1, \ldots, y_n)^T$. It is such that
\[
 \pi(\bbeta, \nu \mid \by) = \pi(\bbeta, \nu) \left[\prod_{i = 1}^n \frac{1}{\mu_i} f_{\mu_i, \nu, c}\left(\frac{y_i}{\mu_i}\right) \right] \Bigg/ m(\by), \quad \bbeta \in \R^p, \nu > 0,
\]
where
\[
 m(\by) := \int_{\R^p}\int_0^\infty \pi(\bbeta, \nu) \left[\prod_{i = 1}^n \frac{1}{\mu_i} f_{\mu_i, \nu, c}\left(\frac{y_i}{\mu_i}\right) \right] \d\nu \, \d\bbeta,
\]
if $m(\by) < \infty$, a situation where the posterior distribution is proper and thus well defined.

\begin{Proposition}\label{prop:proper_inv}
 Assume that $\pi(\, \cdot \,, \cdot \,)$ is a proper PDF such that
 \[
  \int_{\R^p}\int_0^\infty \pi(\bbeta, \nu) \, \nu^{n / 2} \prod_{i=1}^n (1 + \e^{-\bx_i^T \bbeta / 2}) \,  \d\nu \, \d\bbeta < \infty.
 \]
 Then, the posterior distribution is proper.
\end{Proposition}

The assumption in \autoref{prop:proper_inv} is satisfied when, for instance: i) $\bbeta$ and $\nu$ are \textit{a priori} independent, ii) the distribution of $\nu$ is a gamma with any shape and scale parameters, iii) the distribution of $\bbeta$ is a normal. When the prior distribution of $\bbeta$ is a normal,
\[
 \int_{\R^p} \prod_{i=1}^n (1 + \e^{-\bx_i^T \bbeta / 2}) \, \pi(\bbeta) \, \d\bbeta
\]
corresponds to a sum of moment generating functions of univariate normal distributions, which is finite.

We now provide a characterization of the robustness of the proposed model by considering the same asymptotic framework as in \autoref{sec:properties}. We provide in \autoref{prop:limit_PDF_inv} a result analogous to \autoref{prop:limit_PDF}.

\begin{Proposition}\label{prop:limit_PDF_inv}
For any $i$ with $l_i = 1$, and $c$, $\nu$ and $\mu_i$ fixed, we have that
\[
 \lim_{\omega \rightarrow \infty} \frac{ f_{\mu_i,\nu,c}(y_i/\mu_i)/\mu_i}{f_{\mu_i, \nu,c}(y_i)} = 1.
 \]
 If $\sqrt{\nu / \mu_i} > c$ (the condition under which $f_{\text{left}}$ exists), the same result holds for any $i$ with $s_i = 1$. Consequently, the likelihood function $\prod_{i = 1}^n f_{\mu_i, \nu,c}(y_i/\mu_i)/\mu_i$, when evaluated at $(\bbeta, \nu)$ such that $\sqrt{\nu / \mu_i} > c$ for all $i$ with $s_i =1$, asymptotically behave like
 \begin{align}\label{eq:limit_likelihood_inv}
  \prod_{i = 1}^n \left[\frac{1}{\mu_i} f_{\mu_i, \nu, c}\left(\frac{y_i}{\mu_i}\right)\right]^{k_i} \left[f_{\mu_i, \nu, c}(y_i)\right]^{s_i + l_i},
 \end{align}
 as $\omega \rightarrow \infty$, implying that, if the MLE belongs to a compact set such that $\sqrt{\nu / \mu_i} > c$ for all $i$ with $s_i =1$, then it corresponds asymptotically to the mode of \eqref{eq:limit_likelihood_inv}, provided that the latter belongs to a compact set such that $\sqrt{\nu / \mu_i} > c$ for all $i$ with $s_i =1$ as well.
\end{Proposition}

The difference with \autoref{prop:limit_PDF} is that limiting PDF depends on $\bbeta$ as well as $\nu$. This implies that we cannot state that the proposed model is partially robust, even though we empirically observe (as in \autoref{fig:estimates_yn_inv}) a certain degree of robustness. The asymptotic behaviour of $f_{\mu_i,\nu,c}(y_i/\mu_i)/\mu_i$ also indicates that we cannot prove a result like \autoref{thm:robustness} for the model proposed here (at least using the proof technique employed for \autoref{thm:robustness}). This is because we cannot write $f_{\mu_i,\nu,c}(y_i)$ as a product of two terms: one depending on $(\bbeta, \nu)$ but not on $y_i$, and the other one depending on $y_i$ but not on $(\bbeta, \nu)$.

To push further the analysis for a characterization of the robustness, we can analyse the behaviour of $f_{\mu_i, \nu,c}(y_i)$ when $\nu / \mu_i$ is large. This allows to understand the asymptotic behaviour of $f_{\mu_i,\nu,c}(y_i/\mu_i)/\mu_i$ as $\omega \rightarrow \infty$, that is the limiting behaviour of the proposed PDF evaluated at an outlying data point, when $\nu / \mu_i$ is large. Let us look at the case $l_i = 1$ (the case $s_i = 1$ can be analysed analogously):
\begin{align*}
 f_{\mu_i, \nu,c}(y_i) &= f_{\text{mid}}(\zr) \frac{\zr}{y_i} \left(\frac{\log \zr}{\log y_i}\right)^{\lambdar} \cr
 &= f_{\text{mid}}(\zr) \, \zr \log \zr \frac{1}{y_i} \frac{(\log \zr)^{\lambdar - 1}}{(\log y_i)^{\lambdar}}.
\end{align*}
With the proof of \autoref{prop:asymptotic_lambdas_inv}, we know that
\[
 f_{\text{mid}}(\zr) \, \zr \log \zr
 \]
 behaves like
  \[
  \frac{c}{\sqrt{2 \pi} \, \e^{c^2 / 2}},
 \]
 when $\nu / \mu_i$ is large and that $\lambdar - 1$ behaves like
 \[
  \frac{c \, \e^{-c^2 / 2}}{\sqrt{2 \pi} \, (1 - \Phi(c))} =: \overline{\lambdar}.
 \]
 Therefore, when $\nu / \mu_i$ is large, $f_{\mu_i, \nu,c}(y_i)$ behaves like
   \[
  \frac{c}{\sqrt{2 \pi} \, \e^{c^2 / 2}} \frac{1}{y_i} \frac{(\log \zr)^{\overline{\lambdar}}}{(\log y_i)^{\overline{\lambdar} + 1}},
 \]
 which depends on the parameters $\bbeta$ and $\nu$ only through
 \[
  \log \zr = \log\left(1 +\sqrt{\frac{\mu_i}{\nu}} \, c\right),
 \]
 which is a (slowly) decreasing function of  $\nu / \mu_i$. The fact the variation is slow explains, in our opinion, empirical results such as those shown in \autoref{fig:estimates_yn_inv}. Note that we can obtain an analogous result about the behaviour of $f_{\nu, c}(y_i)$ analysed in \autoref{sec:properties}; the result is consistent with what is observed in \autoref{fig:f_nu}.

\subsection{Proofs}\label{sec:proofs_inv}

In this section, we first present the proof of \autoref{prop:asymptotic_lambdas_inv}. Next, we present a lemma which is useful for the proof of \autoref{prop:proper_inv}, and next we present the proof of this latter result. The proof of \autoref{prop:limit_PDF_inv} is analogous to that of \autoref{prop:limit_PDF}.

\begin{proof}[Proof of \autoref{prop:asymptotic_lambdas_inv}]
  We prove the result for $\lambdar$. The result for $\lambdal$ is proved analogously. We have that
 \[
 \lambdar= 1+\frac{f_{\text{mid}}(\zr)\log(\zr) \, \zr}{\P[Z_{\phi} > \zr]},
 \]
 where $\phi := \nu/\mu_i$. We omit the dependence in $i$ as it is not important. We study the limit of $\lambdar$ as $\phi \rightarrow \infty$, for fixed $c$. The proof is similar as that of \autoref{prop:asymptotic_lambdas}.

 To prove the result, we will prove that the numerator of the fraction converges towards
 \[
  \frac{c}{\sqrt{2 \pi} \, \e^{c^2 / 2}},
 \]
 and that the denominator converges towards $1 - \Phi(c)$. We start with the analysis of the numerator:
 \begin{align*}
   f_{\text{mid}}(\zr)\log(\zr) \, \zr  &= \sqrt{\frac{\phi}{2\pi \zr^3}} \exp\left(-\frac{\phi}{2}\left(\zr - 1\right)^2 \frac{1}{\zr}\right) \, \log(\zr) \, \zr \cr
   &= \frac{\sqrt{\phi} \sqrt{1 + c / \sqrt{\phi}}}{2\pi} \exp\left(-\frac{c^2}{2} \frac{1}{1 + c / \sqrt{\phi}}\right) \, \log(1 + c / \sqrt{\phi}) \cr
   &= \frac{\sqrt{1 + c / \sqrt{\phi}}}{2\pi} \exp\left(-\frac{c^2}{2} \frac{1}{1 + c / \sqrt{\phi}}\right) \, \log\left[(1 + c / \sqrt{\phi})^{\sqrt{\phi}}\right] \cr
   &\rightarrow \frac{c}{2\pi} \exp\left(-\frac{c^2}{2}\right),
 \end{align*}
 using that $1 + c / \sqrt{\phi} \rightarrow 1$ and that $(1 + c / \sqrt{\phi})^{\sqrt{\phi}} \rightarrow \e^c$.  This concludes the proof that the numerator in $\lambda_{\text{r}}$ converges towards
 \[
  \frac{c}{\sqrt{2 \pi} \, \e^{c^2 / 2}}.
 \]

 We now turn to the proof that the denominator, $\P[Z_{\phi} > \zr]$, converges towards $1 - \Phi(c)$. We have that
 \begin{align*}
  \P[Z_{\phi} > \zr] = \P[X_{\phi, \phi^2} > \phi \zr] &= \P[X_{\phi, \phi^2} > \phi + c \sqrt{\phi}] \cr
  &= \P\left[\frac{X_{\phi, \phi^2} - \phi}{\sqrt{\phi}} > c \right],
 \end{align*}
 where $X_{\phi, \phi^2}$ follows an inverse Gaussian distribution whose mean and shape parameters are $\phi$ and $\phi^2$, respectively.

 We have the following equality in distribution:
 \begin{align*}
  \frac{X_{\phi, \phi^2} - \phi}{\sqrt{\phi}} = \sqrt{\frac{\lfloor \phi \rfloor}{\phi}} \sqrt{\lfloor \phi \rfloor} \left(\frac{1}{\lfloor \phi \rfloor}\sum_{k = 1}^{\lfloor \phi \rfloor} X_k - 1\right) + \frac{\tilde{X} - (\phi - \lfloor \phi \rfloor)}{\sqrt{\phi }},
 \end{align*}
 where $X_1, \ldots, X_{\lfloor \phi \rfloor}$ are independent random variables, each having an inverse Gaussian distribution with a mean and shape of 1, and $\tilde{X}$ follows an inverse Gaussian distribution whose mean and shape parameters are $\nu - \lfloor \nu \rfloor$ and $(\nu - \lfloor \nu \rfloor)^2$, respectively. By the central limit theorem,
 \[
  \sqrt{\lfloor \phi \rfloor} \left(\frac{1}{\lfloor \phi \rfloor}\sum_{k = 1}^{\lfloor \phi \rfloor} X_k - 1\right)
 \]
 converges in distribution towards a standard normal distribution. Also, $\tilde{X} / \sqrt{\phi }$ converges towards 0 with probability 1. Therefore, by Slutsky’s theorem, we have that
 \[
 \frac{X_{\phi, \phi^2} - \phi}{\sqrt{\phi}}
 \]
 converges in distribution towards a standard normal distribution, which concludes the proof.
\end{proof}

\begin{Lemma}\label{lemma1_inv}
Viewed as a function of $\bbeta$ and $\nu$, $(1 / \mu_i) f_{\mu_i, \nu, c}(y_i / \mu_i)$ is bounded by $B \sqrt{\nu} (1 + \e^{-\bx_i^T \bbeta / 2})$, where $B>0$ is a constant (with respect to $\bbeta$ and $\nu$).
\end{Lemma}

\begin{proof}
We have that
\begin{align}\label{eq:lemma1_inv}
 \frac{1}{\mu_i} \, f_{\mu_i, \nu, c}\left(\frac{y_i}{\mu_i}\right) = \begin{cases}
      \sqrt{\frac{\nu}{2\pi y_i^3}} \exp\left(-\frac{\nu (y_i - \mu_i)^2}{2 \mu_i^2 y_i}\right) \quad \text{if} \quad \zl \leq y_i / \mu_i \leq \zr, \cr
      f_{\text{mid}}(\zr) \frac{\zr}{y_i} \left(\frac{\log \zr}{\log y_i / \mu_i}\right)^{\lambdar} \quad \text{if} \quad y_i / \mu_i > \zr, \cr
      f_{\text{mid}}(\zl) \frac{\zl}{y_i} \left(\frac{\log \zl}{\log y_i / \mu_i}\right)^{\lambdal} \quad \text{if} \quad 0 < y_i / \mu_i < \zl.
 \end{cases}
\end{align}
We bound the function in the three cases.

Firstly, we choose $B$ such that
\[
 \sqrt{\frac{\nu}{2\pi y_i^3}} \exp\left(-\frac{\nu (y_i - \mu_i)^2}{2 \mu_i^2 y_i}\right) \leq B \sqrt{\nu} \leq B \sqrt{\nu} (1 + \e^{-\bx_i^T \bbeta / 2}),
\]
using that $\exp(-x)\leq 1$ for all $x \geq 0$.

Secondly, using that $y_i / \mu_i > \zr$,
\begin{align*}
 f_{\text{mid}}(\zr) \frac{\zr}{y_i} \left(\frac{\log \zr}{\log y_i / \mu_i}\right)^{\lambdar} &\leq f_{\text{mid}}(\zr) \frac{\zr}{y_i} \cr
 &=\sqrt{\frac{\nu / \mu_i}{2\pi \zr^3}} \exp\left(-\frac{\nu / \mu_i}{2}\left(\zr - 1\right)^2 \frac{1}{\zr}\right)  \frac{\zr}{y_i} \cr
 &= \frac{1}{\sqrt{2 \pi} y_i} \frac{1}{\sqrt{\zr}}\sqrt{\frac{\nu}{\mu_i}}\exp\left(-\frac{c^2}{2}\frac{1}{\zr}\right) \cr
 &\leq B \sqrt{\frac{\nu}{\mu_i}} \leq B \sqrt{\nu} (1 + \e^{-\bx_i^T \bbeta / 2}),
 \end{align*}
because $B$ can be chosen such that
\[
 \frac{1}{\sqrt{2 \pi} y_i} \leq B,
\]
$\zr \geq 1$, and
\[
 \exp\left(-\frac{c^2}{2}\frac{1}{\zr}\right) \leq 1.
\]

Finally, let us consider that $\zl = 1 - \sqrt{\frac{\mu_i}{\nu}} \, c$, which is the situation where the third case in \eqref{eq:lemma1_inv} can be activated. Using that $(y_i / \mu_i)^{-1} > \zr^{-1}$,
\begin{align*}
 f_{\text{mid}}(\zl) \frac{\zl}{y_i} \left(\frac{\log \zl}{\log y_i / \mu_i}\right)^{\lambdal} &\leq f_{\text{mid}}(\zl) \frac{\zl}{y_i} \cr
 &= \sqrt{\frac{\nu / \mu_i}{2\pi \zl^3}} \exp\left(-\frac{\nu / \mu_i}{2}\left(\zl - 1\right)^2 \frac{1}{\zl}\right) \frac{\zl}{y_i} \cr
 &= \frac{1}{\sqrt{2 \pi} y_i} \sqrt{\frac{\nu}{\mu_i}}\frac{1}{\sqrt{\zl}} \exp\left(-\frac{c^2}{2}\frac{1}{\zl}\right) \cr
 &\leq B \sqrt{\frac{\nu}{\mu_i}} \leq B \sqrt{\nu} (1 + \e^{-\bx_i^T \bbeta / 2}),
\end{align*}
because $B$ can be chosen such that
\[
 \frac{1}{\sqrt{2 \pi} y_i} \frac{1}{\sqrt{\zl}} \exp\left(-\frac{c^2}{2}\frac{1}{\zl}\right) \leq B,
\]
given that
\[
 \frac{1}{\sqrt{\zl}} \exp\left(-\frac{c^2}{2}\frac{1}{\zl}\right)
\]
is bounded for any $0 < \zl < 1$.
\end{proof}

\begin{proof}[Proof of \autoref{prop:proper_inv}]
 Using \autoref{lemma1_inv},
 \begin{align*}
  m(\by) &= \int_{\R^p}\int_0^\infty \pi(\bbeta, \nu) \left[\prod_{i = 1}^n \frac{1}{\mu_i} f_{\mu_i, \nu, c}\left(\frac{y_i}{\mu_i}\right) \right] \d\nu \, \d\bbeta \cr
  &\leq \int_{\R^p}\int_0^\infty \pi(\bbeta, \nu) \left[\prod_{i = 1}^n B \sqrt{\nu} (1 + \e^{-\bx_i^T \bbeta / 2}) \right] \d\nu \, \d\bbeta \cr
  &= B^n \int_{\R^p}\int_0^\infty \pi(\bbeta, \nu) \, \nu^{n / 2} \prod_{i=1}^n (1 + \e^{-\bx_i^T \bbeta / 2}) \,  \d\nu \, \d\bbeta < \infty,
 \end{align*}
 by assumption.
\end{proof}

\end{document}